\newcommand{\rem}[1]{}
\newtheorem{theorem}{Theorem}
\newtheorem{claim}[theorem]{Claim}
\newtheorem{conjecture}[theorem]{Conjecture}
\newtheorem{corollary}[theorem]{Corollary}
\newtheorem{lemma}[theorem]{Lemma}
\newcommand{\claimproof}[2]%
{\noindent{\em Proof of Claim \ref{#1}.}
#2\hspace*{\fill}$\Box$~~~~~\vspace{5mm} } 
\newcommand{\term}[1]{\boldsymbol{#1}}
\newcommand{\poly}{\text{poly}}
\newcommand{\ch}{\text{char}}
\newcommand{\rk}{\mathrm{rk}}
\newcommand{\lrsp}{\text{sp}}
\newcommand{\coef}{\mathrm{Coef}}
\newcommand{\ol}{\overline}
\newcommand{\spsp}{\Sigma\Pi\Sigma\Pi}
\newcommand{\ve}{\varepsilon}
\newcommand{\cC}{\mathcal{C}}
\newcommand{\cI}{\mathcal{I}}
\newcommand{\cJ}{\mathcal{J}}
\newcommand{\cL}{\mathcal{L}}
\newcommand{\cM}{\mathcal{M}}
\newcommand{\cS}{\mathcal{S}}
\newcommand{\cV}{\mathcal{V}}
\newcommand{\cR}{\mathcal{R}}
\newcommand{\F}{\mathbb{F}}
\newcommand{\N}{\mathbb{N}}
\newcommand{\C}{\mathbb{C}}
\newcommand{\Z}{\mathbb{Z}}
\newcommand{\D}{\Delta}
\newcommand{\bt}{\mathbf{t}}
\newcommand{\ba}{\mathbf{a}}
\newcommand{\bal}{\boldsymbol{\alpha}}
\newcommand{\rmS}{\mathrm{S}}
\newcommand{\rms}{\mathrm{s}}
\newcommand{\rmbS}{\mathrm{bS}}
\newcommand{\rmbs}{\mathrm{bs}}
\newcommand{\rmH}{\mathrm{H}}
\newcommand{\frs}{\mathfrak{s}}
\newcommand{\VP}{\mathsf{VP}}
\newcommand{\VNP}{\mathsf{VNP}}
\renewcommand{\P}{\mathsf{P}}
\newcommand{\NP}{\mathsf{NP}}
\renewcommand{\k}{\kappa}
\def\({\left(}
\def\){\right)}
\def\<{\langle}
\def\>{\rangle}
\def\ce#1{\left\lceil#1\right\rceil}
\def\le{\leqslant}
\def\ge{\geqslant}
\begin{document}

\title[Set-depth-$\D$ formulas]{Quasi-polynomial Hitting-set for Set-depth-$\D$ Formulas}

\author[Agrawal]{Manindra Agrawal}
\address{Indian Institute of Technology, Kanpur, India.}
\email{manindra@iitk.ac.in}

\author[Saha]{Chandan Saha}
\address{Max Planck Institut f\"ur Informatik, Saarbr\"{u}cken, Germany.}
\email{csaha@mpi-inf.mpg.de}

\author[Saxena]{Nitin Saxena}
\address{Hausdorff Center for Mathematics, Bonn, Germany.}
\email{ns@hcm.uni-bonn.de}

\keywords{polynomial identity testing, hitting-set, set-multilinear formula, Hadamard algebra, shift, low-support rank concentration} 



\begin{abstract}
We call a depth-$4$ formula $C$ {\em set-depth-$4$} if there exists a (unknown) partition $X_1\sqcup\cdots\sqcup X_d$ of the variable indices $[n]$ that the top product layer respects, i.e.~$C(\term{x})=\sum_{i=1}^k \prod_{j=1}^{d}$ $f_{i,j}(\term{x}_{X_j})$, where $f_{i,j}$ is a {\em sparse} polynomial in $\F[\term{x}_{X_j}]$. Extending this definition to any depth - we call a depth-$\D$ formula $C$ (consisting of alternating layers of $\Sigma$ and $\Pi$ gates, with a $\Sigma$-gate on top) a \emph{set-depth-$\D$} formula if every $\Pi$-layer in $C$ respects a (unknown) partition on the variables; if $\D$ is even then the product gates of the bottom-most $\Pi$-layer are allowed to compute arbitrary monomials. 

In this work, we give a hitting-set generator for set-depth-$\D$ formulas (over \emph{any} field) with running time polynomial in $\exp((\D^2\log s)^{ \Delta - 1})$, where $s$ is the size bound on the input set-depth-$\D$ formula. In other words, we give a {\em quasi}-polynomial time \emph{blackbox} polynomial identity test for such constant-depth formulas. Previously, the very special case of $\D=3$ (also known as {\em set-multilinear} depth-$3$ circuits) had no known sub-exponential time hitting-set generator. This was declared as an open problem by Shpilka \& Yehudayoff (FnT-TCS 2010); the model being first studied by Nisan \& Wigderson (FOCS 1995). Our work settles this question, not only for depth-$3$ but, up to depth $\epsilon\log s / \log \log s$, for a fixed constant $\epsilon < 1$. 

The technique is to investigate depth-$\D$ formulas via depth-$(\D-1)$ formulas over a {\em Hadamard algebra}, after applying a `shift' on the variables. We propose a new algebraic conjecture about the \emph{low-support rank-concentration} in the latter formulas, and manage to prove it in the case of set-depth-$\D$ formulas. 
\end{abstract}

\maketitle

\section{Introduction} \label{sec:intro}

Polynomial identity testing (PIT) - the algorithmic question of examining if a given arithmetic circuit computes an identically zero polynomial - has received some attention in the recent times, primarily due to its close connection to circuit lower bounds. It is now known that a complete (blackbox) derandomization of PIT for depth-$4$ formulas, via a particular kind of pseudorandom generators, implies $\VP \neq \VNP$ (an algebraic analogue of the much coveted result: $\P \neq \NP$). It is also known that $\VP \neq \VNP$, which amounts to proving exponential circuit lower bounds, must necessarily be shown before proving $\P \neq \NP$ (\cite{V79, SV85}). Blackbox identity testing (equivalently, the problem of designing hitting-set generators), being a promising approach to proving lower bounds, naturally calls for a closer examination. Towards this, some progress has been made in the form of polynomial time hitting set generators for the following models:
\begin{itemize}
 \item depth-$2$ formulas \cite{KS01},
\item depth-$3$ formulas with bounded top fanin \cite{ASSS12, SS11-2},
\item depth-$4$ (bounded depth) constant-occur formulas \cite{ASSS12},
\end{itemize}
and a quasi-polynomial time hitting-set generator for 
\begin{itemize}
 \item multilinear constant-read formulas \cite{AvMV11},
\end{itemize}
among some others (refer to the surveys \cite{SY10, S09, AS09}). The hope is, by studying these special but interesting models we might develop a deeper understanding of the nature of hitting sets and thereby get a clue as to what techniques can be lifted to solve PIT in general (i.e. for depth-$4$ formulas). One such potentially effective technique is the study of \emph{partial derivatives} of formulas.
 
Despite the apparent difference between the approaches of \cite{ASSS12} and \cite{AvMV11}, at a finer level they share a common ingredient - the use of partial derivatives. The partial derivative based method was introduced in the seminal paper by Nisan and Wigderson \cite{NW97} for proving circuit lower bounds, and since then it has been successfully applied (with more sophistications) to prove various interesting results on lower bounds, identity testing and reconstruction of circuits \cite{ASSS12, AvMV11, GKQ12, GKKS12} (refer to the surveys \cite{SY10, CKW11} for much more).
\vspace{0.06in}

\noindent \textbf{Partial derivatives \& shifting - the intuition:} In a way, partial derivatives \emph{shift} the variables by some amount - for e.g., if $f(x_1, x_2, \ldots, x_n)$ is a multilinear polynomial then its partial derivative with respect to $x_1$ is $f(x_1 + 1, x_2, \ldots, x_n) - f(x_1, \ldots, x_n)$. Out of curiosity, one might ask what happens if we shift the polynomial by arbitrary field constants? If we shift a monomial $f(\mathbf{x}) = x_1x_2 \ldots x_n$ by $\mathbf{c} = (c_1, \ldots, c_n) \in \F^n$, $c_i \neq 0$, we get the polynomial $f(\mathbf{x} + \mathbf{c}) = (x_1 + c_1)(x_2 + c_2)\ldots(x_n + c_n)$. Something interesting has happened here: The polynomial $f(\mathbf{x} + \mathbf{c})$ has many \emph{low-support} monomials. By a low-support monomial, we mean that the number of variables involved in the monomial is less than a predefined small quantity, say $\ell$.

Is it possible that shifting has a similar effect on a more general polynomial $f(\mathbf{x})$, i.e. $f(\mathbf{x} + \mathbf{c})$ has low-support monomials with nonzero coefficients, if $f \neq 0$? Surely, this is true if $\mathbf{c}$ is chosen randomly from $\F^n$ (by Schwartz-Zippel \cite{Sch80,Z79}). But, $f$ is not just any arbitrary polynomial, it is a polynomial computed by a formula (say, depth-$3$ or depth-$4$ formula). This makes it an interesting proposition to investigate the following derandomization question: Let $f \neq 0$ be a polynomial computed by a formula. Is it possible to efficiently compute a \emph{small} collection of points $\mathcal{T} \subset \F^n$, such that there exists a $\mathbf{c} \in \mathcal{T}$ for which $f(\mathbf{x} + \mathbf{c})$ has a low-support monomial with nonzero coefficient? 

If the answer to the above question is yes, then it is fairly straightforward to do an efficient blackbox identity test on $f$: For the right choice of $\mathbf{c} \in \mathcal{T}$, $g(\mathbf{x}) = f(\mathbf{x} + \mathbf{c}) \neq 0$ has a low-support monomial. To witness that $g(\mathbf{x}) \neq 0$, it suffices to keep a set of $\ell$ variables intact and set the remaining $n-\ell$ variables to zero in $g$; running over all possible choices of $\ell$ variables whom we choose to keep intact, we can witness the fact that $g \neq 0$. Since $\ell$ is presumably small, $g(\mathbf{x})$ restricted to $\ell$ variables is a sparse polynomial which can be efficiently tested for nonzeroness in a blackbox fashion \cite{KS01}. 

Indeed, we prove that the above intuition is true for the class of set-depth-$\Delta$ formulas (precisely defined in Section \ref{sec:results}) - a highly interesting class capturing many other previously studied models (see Section \ref{sec:results}), including \emph{set-multilinear} depth-$3$ circuits. 
\vspace{0.06in}

\noindent \textbf{Set-multilinear depth-$3$ circuits:} A circuit $C = \sum_{i=1}^{k}{\prod_{j=1}^{d}{f_{i, j}(\term{x}_{X_j})}}$ is called a set-multilinear depth-$3$ circuit if $X_1 \sqcup \ldots \sqcup X_d$ is a partition of the variable indices $[n]$ and $f_{i,j}(\term{x}_{X_j})$ is a linear polynomial in the variables $\term{x}_{X_j}$ i.e. the set of variables corresponding to the partition $X_j$. The set-multilinear depth-$3$ model, first defined by \cite{NW97}, kicked off a flurry of activity. Though innocent-looking, it has led researchers to various arithmetic inventions -- the {\em partial derivative} method for circuit lower bounds \cite{NW97}, noncommutative whitebox PIT \cite{RS05}, the relationship between {\em tensor-rank} and super-polynomial circuit lower bounds \cite{Raz10}, hitting-set for tensors, low-rank recovery of matrices, rank-metric codes \cite{FS12}, and reconstruction (or learnability) of circuits \cite{KS06}. Although, an exponential lower bound for set-multilinear depth-$3$ circuits is known \cite{NW97, RY09}, the closely associated problem of efficient blackbox identity testing on this model remained an open question, until this work.
\vspace{0.06in}

\noindent \textbf{Our contribution: Hitting set for set-depth-$\Delta$ formulas -} A whitebox deterministic polynomial time identity test for set-depth-$\D$ follows from the noncommutative PIT results \cite{RS05}. We are interested in {\em blackbox} PIT and, naturally, we cannot see inside $C$ and the underlying partitions of $[n]$. The only information we have is the circuit-size bound, $s$. To our knowledge, there was no sub-exponential time hitting-set known for the set-depth-$\Delta$ model. Our work improves this situation to quasi-polynomial for \emph{any} underlying field (refer Theorem \ref{thm:set-height-H}). We remark that even the very special case of set-multilinear depth-$3$ circuits had no sub-exponential hitting-set known (see \cite[Problem 27]{SY10}); closest being the recent result of \cite{FS12} where they give a quasi-polynomial hitting-set for {\em tensors}, i.e.~the {\em knowledge of the sets} $X_1,\ldots,X_d$ is required.

Furthermore, set-depth-$4$ covers other well-studied models - diagonal circuits \cite{S08} \& semi-diagonal circuits \cite{SSS12} - that had whitebox identity tests but no blackbox sub-exponential PIT were known. For these (and set-multilinear depth-$3$), our hitting-set has time complexity $s^{O(\log s)}$, although, for general set-depth-$4$ it requires $s^{O(\log^2 s)}$.

Depth-$4$ formulas being the ultimate frontier for PIT (and lower bounds) \cite{AV08}, one might wonder about the utility of our result on hitting-set for set-depth-$\Delta$ formulas beyond $\Delta = 4$. It turns out that there is an interesting connection: We show that a quasi-polynomial hitting set generator for set-depth-$6$ formulas implies a quasi-polynomial hitting set generator for depth-$3$ formulas of the form $C = \sum_{i=1}^{k}{\prod_{j=1}^{d}{{f_{i, j}(\term{x}_{X_j})}^{e_{i, j}}}}$, where $X_1 \sqcup \ldots \sqcup X_d$ defines a partition on $[n]$ and $f_{i, j}$ are linear polynomials. Since arbitrary powers $e_{i ,j} \geq 0$ are allowed, the above depth-$3$ model is stronger than set-multilinear depth-$3$ formulas (as there is no restriction of multilinearity). This appears to be temptingly close to the general depth-$3$ model modulo the partition on variables, and provides us with a good motivation to understand the strength of our approach against depth-$3$ formulas.
\vspace{0.06in}

\noindent {\bf Technical novelty of our approach -} As mentioned before, many works have looked at the partial derivatives of a formula and related matrices, e.g. ~the Jacobian \cite{ASSS12, BMS11}. From a geometric viewpoint, the study via derivatives shifts the variables by an {\em infinitesimal} amount and hopes to discover interesting structure. We take a more radical approach; we shift the circuit by {\em formal} variables and look at how the circuit changes by considering a {\em transfer} matrix $T$. The transfer matrix originates from the study of a formula with field coefficients via a simpler one having {\em Hadamard algebra} coefficients. This makes the transfer process more amenable to an attack using matrices and linear algebra; proving properties that are vaguely reminiscent of the case of top-fanin $k=1$.

The main technicality lies in proving the invertibility of a transfer matrix, which is an exponential-sized matrix. Some of the arguments here are combinatorial in nature involving greedy and binary-search paradigms. 



Although, Hadamard algebra is implicit in the whitebox identity test of \cite{RS05} and the study of PIT over commutative algebras of \cite{SSS09} (Theorem $6$ in \cite{SSS09}), the novelty of our approach lies in understanding the effect of shift by viewing it through the lens of Hadamard algebra, and thereby observing the remarkable phenomenon of \emph{low-support rank concentration}, which in turn implies that a low-support monomial survives after shifting.

We state our results more precisely now.
\subsection{Our results} \label{sec:results}

\textit{Set-depth \& set-height formulas} - Let $C$ be an arithmetic formula over a field $\F$ in $n$ variables $\term{x}$, consisting of alternating layers of addition ($\Sigma$) and multiplication ($\Pi$) gates, with a $\Sigma$-gate on top. The number of layers of $\Pi$-gates in $C$ is called the \emph{product-depth} (or simply \emph{height}) of $C$ and will be denoted by $H$. Naturally, the depth of $C$ - which is the number of layers of gates in $C$ - is either $\D = 2H$ or $2H+1$. Counting the $\Pi$-layers from the top, we label these layers by numbers in the range $[H]$ and will be referring to a layer as the $h$-th $\Pi$-layer in $C$, for $h \in [H]$. 

We say that $C$ is a \emph{set-depth-$\D$} formula if for every $h$-th $\Pi$-layer in $C$, there exists a partition $X_{h,1}\sqcup\cdots\sqcup X_{h,d_h}$ of variable indices $[n]$ that the product gates of the $h$-th $\Pi$-layer respect. In other words, for every $h \in [H]$ the $i$-th product gate in the $h$-th $\Pi$-layer computes a polynomial of the form $\prod_{j=1}^{d_h}{f_{i, j}(\term{x}_{X_{h,j}})}$, where each $f_{i, j}(\term{x}_{X_{h,j}})$ is a set-depth-$(\D-2h)$ formula of height $H-h$ on the variable set $\term{x}_{X_{h,j}}$. If $\D=2H$ then the product gates of the $H$-th $\Pi$-layer are allowed to compute arbitrary monomials, i.e.~here the $H$-th $\Pi$-layer \emph{need not} respect any partition of the variables. 

We will also refer to $C$ as a \emph{set-height-$H$} formula. \emph{Size} of $C$, denoted by $s$ or $|C|$, is the number of gates (including the input gates) in $C$. 

\begin{theorem}[Main] \label{thm:set-height-H}
There is a hitting-set generator for set-height-$H$ formulas, of size $s$, that runs in time polynomial in $\exp((2H^2\log s)^{H+1})$, over any field $\F$. 
\end{theorem}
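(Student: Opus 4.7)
The plan is to combine a quasi-polynomial family of variable shifts with an inductive analysis of the formula viewed over a Hadamard algebra. Given a nonzero set-height-$H$ formula $C$ of size $s$, I would first construct a small family $\cT \subset \F^n$ of shift vectors, for instance a Kronecker-style map $\term{c}(y) = (y^{e_1}, \ldots, y^{e_n})$ specialised to polynomially many values of $y$, with exponents $e_i$ chosen greedily so that distinct low-support monomials in $\term{x}$ evaluate to distinct powers of $y$. The key goal is a \emph{low-support rank-concentration} statement: for a parameter $\ell = O((H^2\log s)^{H})$ and some $\term{c} \in \cT$, the shifted polynomial $C(\term{x}+\term{c})$ has a nonzero coefficient on some monomial of $\term{x}$-support at most $\ell$. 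Once this is established, the final hitting set is assembled by, for each $\term{c}\in\cT$ and each $S\subseteq[n]$ with $|S|\le\ell$, zeroing all variables outside $S$ in $C(\term{x}+\term{c})$ and feeding the resulting $\ell$-variate polynomial to the sparse-PIT hitting-set of \cite{KS01}. Tracking the parameters yields the claimed $\exp((2H^2\log s)^{H+1})$ running time.

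The rank-concentration claim is the heart of the proof, and it is here that the Hadamard algebra view would enter. Writing the top of $C$ as $\sum_{i=1}^{k}\prod_{j=1}^{d_1} f_{i,j}$, with each $f_{i,j}$ a set-height-$(H-1)$ formula on $\term{x}_{X_{1,j}}$, I would package the top $\Sigma$-layer into a single polynomial $\widetilde{C}$ taking values in the Hadamard algebra $\mathcal{A} = \F^k$ with coordinate-wise multiplication: concretely $\widetilde{C} = \prod_j g_j$ where $g_j = (f_{1,j},\ldots,f_{k,j}) \in \mathcal{A}[\term{x}_{X_{1,j}}]$, and $C$ is the sum of coordinates of $\widetilde{C}$. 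I would then proceed by induction on $H$, the hypothesis at height $H-1$ being that, after a suitable shift, every $\mathcal{A}$-valued set-height-$(H-1)$ formula is rank-concentrated in the sense that the span of the coefficient vectors of its low-support monomials already contains every coefficient vector. The key structural reason the set-depth hypothesis is needed is that the top $\Pi$-layer respects a variable-disjoint partition, so this span property is preserved when one multiplies the $g_j$; without disjointness, monomials of the product would interact in much more complicated ways.

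The technical core is the passage from the unshifted to the shifted polynomial, which I would encode via a \emph{transfer matrix} $T$ whose rows and columns are indexed by monomials of $\term{x}$-support at most $\ell$ and whose entries are binomial-style polynomials in the shift $\term{c}$ (the coefficients of $(\term{x}+\term{c})^\alpha$ expanded in $\term{x}$). The low-support coefficient vectors of $C(\term{x}+\term{c})$ equal $T$ applied to those of $C(\term{x})$, modulo higher-support corrections that would be absorbed by the inductive hypothesis. Rank-concentration of the shifted polynomial then reduces to showing that for some $\term{c}\in\cT$ the matrix $T$ has full rank on the relevant subspace. Since $T$ has dimension $n^{\Theta(\ell)}$, a generic Schwartz--Zippel-style invertibility argument would be far too wasteful. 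I expect the main obstacle to be a careful combinatorial proof, using greedy elimination and binary-search-style arguments on the monomial supports, that the Kronecker family $\cT$ alone forces $T$ to be invertible using only quasi-polynomially many shifts; this is where the $(H^2\log s)^{H+1}$ exponent materialises.

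The base case $H = 1$ would be handled directly: each $g_j$ is a sparse polynomial over $\mathcal{A}$, and low-support rank-concentration for a sparse polynomial after a Kronecker shift follows from the invertibility of a small Vandermonde-type matrix. Iterating the inductive step $H$ times with this base case then yields the stated running time and establishes Theorem~\ref{thm:set-height-H}.
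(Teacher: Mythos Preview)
Your outline tracks the paper's architecture closely: Hadamard-algebra packaging of the top $\Sigma$-layer, induction on height, a transfer matrix for the shift, a greedy/binary-search invertibility argument, a sparse-polynomial base case, and final assembly via sparse PIT on the $\ell$-variate restrictions. The parameter $\ell$ and the time bound are in the right ballpark.

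The genuine gap is in the inductive step. You write that the span property ``is preserved when one multiplies the $g_j$'' and that higher-support corrections ``would be absorbed by the inductive hypothesis''; this elides the main difficulty. The inductive hypothesis gives $\ell_{h+1}$-concentration of each \emph{factor} $\widehat f_j$ individually; it says nothing directly about the product. The transfer-matrix theorem that handles a product of disjoint-variable factors (the paper's low-\emph{block}-support concentration) requires each factor to have bounded monomial-weight, so that its cone is small enough for the Kronecker map to separate all relevant monomials. After the inductive shift, $\widehat f_j$ does \emph{not} have small monomial-weight. The paper's fix is a truncation: replace each $\widehat f_j$ by $\widehat g_j$, the sum of its monomials of support $<\ell_{h+1}$; these are sparse enough to run the block-support argument on $\prod_j \widehat g_j$. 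One must then argue separately---via the observation that the low-support basis coefficients of $\widehat f_j$ and $\widehat g_j$ coincide, and that the resulting change-of-basis matrices are $\equiv I\pmod{t_h}$, hence invertible---that the block-concentration transfers back to $\prod_j \widehat f_j$. Without this truncate-then-transfer manoeuvre the induction does not close, and nothing in your sketch supplies it.

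A secondary correction: your transfer matrix runs in the wrong direction and misattributes the role of the shift family. The paper expresses the \emph{unshifted} coefficient matrix $Z$ in terms of the shifted one $Z'$ modulo the low-block-support space $\cV_\ell(D')$, right-multiplies by a nullspace matrix $A$ of $Z$ to annihilate the left side, and then uses the combinatorial fact that $T'$ has an invertible minor avoiding any $\kappa$ prescribed columns (this is where the greedy/binary-search enters) to deduce $Z'\equiv 0\pmod{\cV_\ell(D')}$. Crucially, this invertibility is a property of $T'$ over $\F$, independent of the shift; a \emph{single} Kronecker shift in a fresh variable $t_h$ works, and no search over a family $\cT$ for ``the right shift'' is needed.
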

\noindent {\em Remarks.} 
1. For blackbox PIT of set-multilinear depth-$3$ formulas this gives a {\em quasi-polynomial} time complexity of $s^{O(\log s)}$ - this is the first sub-exponential time 
algorithm.\\
2. For constants $H>1$ the formula may {\em not} be multilinear, though the hitting-set remains quasi-polynomial. The time complexity remains {\em sub-exponential} up to $H = \epsilon\log s / \log\log s$, for a fixed constant $\epsilon < 1$ .
\vspace{0.07in}

An interesting model that is not set-depth-$\D$ but still Theorem \ref{thm:set-height-H} could be applied is - semi-diagonal formula. The reason being the {\em duality} transformation \cite{S08,SSS12} that helps us view it as a  set-depth-$4$ formula. We recall - a depth-$4$ ($\spsp$) formula $C$ is {\em semi-diagonal} if, for all $i$, its $i$-th (top) product-gate computes a polynomial of the form $m_i\cdot\prod_{j=1}^{b}{f_{i,j}^{e_{i,j}}}$, where $m_i$ is a monomial, $f_{i,j}$ is a sum of univariate polynomials, and $b$ is a constant. We give two applications, with similar proofs but, for different looking formulas. 

\begin{corollary}[Semi-diagonal depth-$4$] \label{cor:semi-diagonal}
There is a hitting-set generator for semi-diagonal depth-$4$ formulas, of size $s$, that runs in time $s^{O(\log s)}$ (assuming $\ch(\F)$ zero or large).
\end{corollary}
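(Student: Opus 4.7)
The plan is to reduce semi-diagonal depth-$4$ formulas to set-depth-$4$ formulas via the well-known duality trick of \cite{S08} (as refined in \cite{SSS12}), after which Theorem \ref{thm:set-height-H} applies. Writing $C = \sum_{i=1}^{k} m_i \cdot \prod_{j=1}^{b} f_{i,j}^{e_{i,j}}$ with each $f_{i,j}(\term{x}) = \sum_{k=1}^{n} g_{i,j,k}(x_k)$ a sum of univariates, I would apply Saxena's duality (which uses Lagrange interpolation at distinct field points, and so requires the assumption that $\ch(\F)$ is zero or large) to rewrite every power
\[
f_{i,j}^{e_{i,j}} \;=\; \sum_{t=1}^{M_{i,j}} \prod_{k=1}^{n} u_{i,j,t,k}(x_k),
\]
where $M_{i,j} = \poly(s)$ and each $u_{i,j,t,k}$ is a univariate in $x_k$ of degree $\poly(s)$.

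I would then expand the outer product $\prod_{j=1}^{b}$ over the constantly many factors $j \in [b]$, group the $b$ univariates in each variable $x_k$ into a single univariate, and absorb the leading monomial $m_i = \prod_{k} x_k^{a_{i,k}}$ variable-wise into the resulting factors. Since $b = O(1)$, the total number of summands is $\poly(s)$, producing an equivalent representation
\[
C \;\equiv\; C' \;:=\; \sum_{\alpha=1}^{N} \prod_{k=1}^{n} \phi_{\alpha,k}(x_k), \qquad N = \poly(s),
\]
where each $\phi_{\alpha,k}(x_k)$ is a univariate polynomial in $x_k$.

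Finally, $C'$ sits naturally inside the set-depth-$4$ class: its top $\Pi$-layer respects the singleton partition $X_k = \{k\}$, and its inner factors are sparse univariates. Invoking Theorem \ref{thm:set-height-H} on $C'$ produces the desired hitting-set; I expect the singleton-partition, univariate-factor structure to collapse the effective set-height down to $H = 1$, so that the bound $\exp((2\log s)^2) = s^{O(\log s)}$ is achieved rather than the $s^{O(\log^2 s)}$ of the general set-depth-$4$ case. The main obstacle is precisely this last step: justifying the $H = 1$ running time by inspecting the internals of the proof of Theorem \ref{thm:set-height-H} on this restricted input, since a naive black-box appeal would give only $s^{O(\log^2 s)}$. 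The duality step and the size bookkeeping are routine given that $b$ and the depth are constant.
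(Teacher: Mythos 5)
Your reduction step is exactly the paper's: apply the duality trick of \cite{S08,SSS12} (this is where the characteristic assumption enters) to rewrite $C$ as $C' = \sum_{i=1}^{k'}\prod_{j=1}^{n} g_{i,j}(x_j)$ with $k' = s^{O(b)} = \poly(s)$ and each $g_{i,j}$ a univariate. The problem is the step you yourself flag as ``the main obstacle'': you never actually justify why this instance can be handled in time $s^{O(\log s)}$ rather than the $s^{O(\log^2 s)}$ that a black-box appeal to Theorem \ref{thm:set-height-H} gives (the dual form is a set-height-$2$ formula, not height $1$, since each $g_{i,j}$ is itself a $\Sigma\Pi$ of monomials). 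As written, the proposal establishes only the weaker $s^{O(\log^2 s)}$ bound, which is the statement of Corollary \ref{cor:power-set-multilinear}, not of Corollary \ref{cor:semi-diagonal}.

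The paper closes this gap by \emph{not} going through Theorem \ref{thm:set-height-H} at all. Writing $C' = c^T\cdot D$ with $D = G_1(x_1)\star\cdots\star G_n(x_n)$ over $\rmH_{k'}(\F)$, each factor $G_j$ is a univariate, hence has monomial-weight $\mu(G_j)\le 1$. This means the height recursion of Theorem \ref{thm:low-support} is unnecessary: one invokes Theorem \ref{thm:low-block-support} directly on products of $\ell = 2\lceil\log_2 k'\rceil+1$ of the $G_j$'s, and the ``key argument'' of Lemma \ref{lem:sparse-poly} lifts this to the full $n$-fold product. Because every block here is a single variable, block-support concentration coincides with support concentration, so one shift in time $s^{O(\log k')} = s^{O(\log s)}$ yields $O(\log s)$-concentration, and sparse PIT finishes. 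If you want to salvage your route through Theorem \ref{thm:set-height-H}, the analogous observation is that the bottom layer consists of univariates with $\mu\le 1$, so the base case of Lemma \ref{lem:sparse-poly} can take $\ell'=2$ and the single remaining inductive level costs only $O(\log s)$-support concentration --- but that is precisely the internal inspection you deferred, and it needs to be carried out for the proof to be complete.
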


\begin{corollary}[Set-depth-$3$ with powers] \label{cor:power-set-multilinear}
Consider a depth-$3$ formula $C = \sum_{i=1}^k \prod_{j=1}^{d}$ $f_{i,j}(\term{x}_{X_j})^{e_{i,j}}$, where $f_{i,j}$ is a linear polynomial in $\F[\term{x}_{X_j}]$, $e_{i,j} \in \N$, and $X_1\sqcup\cdots\sqcup X_d$ partitions $[n]$. There is a hitting-set generator for such formulas, of size $s$, that runs in time $s^{O(\log^2 s)}$ (assuming $\ch(\F)$ zero or large). The result continues to hold even if $f_{i,j}$ is a sum of univariates.
\end{corollary}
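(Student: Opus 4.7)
The plan is to reduce $C$ to a set-depth formula and then invoke Theorem~\ref{thm:set-height-H}. The reduction rests on the duality transformation of Saxena~\cite{S08}: over a field of characteristic zero (or sufficiently large), any power of a sum of univariates admits a small representation as a sum of products of univariates. Concretely, for $f = \sum_{\ell \in X} g_{\ell}(x_\ell)$ and $e \in \N$ there is an identity
\[
f^{e} \;=\; \sum_{t=1}^{T} \prod_{\ell \in X} h_{t,\ell}(x_\ell), \qquad T = \poly(|X|,\,e),
\]
with each $h_{t,\ell}$ a univariate. This ingredient is what forces the hypothesis on $\ch(\F)$, and it handles the more general case where $f_{i,j}$ is a sum of univariates in the same stroke as the linear case.

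Applying this to each factor $f_{i,j}^{e_{i,j}}$ of $C$ rewrites
\[
C \;=\; \sum_{i=1}^{k} \prod_{j=1}^{d} \biggl( \sum_{t=1}^{T_{i,j}} \prod_{\ell \in X_j} h_{i,j,t,\ell}(x_\ell) \biggr),
\]
an expression of size polynomial in the original $s$. I would then check that this naturally fits the definition of a set-depth-$6$ (height-$3$) formula with the following partitions of $[n]$: (i) the top $\Pi$-layer respects the given partition $X_1 \sqcup \cdots \sqcup X_d$ by construction; (ii) the middle $\Pi$-layer respects the singleton partition $\{\{x_\ell\} : \ell \in [n]\}$, since every gate in this layer is a product of univariates in pairwise distinct variables (it is immaterial that different middle gates involve different subsets $X_j$ of variables, only that each individual gate's factors respect the declared global partition); (iii) the bottom-most $\Pi$-layer, which in the set-depth-$\D$ definition for even $\D$ may compute arbitrary monomials, absorbs the leaves $x_\ell^{k}$ produced by expanding each univariate $h_{i,j,t,\ell}$. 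Applying Theorem~\ref{thm:set-height-H} with $H=3$ to this transformed formula then yields the claimed quasi-polynomial hitting-set generator for $C$, and hence for the original depth-$3$ formula with powers.

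The only real obstacle is the duality transformation itself; once it is in place, the partition bookkeeping and the size estimate are routine. A mild subtlety worth spelling out is the middle-layer verification, that each such product gate genuinely respects a \emph{single} partition of all of $[n]$ (rather than only of its own variable subset), which is automatic for products of univariates in distinct variables since any such product respects the finest partition. The extension of the statement from $f_{i,j}$ linear to $f_{i,j}$ a sum of univariates requires no additional work, since Saxena's duality is already phrased for sums of univariates.
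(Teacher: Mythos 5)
Your reduction is exactly the paper's first step: apply the duality trick to each $f_{i,j}^{e_{i,j}}$ to get $F_{i,j} = \sum_{p}\prod_{q\in X_j} g_{i,j,p,q}(x_q)$ of size $s^{O(1)}$, and observe that $C' = \sum_i\prod_j F_{i,j}$ is a set-depth-$6$ (height-$3$) formula whose middle $\Pi$-layer respects the singleton partition. That part, including the partition bookkeeping, is fine. The gap is in the last step: invoking Theorem \ref{thm:set-height-H} as a black box with $H=3$ gives running time polynomial in $\exp((2\cdot 3^2\log s)^{4}) = \exp(O(\log^4 s)) = s^{O(\log^3 s)}$, not the claimed $s^{O(\log^2 s)}$. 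Equivalently, the generic concentration parameter $\ell_0 = (2H\ce{H\log_2 k})^{H-1}\cdot 2\ce{H\log_2(k\lambda)}+1$ is $O(\log^3 s)$ for $H=3$, which is one $\log$ factor too large.

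The missing idea is the one the paper flags as the ``crucial observation'': you must re-run the inductive proof of Theorem \ref{thm:low-support} on the specific formula $C'$ rather than cite Theorem \ref{thm:set-height-H}. Because the bottom-level polynomials $g_{i,j,p,q}$ are univariates, they have monomial-weight $\mu \le 1$, so in the base case Lemma \ref{lem:sparse-poly} applies with $\ell' = 1+\min\{2\ce{\log_2(\k\cdot\rms(f))},\mu(f)\} = 2$ instead of $\ell' = O(\log (k\lambda))$. Feeding $\ell_{H-1}=2$ into the recursion $\ell_h = (\ell_{h+1}-1)H(\ell-1)+1$ then yields $\ell_0 = O(\log^2 s)$, and the final sparse-PIT step costs $(d^H+n+\ell_0)^{O(\ell_0)} = s^{O(\log^2 s)}$. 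Without this refinement your argument proves only the weaker bound $s^{O(\log^3 s)}$.
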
 

\noindent {\em Remarks -} The restriction on $\ch(\F)$ in the above two corollaries comes from the use of the duality trick. We think this restriction can be lifted by using \emph{Galois rings} (\cite{S08, SSS12}), and defining \emph{rank} for a Hadamard algebra over a Galois ring appropriately. We avoid working out the details here just to keep the focus on the main contributions of this work.


\subsection{Organization}

We develop an extensive terminology in Section \ref{sec:basics}, which would be useful later. This section also shows the proof idea at work for the example case of diagonal circuits. 
Section \ref{sec:low-blk-supp} proves the first structural property - a small shift ensures \emph{low-block-support rank-concentration} in a product of polynomials, that have disjoint variables and only low-weight monomials. Starting with this as a base case, Section \ref{sec:low-supp} proves the second structural property - a small shift ensures low-support rank-concentration in set-depth-$\D$ formulas (thus, achieving the presence of a low-support monomial). Finally, the proofs of our main results (or hitting-sets) are completed in Section \ref{sec:hitting-set}.

\section{The basics}\label{sec:basics}

\subsection{Polynomials}

Let $\N:=\Z_{\ge0}$ and $[n]:=\{1,\ldots,n\}$. Let $R$ be a commutative ring. In the motivating cases $R$ will be a field $\F$, which we implicitly assume to be large enough. This we can do as the required field extensions are constructible in deterministic polynomial time \cite{AL86}, further, as in blackbox PIT we are allowed to evaluate the circuit over any `small' field extension. 

Not always will we use bold notation for a vector, hopefully the context will avoid the confusion. For a vector $e\in\Z^n$ we define $|e|:=\sum_i e_i$. Also, let the {\em support} be $\rmS(e):=\{i\,|\,e_i\ne0\}$ and the {\em weight} $\rms(e)$ be its size. For an {\em exponent vector} $e\in\N^n$, we define a {\em coefficient} operator $\coef(e):R[\term{x}]\rightarrow R$ that on a polynomial $f \in R[\term{x}]$ equals the coefficient of $x^e$ in $f$. Clearly, it is an $R$-module homomorphism but is not multiplicative. Define the {\em support of $f$} as $\rmS(f):=\{e\in\N^n \,|\, \coef(e)(f)\ne0 \}$ and the {\em sparsity} $\rms(f)$ be its size. The {\em monomial-weight of $f$} is $\mu(f):=\max_{e\in\rmS(f)}\rms(e)$. Further, define the {\em cone of $f$} as $\cS(f):=\{e'\in\N^n \,|\, \exists e\in\rmS(f), e'\le e \}$, where the inequality is coordinate-wise, and its size as $\frs(f)$. Note that for a sparse polynomial $f$, $\rms(f)$ is small but $\frs(f)$ is usually exponential. 

\begin{lemma}[Cone]\label{lem-cone}
For an $n$-variate polynomial $f$, of degree bound $d$ and monomial-weight $\mu$, we have $\frs(f) \le {n+1 \choose \mu}\cdot{d+\mu \choose \mu}$.
\end{lemma}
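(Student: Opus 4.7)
The plan is to reduce the estimate to a clean counting problem by extracting the two constraints that every $e' \in \cS(f)$ inherits from $\rmS(f)$. By definition there is some $e \in \rmS(f)$ with $e' \le e$ coordinate-wise, which immediately gives $\rmS(e') \subseteq \rmS(e)$ and $|e'| \le |e|$. Since every $e \in \rmS(f)$ satisfies $\rms(e) \le \mu$ and $|e| \le d$, this forces $\cS(f) \subseteq \{e' \in \N^n : \rms(e') \le \mu,\ |e'| \le d\}$, so it suffices to upper-bound the size of the right-hand set.

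To do that, I would fix an arbitrary $\mu$-subset $S \subseteq [n]$ and count the $e' \in \N^n$ whose support lies inside $S$ and whose weight $|e'|$ is at most $d$. Introducing a slack coordinate for the ``$\le d$'' inequality, stars-and-bars gives exactly $\binom{d+\mu}{\mu}$ such vectors. Since every $e'$ with $\rms(e') \le \mu$ has its support contained in at least one $\mu$-subset of $[n]$, a union bound over the $\binom{n}{\mu}$ such subsets yields $\frs(f) \le \binom{n}{\mu}\binom{d+\mu}{\mu}$, which is dominated by the stated bound because $\binom{n}{\mu} \le \binom{n+1}{\mu}$.

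The only minor wrinkle is the degenerate regime $\mu > n$, where $[n]$ admits no $\mu$-subset; but there every $e' \in \N^n$ already satisfies $\rms(e') \le n < \mu$, and one can formally enlarge the ambient index set to $[n+1]$ and run the same argument, which is precisely the slack that the factor $\binom{n+1}{\mu}$ absorbs. Beyond this minor accounting, there is no real obstacle: once the support and degree constraints inherited by the cone are written down, the whole lemma is a stars-and-bars bookkeeping exercise, so I do not expect any individual step to be the main difficulty.
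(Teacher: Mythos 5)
Your proof is correct, and the paper itself states this lemma without proof, so your stars-and-bars counting (constrain each cone element by $\rms(e')\le\mu$ and $|e'|\le d$, then union over $\mu$-subsets of the support) is exactly the routine argument the authors evidently intended; note also that $\mu=\mu(f)\le n$ always holds, so your degenerate case never actually arises. Your bound $\binom{n}{\mu}\binom{d+\mu}{\mu}$ is in fact slightly sharper than the stated $\binom{n+1}{\mu}\binom{d+\mu}{\mu}$.
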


For $u,v,a\in\N^n$ define $v!:=\prod_{i\in[n]} v_i!$, ${v \choose u} := \prod_{i\in[n]} {v_i \choose u_i} = \frac{v!}{u!\cdot (v-u)!}$, and $a^{v - u} := \prod_{i \in [n]}{a_i^{v_i - u_i}}$. We keep in mind the conventions: For all $a<b\in\N$, ${a\choose b} = 0$ and ${a\choose 0} = 1$.

\begin{lemma}[Shift on monomials]\label{lem-binom-coeff}
Let $u,v\in\N^n$, $a_1,\ldots,a_n\in R$ and $f=\prod_{i\in[n]}(x_i+a_i)^{v_i}$. Then, 
$\coef(u)(f) = {v \choose u}\cdot a^{v-u}.$
\end{lemma}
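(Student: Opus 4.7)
The plan is to prove this by applying the single-variable binomial theorem coordinate-by-coordinate and then collecting the coefficient of $x^u$. Since the polynomial $f$ factors as a product of univariate polynomials in disjoint variables $x_1,\ldots,x_n$, and the $\coef$ operator on such a product factors accordingly when reading off a specific monomial $x^u$, the problem reduces immediately to the one-variable case.

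First, I would expand each factor separately. By the standard binomial theorem in $R[x_i]$,
\[
(x_i+a_i)^{v_i} \;=\; \sum_{u_i=0}^{v_i} \binom{v_i}{u_i}\, a_i^{v_i-u_i}\, x_i^{u_i}.
\]
Next, since the variables $x_1,\ldots,x_n$ are distinct, multiplying these $n$ univariate expansions and distributing produces exactly one term of the form $c\cdot x^u = c\cdot\prod_i x_i^{u_i}$ for each choice of exponents $(u_1,\ldots,u_n)$ with $0\le u_i\le v_i$. The coefficient $c$ is the product of the coefficients picked from each factor, namely $\prod_{i\in[n]} \binom{v_i}{u_i}\, a_i^{v_i-u_i}$, which is precisely $\binom{v}{u}\, a^{v-u}$ by the definitions given just before the lemma statement. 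If some coordinate has $u_i>v_i$, then no term $x_i^{u_i}$ appears in the $i$-th factor, and the convention $\binom{v_i}{u_i}=0$ in that regime keeps the formula correct.

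The only thing to remark on is that the $\coef$ operator, although merely an $R$-module homomorphism in general, does behave multiplicatively on this particular product because the factors involve disjoint variable sets; this is why the univariate binomial theorem suffices and no cross terms need to be tracked. There is no real obstacle here — the statement is essentially the $n$-variable binomial theorem in disguise, recorded in this form for convenient reuse later when analyzing the effect of shifts on circuits.
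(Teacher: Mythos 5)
Your proof is correct and is exactly the standard argument — expand each univariate factor by the binomial theorem, use disjointness of the variables to multiply coefficients, and invoke the convention $\binom{v_i}{u_i}=0$ for $u_i>v_i$ — which is precisely what the paper relies on (it states this lemma without proof as an elementary fact). Nothing is missing.
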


For a polynomial $f$ a shift does not change $\mu(f)$ but, might blow up $\rms(f)$ exponentially.

\subsection{Hadamard algebras}

For a commutative ring $R$ and $\k \in \N$, we define the {\em Hadamard algebra} $\rmH_{\k}(R):=(R^{\k},+,\star)$, on the free $R$-module $R^{\k}$, by defining:
$u\star v := (u_i\cdot v_i)_{i\in[\k]}$, where $\cdot$ is the multiplication in $R$. $\rmH_{\k}(R)$ is an $R$-algebra 
(it is closed, associative, distributive and commutative) with the zero vector as {\em zero} and the all-one vector as 
{\em unity}. 

We can now naturally define the {\em polynomial ring over $\rmH_{\k}(R)$}, $\rmH_{\k}(R)[\term{x}]$. It inherits the operations $+,\star$, and all the elements of $\rmH_{\k}(R)$. Also there is an obvious isomorphism between the algebras
$\rmH_{\k}(R)[\term{x}]$ and $\rmH_{\k}(R[\term{x}])$. (View the elements of $\rmH_{\k}(R)$ and $\rmH_{\k}(R)[\term{x}]$ as `column vectors' with entries from $R$ and $R[\term{x}]$, respectively.)

For an $e\in\N^n$ and $f\in \rmH_{\k}(R)[\term{x}]$, we have the natural notions -- {\em coefficient} operator $\coef(e):\rmH_{\k}(R)[\term{x}] \rightarrow \rmH_{\k}(R)$, {\em support} $\rmS(f)\subset\N^n$, and  {\em sparsity} $\rms(f)$. 

\smallskip\noindent 
\textit{Low-support coefficient-space} - For any polynomial $f$ over a Hadamard algebra $\rmH_{\k}(R)$, where $R$ is a field, and $\ell \in \N_{>0}$, define 
$V_{\ell}(f) := \lrsp_{R}\{\coef(e)(f) \,|\, e\in\N^n, \rms(e)<\ell\} \subseteq \rmH_{\k}(R)$.
We call $f$ {\em $\ell$-concentrated over $\rmH_{\k}(R)$} if $V_{\ell}(f) \ =\ \lrsp_{R}\{ \coef(e)(f) \,|\, e\in\N^n\}$.

We can extend the above definition also to the case when $R$ is an integral domain, as we can then work with the associated field of fractions.

We demonstrate the usefulness of Hadamard algebra \& `shifting' in achieving low-support rank concentration, using the example case of diagonal circuits (see Section \ref{sec:appendix_diag}).

\subsection{Proof ideas}
With the spirit of the argument (as in Section \ref{sec:appendix_diag}) in mind, let us state the proof ideas. Let $C(\term{x})=\sum_{i=1}^k \prod_{j=1}^{d}$ $f_{i,j}(\term{x}_{X_j})$, where $f_{i,j}$ is a {\em sparse} polynomial in $\F[\term{x}_{X_j}]$, be a set-depth-$4$ formula. Consider a $\Pi\Sigma\Pi$ formula $$D(\term{x}) := f_1(\term{x}_{X_1}) \star\cdots\star f_d(\term{x}_{X_d})\quad \text{ over } \rmH_k(\F),$$ where the $i$-th coordinate of $f_j(\term{x}_{X_j})$ is $f_{i,j}(\term{x}_{X_j})$. Note that $C(\term{x})$ can be expressed as $(1, \hspace{0.01in} 1, \ldots, 1) \cdot D(\term{x})$, where $\cdot$ is the usual matrix product. Denote $(1, \hspace{0.01in} 1, \ldots, 1)$ by $\mathbf{1}$.

For a subspace $V\subseteq\F^k$ and polynomials $D_1, D_2\in \rmH_k(\F)[\term{x}]$, we say $D_1\equiv D_2\pmod{V}$ if each coefficient of $D_1-D_2$ is in $V$. Somewhat wishfully, we would like to propose a {\em low-support rank-concentration} property:

\begin{conjecture}[Wishful!]\label{conj:0}
If $\ell>\log |D|$ then $D(\term{x})\equiv0 \pmod{V_\ell(D)}$.
\end{conjecture}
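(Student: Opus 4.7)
The plan is to exploit the factorization of coefficients imposed by the Hadamard product structure of $D$. For an exponent $e\in\N^n$ with block-restrictions $e_j:=e|_{X_j}$, the product rule gives
\[
\coef(e)(D) \;=\; \coef(e_1)(f_1)\star\cdots\star\coef(e_d)(f_d) \;=:\; c_1^{(e_1)}\star\cdots\star c_d^{(e_d)}.
\]
Every coefficient of $D$ lives in the $k$-dimensional $\F$-space $\rmH_k(\F)$, so $\dim\lrsp_\F\{\coef(e)(D):e\}\le k$. Writing $W_j:=\lrsp_\F\{c_j^{(u)}:u\in\N^{|X_j|}\}$, we also have $k_j:=\dim W_j\le k$. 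The target is to prove that the low-support coefficients of $D$ already span all of its coefficients.

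I would first do a per-block greedy basis extraction and then reassemble. For each $j$, order the monomials of $f_j$ by weight (breaking ties arbitrarily) and pick the first $k_j$ exponents $v_{j,1},\ldots,v_{j,k_j}$ whose coefficient-vectors $c_j^{(v_{j,i})}$ form a basis of $W_j$. Then every $c_j^{(u)}$ is an $\F$-linear combination of $c_j^{(v_{j,1})},\ldots,c_j^{(v_{j,k_j})}$. Expanding the Hadamard product for a general $\coef(e)(D)$ then writes it as an $\F$-linear combination of ``basis products'' $c_1^{(v_{1,i_1})}\star\cdots\star c_d^{(v_{d,i_d})}$, each of which is itself a coefficient of $D$ at the exponent $v_{1,i_1}+\cdots+v_{d,i_d}$ of weight $\sum_j\rms(v_{j,i_j})$. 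To close the argument I would need this total weight to be strictly less than $\ell$.

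The key step, and the main obstacle, is bounding $\max_{j,i}\rms(v_{j,i})$. A natural target---consistent with the ``$\log|D|$'' in the hypothesis---is $\rms(v_{j,i})\le\log\rms(f_j)$, so that $\sum_j\rms(v_{j,i_j})\le\sum_j\log\rms(f_j)\le\log\prod_j\rms(f_j)\le\log|D|$. But nothing in the hypotheses forces the greedy basis of $W_j$ to live at low weight. A concrete obstruction: take $f_j=v\cdot x_1x_2\cdots x_{m_j}$ for a fixed nonzero $v\in\F^k$; then $\rms(f_j)=1$ yet the sole basis exponent has weight $m_j$, which is unbounded. The resulting $D$ itself is a single monomial, so $V_\ell(D)=0$ for all $\ell\le\sum_j m_j$, and Conjecture~\ref{conj:0} fails outright in this case. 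So, as the ``Wishful!'' tag signals, the conjecture is too optimistic as stated.

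This also points to the natural fix that the paper must pursue: replace each $f_j$ by a shifted $f_j(\term{x}_{X_j}+\term{c}_j)$ for a carefully designed $\term{c}_j\in\F^{|X_j|}$, so that the low-weight coefficients of the shifted polynomial become rich enough to span $W_j$. By Lemma~\ref{lem-binom-coeff}, $\coef(u)(f_j(\term{x}+\term{c}_j))$ is a weighted sum $\sum_{v\ge u}\binom{v}{u}\,\term{c}_j^{\,v-u}\,c_j^{(v)}$, so a well-chosen shift pulls information from high-weight slots of $f_j$ down into low-weight slots, populating a low-support basis of $W_j$. The refined, shift-conditional version of Conjecture~\ref{conj:0}---in which low-support rank concentration is asserted only after a generic or explicit shift---should then be provable, and the block-wise assembly step sketched above will carry the per-block property up to the full product $D$.
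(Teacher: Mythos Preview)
Your analysis is correct and matches the paper's own treatment: the paper does not prove Conjecture~\ref{conj:0} but immediately refutes it with the single-monomial example $D(\term{x})=x_1\cdots x_n$, exactly the obstruction you identify, and then proposes shifting as the remedy. Your sketch of the fix---pulling high-weight coefficient information down to low-weight slots via $\coef(u)(f_j(\term{x}+\term{c}_j))=\sum_{v\ge u}\binom{v}{u}\term{c}_j^{\,v-u}c_j^{(v)}$ and then reassembling block-wise---is precisely the mechanism the paper develops in Sections~\ref{sec:low-blk-supp}--\ref{sec:low-supp}.
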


If this is true then the coefficient of $\term{x}^e$, in $D$, is in the $\F$-span of those coefficients that correspond to low support, i.e.~$O(\log |D|)$.  Suppose we verify the zeroness of $\pi_S\circ C(\term{x})=\mathbf{1}\cdot D(\pi_S\term{x})$, for $S\in{[n]\choose\ell-1}$ and $\pi_S:x_i\mapsto (x_i \text{ if } i\in S, \text{ else } 0)$. This means that $\forall e\in\N^n$ with $\rms(e)<\ell$ we have $\mathbf{1}\cdot\coef(e)(D)=0$.  Now the conjecture implies that also $\forall e\in\N^n$ with $\rms(e)\ge\ell$ we have $\mathbf{1}\cdot\coef(e)(D)=0$, clearly implying, $C(\term{x})=\mathbf{1}\cdot D(\term{x})=0$. In other words, we have a blackbox PIT for set-depth-$4$ in time $\poly(n^{\log |C|})$.

Unfortunately, Conjecture \ref{conj:0} is easily false! For example, let $D(\term{x})=x_1\cdots x_n$ and $1<\ell\le n$. Then obviously $D(\term{x})\not\equiv0 \pmod{V_\ell(D)}$.

Here is where `shifting' enters the picture. The goal in this paper is to prove that after a `small' shift of the variables, $D$ begins to satisfy something like Conjecture \ref{conj:0}. This requires a rather elaborate study of how a formula changes when shifted; the meat is expressed through certain {\em transfer} equations. Looking ahead, we conjecture (without proof) that the phenomena continue to hold in {\em general} constant-depth formulas.


\subsection{Set-height formulas over Hadamard algebra}
Just as we have defined set-height formulas over a field $\F$ - meaning, the underlying constants come from $\F$, we can also define set-height formula in a natural way over any Hadamard algebra $\rmH_{\kappa}(R)$. The reason we can extend the definition to arbitrary $\rmH_{\kappa}(R)$ is that the defining property of set-height formulas is the existence of a partition of variables for every $\Pi$-layer (irrespective of where the constants of the formula come from). \emph{Size} of a formula $C$ over $\rmH_{\kappa}(R)$ is defined as $\kappa$ times the number of gates in $C$.

Let $C$ be a set-height-$H$ formula (over $\F$) of depth $\D$ - we will count depth of $C$ from the top, i.e.~the top $\Sigma$-gate is at depth $1$.  If $\D$ is even (resp.~odd) then the gates of the bottom-most $\Sigma$-layer compute sparse polynomials (resp.~linear polynomials) in the variables. Let $k$ be the maximum among the fanin of the $\Sigma$-gates of $C$ (barring the gates of the bottom-most $\Sigma$-layer), and $d$ the maximum among the fanin of the $\Pi$-gates in $C$.
\vspace{0.05in}

\noindent \textit{Uniform fanin of $\Sigma$ and $\Pi$-gates} - With the definitions of $k$ and $d$ as above, we can assume that the fanin of every $\Sigma$-gate in $C$ (barring the gates of the bottom-most $\Sigma$-layer) is $k$, and fanin of every $\Pi$-gate is $d$. This can be achieved by introducing `dummy' gates: The `dummy' $\Sigma$-gates introduced as children of a $\Pi$-gate compute the field constant $1$, and the `dummy' $\Pi$-gates introduced as children of a $\Sigma$-gate also compute $1$ except that some of the field constants on the wires are set to zeroes. This process keeps $C$ a set-height-$H$ formula but might bloat up the size from $s$ to $s^{\D}$, although it does not change $k$ and $d$ (according to the way we have defined them). Of course, formula $C$ is not modified physically as it is presented as a blackbox. But the point is, even in the blackbox setting we can treat $C$ as a set-height-$H$ formula with \emph{uniform fanin} of $\Sigma$ and $\Pi$-gates. We will call this uniform fanin of the $\Sigma$ and $\Pi$-gates as the \emph{$\Sigma$-fanin} and \emph{$\Pi$-fanin}, respectively. Note that the definition of $\Sigma$-fanin excludes the gates of the bottom-most $\Sigma$-layer - they are handled next. \\

\vspace{-0.1in}
\noindent \textit{Fanin bound on bottom-most $\Sigma$-gates} - If $\D$ is even, denote the set of monomials computed by the $H$-th $\Pi$-layer by $M$; if $\D$ is odd then $M := \term{x} \cup \{1\}$. The fanin of every gate of the bottom-most $\Sigma$-layer is bounded by $\lambda := |M|+1$. Refer to $\lambda$ as the \emph{sparsity parameter}. 

Henceforth, we will assume uniform $\Sigma$ and $\Pi$-fanin of $C$ ($k$ and $d$ respectively), keeping in mind that the fanin of every gate of the bottom-most $\Sigma$-layer is bounded by $\lambda$. All of $k, d$ and $\lambda$ are in turn bounded by $s$. Denote this class of formulas over $\F$ by $\mathcal{C}_0(k, d, \lambda, \term{x})$.
\vspace{-0.1in}

\noindent \textit{Recursive structure of set-height formulas over Hadamard algebras} -
Let $\mathcal{C}_h(k, d$, $\lambda, \term{x})$ be the class of set-height-($H-h$) formulas, of depth $(\D-2h)$, in the variables $\term{x}$ with $\Sigma$-fanin $k$, $\Pi$-fanin $d$ and sparsity parameter $\lambda$, over the Hadamard algebra $\mathcal{R}_h := \rmH_{k^h}(\F)$. (Eg., to begin with $h = 0$ and the input formula $C \in \mathcal{C}_0(k, d, \lambda, \term{x})$.) Assume that $k, d$ and $\lambda$ are less than $s$, which is the size of the input formula $C$. Let $C_h$ be a formula in $\mathcal{C}_h(k, d, \lambda, \term{x})$.
\begin{equation} \label{eqn:mainformula}
C_h(\term{x}) = \sum_{i \in [k]}{c_i \cdot \prod_{j\in[d]}{f_{i,j}(\term{x}_{X_j})}},
\end{equation}
$c_i \in \mathcal{R}_h$, $f_{i,j}(\term{x}_{X_j})$ is a set-height-$(H-h-1)$ formula over $\mathcal{R}_h$ on the variables $\term{x}_{X_j}$, and $X_1 \sqcup \cdots \sqcup X_d$ is the partition of $[n]$ that the first $\Pi$-layer of $C_h(\term{x})$ respects. Let $\mathcal{R}_{h+1} := \rmH_{k}(\mathcal{R}_h) = \rmH_{k^{h+1}}(\F)$. Define $f_j(\term{x}_{X_j}) := (f_{1,j}(\term{x}_{X_j}),  \ldots, f_{k,j}(\term{x}_{X_j}))^T \in \mathcal{R}_{h+1}[\term{x}_{X_j}]$. Let
\begin{equation*}
D_h(\term{x})\ :=\ f_1(\term{x}_{X_1}) \star\cdots\star f_d(\term{x}_{X_d}) = \prod_{j \in [d]}{f_j(\term{x}_{X_j})} \quad \text{ over } \mathcal{R}_{h+1},
\end{equation*}
where $\star$ denotes the Hadamard product in the algebra $\mathcal{R}_{h+1}$ (extended naturally to the polynomial ring over $\mathcal{R}_{h+1}$). Evidently,
\begin{equation} \label{eqn:hadamard_h}
C_h(\term{x}) = (c_1, \ldots, c_k) \cdot D_h(\term{x}) = \term{c}^T \cdot D_h(\term{x}), 
\end{equation}
where $\cdot$ is the product for matrices over $\mathcal{R}_h[\term{x}]$. We intend to understand the nature of the circuit $C_h(\term{x})$ by studying the properties of the circuit $D_h(\term{x})$ - it is here that the recursive structure reveals itself as in Lemma \ref{lem:rec-struc}. Let $\mathcal{P}_h(h') := \{X_{h',1},\ldots, X_{h',d} \}$ be the partition of $[n]$ that the $h'$-th $\Pi$-layer of $C_{h}$ respects. (Recall that when the depth of $C_h$ is even then the bottom-most $\Pi$-layer need not respect any partition - this attribute would always remain implicit in our discussions.) Define the partition $\mathcal{P}_{h}(h', X_j) := \{ X_{h',1} \cap X_j,\ldots, X_{h',d} \cap X_j \}$ (ignore here the empty sets), for every $1 \leq j \leq d$.
\begin{lemma} \label{lem:rec-struc}
 For every $j \in [d]$, $f_j(\term{x}_{X_j})$ is a set-height-($H-h-1$) formula in $\mathcal{R}_{h+1}[\term{x}_{X_j}]$ with $\Sigma$-fanin $k$, $\Pi$-fanin $d$ and sparsity parameter $\lambda$, i.e.~$f_j(\term{x}_{X_j}) \in \mathcal{C}_{h+1}(k,d, \lambda,  \term{x}_{X_j})$, such that every $h'$-th $\Pi$-layer of $f_j(\term{x}_{X_j})$ respects the partition $\mathcal{P}_h(h'+1, X_j)$. \emph{(Pf. in App. \ref{sec:appendix_basics})} 
\end{lemma}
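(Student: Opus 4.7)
The plan is to construct $f_j(\term{x}_{X_j})$ as a set-height-$(H-h-1)$ formula over $\mathcal{R}_{h+1}$ by \emph{stacking} the $k$ parallel subformulas $f_{1,j}, \ldots, f_{k,j}$ layer by layer, and then reading off the partition and fanin data directly from $C_h$. The key observation is the following: if $g_1, \ldots, g_k \in \mathcal{R}_h[\term{y}]$ share a common tree skeleton, then the coordinate-wise vector $(g_1, \ldots, g_k)^T$ is naturally a formula in $\mathcal{R}_{h+1}[\term{y}]$ with the same skeleton. Indeed, at each $\Sigma$-gate the $k$ tuples of $\mathcal{R}_h$-valued coefficients bundle into a single tuple of $\mathcal{R}_{h+1}$-valued coefficients (since $\mathcal{R}_{h+1} = \rmH_k(\mathcal{R}_h)$), and at each $\Pi$-gate the Hadamard product $\star$ in $\mathcal{R}_{h+1}[\term{y}]$ reproduces, coordinate by coordinate, the ordinary product in the corresponding $g_i$.

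To apply this observation, I would first invoke the uniform-fanin normalization of $C_h$ already set up earlier in the excerpt: every $\Sigma$-gate (outside the bottom-most $\Sigma$-layer) has fanin exactly $k$, every $\Pi$-gate has fanin exactly $d$, and every bottom-most $\Sigma$-gate has fanin bounded by $\lambda$. Because the $f_{i,j}$'s are all siblings of each other under the first $\Pi$-layer of the \emph{same} normalized $C_h$, and have the same depth $\D-2h-2$ (or $\D-2h-1$), they share a common tree skeleton once dummy gates are in place. The stacking construction is therefore well-defined and yields $f_j \in \mathcal{C}_{h+1}(k, d, \lambda, \term{x}_{X_j})$, with the bottom-most $\Sigma$-fanin still bounded by $\lambda$ because the monomial set $M$ (hence $\lambda$) is an invariant of $C_h$ itself and does not grow under stacking.

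For the partition claim, note that the $h'$-th $\Pi$-layer of each $f_{i,j}$ is exactly the $(h'+1)$-th $\Pi$-layer of $C_h$ restricted to the subtree rooted at $f_{i,j}$; by hypothesis the latter respects the partition $\mathcal{P}_h(h'+1) = \{X_{h'+1,1}, \ldots, X_{h'+1,d}\}$ of $[n]$. Since every variable appearing inside $f_{i,j}$ lies in $X_j$, the effective partition respected by this layer of $f_{i,j}$ is the restriction $\mathcal{P}_h(h'+1, X_j) = \{X_{h'+1,1}\cap X_j, \ldots, X_{h'+1,d}\cap X_j\}$ (discarding any empty parts). The stacking operation only bundles coefficients and never alters the variable-sets of the underlying subcircuits, so the $h'$-th $\Pi$-layer of $f_j$ inherits the same partition $\mathcal{P}_h(h'+1, X_j)$. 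The only mild verification is that stacking preserves the ``arbitrary monomial'' licence of the bottom-most $\Pi$-layer when $\D-2h$ is even; this is immediate since the coordinate-$i$ entry of the stacked gate is still a pure monomial in the original variables. I do not anticipate a real obstacle — the entire argument is essentially a bookkeeping exercise that makes precise how the Hadamard algebra $\rmH_k(\mathcal{R}_h)$ is designed to package $k$ parallel $\mathcal{R}_h$-valued computations into a single one.
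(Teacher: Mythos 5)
Your proposal is correct and is essentially the paper's argument: the paper also hinges on the observation that the $f_{1,j},\ldots,f_{k,j}$ sit inside the single set-depth formula $C_h$, so their corresponding $\Pi$-layers respect the same partitions, and it bundles coefficients and subformulas coordinate-wise into $\mathcal{R}_{h+1}=\rmH_k(\mathcal{R}_h)$. The only cosmetic difference is that the paper formalizes your ``stacking over a common skeleton'' as a reverse induction on height, unrolling one $\Sigma\Pi$ bilayer at a time and invoking the induction hypothesis on the bundled subformulas $g_{j,p,q}$.
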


\subsection{Matrices}

A matrix $M$ with coefficients in ring $R$, and the rows (resp.~columns) indexed by $\cI$ (resp.~$\cJ$) is compactly denoted as: $M\in(\cI\times\cJ\rightarrow R)$. $M$ is simultaneously a map from $\cI\times\cJ$ to $R$, and  a {\em $R$-linear} transformation from $R^{|\cJ|}$ to $R^{|\cI|}$. When $R$ is an integral domain, we denote the {\em rank} by $\rk_R M$. Note that the row-rank and column-rank are equal for a matrix. We call a matrix $M\in(\cI\times\cJ\rightarrow R)$, $|\cI|=|\cJ|-1$, {\em strongly full} if for all $u\in\cJ$, $M_{\cI,\cJ\setminus\{u\}}$ is invertible. For two matrices $M_1, M_2$ and a $R$-module $V$, we write $M_1\equiv M_2 \pmod{V}$ to mean that each column of $M_1-M_2$ is in $V$. For two matrices $M_1\in(\cI_1\times\cJ_1\rightarrow R)$ and $M_2\in(\cI_2\times\cJ_2\rightarrow R)$, the matrices $M_1^{-1}$ (when $|\cI_1|=|\cJ_1|$), $M_1M_2$ (when $\cJ_1=\cI_2$) and $M_1\otimes M_2$ are in $(\cJ_1\times\cI_1\rightarrow R)$, $(\cI_1\times\cJ_2\rightarrow R)$ and $( (\cI_1\times\cI_2)\times(\cJ_1\times\cJ_2) \rightarrow R)$ respectively. For a matrix $M\in R^{\k\times a}$ and an element $v\in \rmH_{\k}(R)$, $v\star M$ is the matrix obtained after taking the Hadamard product of each column with $v$. For two matrices $M_1\in(\cI\times\cJ_1\rightarrow R), M_2\in(\cI\times\cJ_2\rightarrow R)$ the {\em Hadamard-tensor} matrix $M_1\circledast M_2 \in (\cI\times(\cJ_1\times\cJ_2)\rightarrow R)$ is defined as: Its $(j_1,j_2)$-th column is $(M_1)_{\cI, j_1}\star (M_2)_{\cI, j_2}$. We list some intuitive formulas.

\begin{lemma}[Matrices]\label{lem-matrices}
For any column-vector $v$ and matrices $E_i, M_i, Z_i$, with suitable assumptions on the sizes and invertibility, we have:
\begin{enumerate}
\item
$(\otimes_i E_i)\cdot (\otimes_i M_i) = \otimes_i (E_iM_i)$.
\item
$\otimes_i M_i^{-1} = (\otimes_i M_i)^{-1}$.
\item
$(v\star M_1)\cdot M_2 = v\star (M_1M_2)$.
\item
$(Z_1M_1)\circledast(Z_2M_2) = (Z_1\circledast Z_2)\cdot (M_1\otimes M_2)$.
\end{enumerate}
\end{lemma}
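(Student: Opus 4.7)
The plan is to verify each identity by a direct entrywise (or columnwise) computation. None of the four requires a clever idea; the work reduces to consistent bookkeeping of multi-indices and exploiting the bilinearity of $\otimes$, $\star$, and $\circledast$ together with their definitions.

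For (1), the classical mixed-product law for Kronecker products, I would first handle the two-factor case by computing
\[
\bigl((E_1 \otimes E_2)(M_1 \otimes M_2)\bigr)_{(i_1,i_2),(k_1,k_2)} \;=\; \sum_{j_1,j_2} (E_1)_{i_1,j_1}(E_2)_{i_2,j_2}(M_1)_{j_1,k_1}(M_2)_{j_2,k_2},
\]
and observing that the double sum factors into a product of two single sums, yielding $(E_1 M_1)_{i_1,k_1}(E_2 M_2)_{i_2,k_2}$, which matches the corresponding entry of $(E_1 M_1) \otimes (E_2 M_2)$. The general case then follows by induction via the associativity of $\otimes$. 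Identity (2) is an immediate corollary: specializing (1) to $E_i = M_i^{-1}$ gives $(\otimes_i M_i^{-1}) \cdot (\otimes_i M_i) = \otimes_i I = I$, and the reverse composition is symmetric, so $\otimes_i M_i^{-1}$ is the two-sided inverse of $\otimes_i M_i$.

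Identity (3) amounts to the observation that left-Hadamard by a column vector $v$ is just uniform row-scaling, which commutes with right multiplication: the $(i,k)$-entry of either side evaluates to $v_i \cdot \sum_j (M_1)_{i,j}(M_2)_{j,k}$, and the identity follows.

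The main computation is (4), the mixed law between $\circledast$ and $\otimes$. I would expand a single column, say indexed by $(k_1,k_2)$, of the right-hand side using the definitions of $\otimes$ and $\circledast$ to get
\[
\sum_{j_1,j_2} (M_1)_{j_1,k_1}(M_2)_{j_2,k_2}\, \bigl((Z_1)_{\cI,j_1} \star (Z_2)_{\cI,j_2}\bigr),
\]
and then invoke the $R$-bilinearity of $\star$ on $\rmH_{|\cI|}(R)$ to split this double sum as $\bigl(\sum_{j_1}(M_1)_{j_1,k_1}(Z_1)_{\cI,j_1}\bigr) \star \bigl(\sum_{j_2}(M_2)_{j_2,k_2}(Z_2)_{\cI,j_2}\bigr) = (Z_1 M_1)_{\cI,k_1} \star (Z_2 M_2)_{\cI,k_2}$, which is by definition the $(k_1,k_2)$-th column of the left-hand side. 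The only mild obstacle is keeping track of the ``shared'' row-index set $\cI$ that distinguishes $\circledast$ from $\otimes$ (the former tensors only the column-index sets); once that bookkeeping is in place, bilinearity of $\star$ does everything.
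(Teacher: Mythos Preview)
Your proposal is correct. The paper does not actually prove this lemma; it introduces the four identities as ``intuitive formulas'' and leaves them unproved, so your direct entrywise and columnwise verifications are precisely the standard checks that the paper implicitly expects the reader to carry out.
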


\section{Low-block-support rank-concentration} \label{sec:low-blk-supp}

For $i\in[\ell]$, let $f_i \in \rmH_{\k}(\F)[\term{x}_{X_i}]$ be a polynomial of degree at most $\delta$, where the $X_i$'s are disjoint subsets of $[n]$. Define $\mu := \max_i\{\mu(f_i)\}$. By Lemma \ref{lem-cone}, the sparsity parameter $\lambda := \max_i\{\frs(f_i)\}$ of the $f_i$'s is bounded by $(\delta+n+\mu)^{O(\mu)}$. Define $\ell := 2\ce{\log_2 \k}+1$.

Consider the depth-$3$ ($\Pi \Sigma \Pi$) formula over $\rmH_{\k}(\F)$, 
$$D := f_1(\term{x}_{X_1}) \star\cdots\star f_\ell(\term{x}_{X_\ell}) \text{ in } \rmH_{\k}(\F)[\term{x}].$$ 
We {\em shift} it by formal variables $\term{t}$ to get $D(\term{x}+\term{t}) = f_1(\term{x}_{X_1}+\term{t}_{X_1}) \star\cdots\star f_\ell(\term{x}_{X_\ell}+\term{t}_{X_\ell})$ in $\rmH_{\k}(\F[\term{t}])[\term{x}]$. Wlog we can assume that, $\forall i\in[\ell]$, {\em $f_i(\term{t})$ is a unit in $\rmH_{\k}(\F(\term{t}))$}. This is because not being a unit only means that the vector $f_i \in\F(\term{t})^{\k}$ has a zero coordinate, say at place $j\in[\k]$. Then the $j$-th coordinate of $D(\term{t})$ is zero, and we can forget this position altogether; project the setting to the simpler algebra $\rmH_{\k-1}(\F)$.
We {\em normalize} $f_i$ to $f'_i(\term{x}) := f_i(\term{t})^{-1} \star f_i(\term{x}+\term{t})$. Define $D'(\term{x}) := f'_1(\term{x}_{X_1}) \star\cdots\star f'_\ell(\term{x}_{X_\ell})$ in $\rmH_{\k}(\F(\term{t}))[\term{x}]$. 
\begin{equation}\label{eqn-D-prime}
D(\term{x}+\term{t}) \ =\ D(\term{t})\star D'(\term{x}).
\end{equation}

Any exponent $e\in\N^n$, possibly appearing in $D'$, can be written uniquely as $e = \sum_{i\in[\ell]} e_i$, where $e_i\in\cS(f_i)$, because $f_i$'s are on disjoint set of variables. We will frequently use this identification. We define the {\em block-support of $e$}, $\rmbS(e) := \{i\in[\ell] \,|\, e_i \ne0 \}$, and let the {\em block-weight} $\rmbs(e)$ be its size. Based on this we define a relevant vector space, for $l\in\N_{>0}$,
$$\cV_l(D')\ :=\ \lrsp_{\F(\term{t})} \left\{ \coef(e)(D') \,|\, e\in\N^n, \rmbs(e)<l \right\}. $$

\noindent {\em Ordering \& Kronecker-based map} - We define a {\em term ordering} on the monomials $t^e$, $e\in\N^n$, and their inverses. For a $w\in\N^n$ we denote the ordering as $t^e\preceq_w t^{e'}$, or equivalently $1/t^{e'}\preceq_w 1/t^e$, if $\sum_{i\in[n]} w_ie_i \le \sum_{i\in[n]} w_ie'_i$. Note that the ordering is {\em multiplicative} on the monomials, equivalently, the induced ordering on the exponents is {\rm additive}.

For reasons of efficiency, useful later but skippable for now, we assume:
$\prec_w$ keeps the monomials $\left\{\prod_{i\in[\ell]}t^{e_i} \;|\;  \forall i\in[\ell], e_i\in\cS(f_i) \right\}$ distinct. If we fix such a $w\in\N_{>0}$ (note: it could be found in time $\lambda^{O(\ell)}$), then the Kronecker-like homomorphism $\tau : t_i \mapsto y^{w_i}$ ($\forall i\in[n]$) will obviously also map the aforementioned monomials to distinct {\em univariate} ones. We extend $\tau$ to a homomorphism from $\rmH_{\k}(\F[\term{t}])[\term{x}]$ to $\rmH_{\k}(\F[y])[\term{x}]$, by keeping $\term{x}$ unchanged. Its domain can be further extended to a subset of $\rmH_{\k}(\F(\term{t}))[\term{x}]$ (i.e.~as long as $\tau$ does not cause a {\em division by zero}).

We would like to prove something like Conjecture \ref{conj:0} for $D(\term{x}+\term{t})$. Note that it suffices to focus
on $D'(\term{x})$ as its coefficients are all scaled-up by the same nonzero `constant' $D(\term{t})$. The rest of the section is devoted to proving the following theorem. 

\begin{theorem}[Low block-support suffices]\label{thm:low-block-support}
$D'(\term{x}) \equiv0 \pmod{ \cV_{\ell}(D') }$. Further, it remains true under the map $\tau$.
\end{theorem}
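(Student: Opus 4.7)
Since $\rmbs(e)\le\ell$ for every exponent appearing in $D'$, it suffices to verify $\coef(e)(D')\in\cV_\ell(D')$ when $\rmbs(e)=\ell$; the other cases hold by the definition of $\cV_\ell(D')$. Fix such an $e=\sum_{i\in[\ell]}e_i$ with every $e_i\in\cS(f_i)$ non-zero. The disjoint-variable structure of the $f_i'$ factorises
\[
\coef(e)(D')\ =\ c_1(e_1)\star c_2(e_2)\star\cdots\star c_\ell(e_\ell),\qquad c_i(e_i):=\coef(e_i)(f_i'),
\]
and the normalisation $f_i'(\mathbf{0})=\mathbf{1}$ identifies each low-block-support coefficient of $D'$ with a proper-subset star product $c_{i_1}(e_{i_1}')\star\cdots\star c_{i_r}(e_{i_r}')$ for some $\{i_1,\ldots,i_r\}\subsetneq[\ell]$ and arbitrary exponents $e_{i_j}'$. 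So the question is purely linear-algebraic inside the $\k$-dimensional commutative algebra $\rmH_{\k}(\F(\mathbf{t}))$.

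My plan is to push the problem into the univariate algebra $\rmH_{\k}(\F(y))$ via $\tau$. By the choice of $w$, distinct elements of any $\cS(f_i)$ yield distinct $y$-monomials under $\tau$, so each coordinate of $\tau(c_i(e_i))$ admits a well-defined leading $y$-monomial computable from Lemma \ref{lem-binom-coeff} applied to the numerator of $f_i'^{(j)}(\mathbf{x})$ and to the denominator $f_i^{(j)}(\mathbf{t})$. Star-products in $\rmH_{\k}$ act coordinate-wise, so the $\k$-tuple of leading $y$-monomials attached to a star-product of several $c_i(e_i)$'s is a ``multiplicative fingerprint'' of the underlying $(e_i)$-tuple, inherited additively (in the exponents) from the individual factors.

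The heart of the proof is an invertibility argument. I would set up a transfer matrix $M$ whose columns are indexed by a carefully chosen family of low-block-support star products and whose rows are the $\k$ coordinates of $\rmH_{\k}(\F(y))$; the $(j,\text{column})$-entry being the leading $y$-coefficient of the $j$-th coordinate. The goal is to exhibit, by a greedy/binary-search selection over subsets of $[\ell]$, $\k$ such low-block-support products whose coordinate-wise leading $y$-monomials are pairwise distinct, which would make $M$ a generalised Vandermonde matrix and hence strongly full in the sense of Lemma \ref{lem-matrices}. The factor of two in $\ell=2\lceil\log_2\k\rceil+1$ is exactly the slack needed so the selection can operate on two roughly equal halves of $[\ell]$ of size $\lceil\log_2\k\rceil$ each (with one extra block in reserve); the bound $2^{\lceil\log_2\k\rceil}\ge\k$ guarantees that enough distinct leading profiles can be drawn from within each half.

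Once $M$ is invertible, every full-block-support coefficient is an explicit $\F(y)$-combination of the selected low-block-support ones, giving $\tau(D')\equiv0\pmod{\tau(\cV_\ell(D'))}$. Since $\tau$ is injective on the relevant monomials coming from the cones $\cS(f_i)$, this congruence pulls back to $\rmH_{\k}(\F(\mathbf{t}))$ unchanged, yielding $D'\equiv0\pmod{\cV_\ell(D')}$ as required. I expect the main obstacle to be the combinatorial step inside the invertibility argument: designing the greedy/binary-search selection so that it provably produces $\k$ products with pairwise-distinct coordinate-wise leading $y$-monomials is where essentially all of the technical work lies, and it is the reason behind the ``transfer matrix'' machinery foreshadowed in the introduction.
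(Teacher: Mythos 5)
Your reduction to full block-support exponents and your instinct to use the order $\prec_w$ together with a greedy/binary-search selection are on the right track, but the central step of your plan does not work. You propose to find $\k$ low-block-support star products whose coefficient vectors form an invertible $\k\times\k$ matrix $M$; if such an $M$ existed, the low-block-support coefficients of $D'$ would span all of $\rmH_{\k}(\F(\term{t}))\cong\F(\term{t})^{\k}$ and the conclusion would indeed be immediate. But this is false in general: the span of \emph{all} coefficients of $D'$ can have dimension far below $\k$ (e.g.\ if every $f_i$ has all $\k$ coordinates equal, every coefficient of $D'$ is a scalar multiple of the all-ones vector), so no selection of $\k$ coefficient columns can be invertible, ``generalised Vandermonde'' or not; your coordinate-wise leading-monomial fingerprints would then all coincide. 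The theorem asserts only that the low-block-support coefficients span the \emph{same} space as all coefficients, and your argument supplies no mechanism for expressing a full-block-support coefficient $\coef(e)(D')=c_1(e_1)\star\cdots\star c_\ell(e_\ell)$ in terms of low-block-support ones when the coefficient space is degenerate.

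The missing ingredient is the unshifted polynomial $D$. The paper's proof establishes a \emph{transfer equation} (Lemmas \ref{lem-struc-eqn}, \ref{lem-Teqn-mod}, \ref{lem-Teqn-depth3}): modulo $\cV_\ell(D')$, the matrix $Z$ of coefficients of $D$ equals $Z' N_{\cS'}T'N_{\cS}^{-1}$, where $Z'$ collects exactly the full-block-support coefficients of $D'$ and $T'$ is an exponentially large matrix of binomial coefficients that is independent of the coefficient data. Since $Z$ has rank at most $\k$, it has a large nullspace; one picks a basis $\cM$ (of size at most $\k$) of its columns greedily with respect to $\prec_w$, encodes the resulting dependencies in a matrix $A$ with $ZA=0$, and shows $|T'N_{\cS}^{-1}A|\ne0$ (Lemma \ref{lem-tna-det}). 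That invertibility forces $Z'\equiv0\pmod{\cV_\ell(D')}$. The greedy/binary-search combinatorics (Theorem \ref{thm-sub-prop}) is applied not to coefficient vectors but to the binomial matrix $T'$: one must find an $|\cS'|\times|\cS'|$ invertible minor avoiding the at most $\k$ \emph{marked} columns $\cM$, and the choice $\ell=2\ce{\log_2\k}+1$ enters through the count $2^{\ell-\ce{\lg\k}}-\k>0$ in that selection, not through a split of $[\ell]$ into two halves. Your proposal would need to be rebuilt around the nullspace of $Z$ and the invertibility of a minor of $T'$ rather than around invertibility of a $\k\times\k$ matrix of leading coefficients.
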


\subsection{Shift-\&-normalizing $D$ }
We investigate the effect of shift-\&-normalizing on $f_i$. Write, for $i\in[\ell]$, $f_i(\term{x}_{X_i}) =: \sum_{v_i\in\rmS(f_i)}z_{i,v_i}x^{v_i}$. (Note: $v_i\in\N^n$ and we will denote its $j$-th coordinate by $v_{i,j}\in\N$.) This yields, after shift-\&-normalize (division by {\em units} is allowed in $\rmH_{\k}(\F(\term{t}))$), 
\begin{eqnarray*}
f'_i(\term{x}) := f_i(\term{x}+\term{t})/f_i(\term{t}) =: \sum_{u_i\in\cS(f_i)}z'_{i,u_i}x^{u_i} \quad\in\; \rmH_{\k}(\F(\term{t}_{X_i}))[\term{x}_{X_i}].
\end{eqnarray*}
The last step defines 
\begin{equation}\label{eqn-zi-prime}
z'_{i,u_i} = \coef(u_i)(f'_i) = f_i(\term{t})^{-1}\star  \sum_{v_i\in\rmS(f_i)} z_{i,v_i} {v_i\choose u_i} t^{v_i-u_i}
\end{equation}
for all exponent vectors $u_i\in\rmS(f'_i)\subseteq\cS(f'_i)=\cS(f_i)$. The constant coefficient of $f'_i$, $z'_{i,0} = 1$.

\subsection{Transfer equation of a single polynomial}

Let $f$ be one of the polynomials $f_1,\ldots,f_\ell$ over $\rmH_{\k}(\F)$. Let $S:=S(f)$ and $\cS:=\cS(f)$. For $v\in \cS$ define $z_v:=\coef(v)(f)$, and $z'_v:=\coef(v)(f')$. Since $f$ is a unit, obviously, $S\ne\emptyset$ and $\cS\ne\emptyset$. Let $Z\in([\k]\times \cS\rightarrow\F)$ be such that: Its $v$-th column is the vector $z_v$. Note that exactly $\rms(f)$ of these columns are nonzero. Let $Z'\in([\k]\times\cS\rightarrow\F(\term{t}))$ be such that: Its $u$-th column is the vector $z'_u$. For any $\cC\subseteq\cS(f)$ we define a diagonal matrix $N_\cC\in(\cC\times\cC\rightarrow\F[\term{t}])$ as: Its $u$-th diagonal element is $t^u$. Let the {\em transfer matrix} (of $\Sigma\Pi$ formulas) $T\in(\cS\times\cS\rightarrow\F)$ be such that: Its $(v,u)$-th entry is ${v \choose u}$. We are ready to state the promised {\em transfer equation}.

\begin{lemma}[Transfer equation - primal]\label{lem-struc-eqn}
$ Z' = f(\term{t})^{-1}\star Z N_\cS T N_\cS^{-1}$. \emph{(Pf. in Appendix \ref{sec:appendix_lbs})}
\end{lemma}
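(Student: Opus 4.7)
\medskip
\noindent\textbf{Proof plan.} Starting from the explicit expression for the normalized coefficient
$$z'_u \;=\; f(\term{t})^{-1}\star \sum_{v\in\rmS(f)} z_v\,{v\choose u}\,t^{v-u},$$
already derived in equation~(\ref{eqn-zi-prime}) via Lemma~\ref{lem-binom-coeff} applied termwise to $f(\term{x}+\term{t})$, my goal is simply to recognise the right-hand side as a column of a matrix product. The plan is to compute the $u$-th column of $Z N_\cS T N_\cS^{-1}$ step by step and match it with the displayed sum. Because the Hadamard scaling by $f(\term{t})^{-1}$ acts column-by-column (and equally on every column), it commutes with ordinary matrix multiplication by item~(3) of Lemma~\ref{lem-matrices}, so it can be pulled out to the very front of the identity.

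First I would observe that, since $N_\cS$ is diagonal with $(v,v)$-entry $t^v$, right-multiplication by $N_\cS$ rescales the $v$-th column of $Z$ by $t^v$; that is, $(ZN_\cS)_{\cdot,v}=t^v z_v$. Applying $T$ and then $N_\cS^{-1}$, and using $T_{v,u}={v\choose u}$ together with the fact that $N_\cS^{-1}$ divides the $u$-th column by $t^u$, gives
$$\bigl(ZN_\cS\,T\,N_\cS^{-1}\bigr)_{\cdot,u} \;=\; \frac{1}{t^u}\sum_{v\in\cS} t^v z_v\,{v\choose u} \;=\; \sum_{v\in\cS} z_v\,{v\choose u}\, t^{v-u}.$$
Since $z_v=0$ for $v\in\cS\setminus\rmS$, the sum over $\cS$ agrees with the sum over $\rmS$, which matches the bracketed quantity in the formula for $z'_u$. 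Pre-multiplying (column-wise) by $f(\term{t})^{-1}$ via the Hadamard product then yields $z'_u$ exactly.

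Finally I would package this into matrix form: the identity above holds for every $u\in\cS$, so collecting columns gives $Z'=f(\term{t})^{-1}\star ZN_\cS T N_\cS^{-1}$, which is the claim. Along the way one should note that $N_\cS$ is invertible over $\F(\term{t})$ because each diagonal entry $t^u$ is a unit there, so the inverse $N_\cS^{-1}$ is meaningful, and that all the manipulations take place in $\rmH_{\k}(\F(\term{t}))$, where $f(\term{t})$ was assumed to be a unit precisely so that the division is legal. There is no real obstacle here beyond careful bookkeeping of indices and keeping track of where the $t^v$ and $t^{-u}$ scalings enter; the statement is essentially a repackaging of Lemma~\ref{lem-binom-coeff} into the language of the transfer matrix~$T$.
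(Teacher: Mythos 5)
Your proof is correct and follows essentially the same route as the paper's: both start from the coefficient formula of Equation~(\ref{eqn-zi-prime}), rewrite the sum (extended from $\rmS(f)$ to $\cS$ using $z_v=0$ off the support) as $Z$ applied to the $u$-th column of $N_\cS T N_\cS^{-1}$, and collect columns. Your extra remarks on the invertibility of $N_\cS$ over $\F(\term{t})$ and the unit assumption on $f(\term{t})$ are accurate but not a departure from the paper's argument.
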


For later use, we need a `modulo' version of this transfer equation. As shorthand denote $Z'_{[\k],\cC}$ by $Z'_{\cC}$, for any $\cC\subseteq\cS$. Note that the transfer matrix captures a transformation, from $Z$ to $Z'$, which is clearly invertible. Thus, $T$ is an invertible matrix. Define $T':=(T_{\cS,\cS})^{-1} \in (\cS\times \cS\rightarrow\F)$ and $\cS^*:=\cS\setminus\{0\}$. If $\cS^* = \emptyset$ then it only means that $f\in \rmH_{\k}(\F)$, and is invertible. Such an $f$ could be dropped from $D$ right in the beginning. From now on we assume $\cS^* \ne \emptyset$.
We deduce a modulo version now.

\begin{lemma}[Transfer equation - mod]\label{lem-Teqn-mod}
We have $f(\term{t})^{-1}\star Z \equiv Z'_{\cS^*} N_{\cS^*} T'_{\cS^*,\cS} N_\cS^{-1} \pmod{z'_0}$. Further,  $T'_{\cS^*,\cS}$ is strongly full. \emph{(Pf. in Appendix \ref{sec:appendix_lbs})}
\end{lemma}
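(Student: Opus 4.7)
The plan is to derive the congruence by solving Lemma~\ref{lem-struc-eqn} for $f(\term{t})^{-1}\star Z$ and isolating the contribution of the $0$-indexed column of $\cS$; strong fullness of $T'_{\cS^*,\cS}$ will then follow from the triangular structure of $T$ via a short coordinate check.

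First, I would right-multiply both sides of Lemma~\ref{lem-struc-eqn} by $N_\cS T' N_\cS^{-1}$ and apply Lemma~\ref{lem-matrices}(3) to slide the Hadamard factor past the matrix product. Since $T' = (T_{\cS,\cS})^{-1}$, the interior collapses and I obtain $f(\term{t})^{-1}\star Z = Z' N_\cS T' N_\cS^{-1}$. Using $\cS=\{0\}\sqcup\cS^*$ together with the fact that the $0$-th diagonal entry of $N_\cS$ equals $1$, I would split $Z' N_\cS$ column-wise into $(z'_0,\; Z'_{\cS^*}N_{\cS^*})$ and correspondingly split the rows of $T'$ to obtain
\[
Z' N_\cS T' N_\cS^{-1} \;=\; z'_0 \cdot T'_{0,\cS}\, N_\cS^{-1} \;+\; Z'_{\cS^*}\, N_{\cS^*}\, T'_{\cS^*,\cS}\, N_\cS^{-1}.
\]
Every column of the first summand is an $\F(\term{t})$-multiple of the single vector $z'_0$, hence lies in the submodule generated by $z'_0$; reducing modulo $z'_0$ kills it and leaves exactly the claimed expression.

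For strong fullness, I would fix any linear extension of the coordinate-wise partial order on $\cS$. Since ${v \choose u}=0$ unless $u\le v$ and ${v \choose v}=1$, the matrix $T$ is lower-triangular with unit diagonal in this ordering, and so is $T'=T^{-1}$. To prove $T'_{\cS^*,\cS\setminus\{u\}}$ invertible for an arbitrary $u\in\cS$, I would suppose a column dependency $\sum_{v\neq u}\alpha_v\, T'_{\cS^*,v}=0$, extend by setting $\alpha_u := 0$, and note that then $T'_{\cS^*,\cS}\,\alpha = 0$. Hence $T'\alpha$ has its only possibly-nonzero coordinate at row $0$, say $T'\alpha = c\, e_0$; inverting yields $\alpha = c\, T e_0$, and $T e_0$ is the $0$-th column of $T$, which equals $\mathbf{1}_\cS$ since ${v \choose 0}=1$ for every $v\in\cS$. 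The prescribed $\alpha_u=0$ then forces $c=0$, whence $\alpha=0$. The step most likely to look opaque at first is this strong-fullness argument, but the key observation $T e_0 = \mathbf{1}_\cS$ reduces it to a one-line coordinate check, so I do not foresee a serious obstacle.
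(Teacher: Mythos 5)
Your derivation of the congruence is essentially the paper's: both solve Lemma~\ref{lem-struc-eqn} for $f(\term{t})^{-1}\star Z$ using $T'=(T_{\cS,\cS})^{-1}$ and observe that the contribution of the $0$-indexed column of $Z'$ is a scalar multiple of $z'_0$ in every column, hence vanishes modulo $z'_0$. Your argument for strong fullness, however, is genuinely different and it is correct. The paper proves it indirectly: it instantiates the already-established congruence over a larger Hadamard algebra $\rmH_{\widetilde{\k}}(\F)$ with a ``generic'' $\widetilde{Z}$ whose $e$-th column is zero and whose remaining columns are independent modulo $\mathbf{1}$, and then reads off invertibility of $T'_{\cS^*,\cS\setminus\{e\}}$ from a rank count on the left-hand side. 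Your route is a direct linear-algebra computation on $T$ itself: a column dependency among $\{T'_{\cS^*,v}\}_{v\ne u}$ extends to $\alpha$ with $\alpha_u=0$ and $T'\alpha=c\,e_0$, whence $\alpha=c\,Te_0=c\,\mathbf{1}_{\cS}$ (using $\binom{v}{0}=1$ for all $v\in\cS$, and $0\in\cS$ since the cone is nonempty and downward closed), and $\alpha_u=0$ forces $c=0$. Your version is more elementary and self-contained --- it needs neither the congruence nor any auxiliary algebra, only the invertibility of $T$ (which your triangularity remark supplies) and the identity $Te_0=\mathbf{1}_\cS$ --- while the paper's version illustrates a ``genericity'' technique that exploits the independence of $T'$ from the particular $Z$. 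Both are valid proofs of the lemma.
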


\subsection{Transfer equation of $D$: Hadamard tensoring}

For two subsets $B_1,B_2\subset\N^n$ we define $B_1+B_2:=\{b_1+b_2 \,|\, b_1\in B_1, b_2\in B_2\}$, where the sum is coordinate-wise.
For $i\in[\ell]$, let $\cS_i := \cS(f_i)$ and $\cS^*_i := \cS_i\setminus\{0\}$. Define $\cS := \sum_{i\in[\ell]} \cS_i$ and $\cS' := \sum_{i\in[\ell]} \cS^*_i$. Note that there is a natural identification between $\cS'$ and $\times_{i\in[\ell]} \cS^*_i$. We will be implicitly using this. For $i\in[\ell]$, define $Z_i\in ([\k]\times \cS_i \rightarrow\F)$ such that: Its $u_i$-th column is the vector $z_{i,u_i} := \coef(u_i)(f_i)$. Let $Z\in ([\k]\times \cS \rightarrow\F)$ such that: Its $u$-th column is the vector $z_u := \coef(u)(D)$. Note that $Z = \circledast_{i\in[\ell]} Z_i$. For $i\in[\ell]$, define $Z'_i\in ([\k]\times \cS^*_i \rightarrow\F)$ such that: Its $v_i$-th column is the vector $z'_{i,v_i} := \coef(v_i)(f'_i)$. (Note that $Z'_i$ has fewer columns than $Z_i$.) Let $Z'\in ([\k]\times \cS' \rightarrow\F)$ such that: Its $v$-th column is the vector $z'_{i,v} := \coef(v)(D')$. Note that $Z' = \circledast_{i\in[\ell]} Z'_i$. For any $\cC\subseteq\cS$ we define a diagonal matrix $N_\cC\in(\cC\times\cC\rightarrow\F[\term{t}])$ as: Its $u$-th diagonal element is $t^u$. For $i\in[\ell]$, define $T'_i := T'_{\cS^*_i,\cS_i}$.

Let the {\em transfer matrix} (of $\Pi\Sigma\Pi$ formulas) $T'\in (\cS'\times \cS \rightarrow\F )$ be $\otimes_{i\in[\ell]}T'_i$.

\begin{lemma}[Tf. eqn. depth-$3$]\label{lem-Teqn-depth3}
$ D(\term{t})^{-1}\star Z \equiv  Z' N_{\cS'} T' N_{\cS}^{-1} \pmod{\cV_{\ell}(D')}$. \emph{(Pf. App. \ref{sec:appendix_lbs})}
\end{lemma}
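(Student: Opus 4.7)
The plan is to reduce to the single-polynomial transfer equation (Lemma \ref{lem-Teqn-mod}) by factoring the left-hand side as an $\ell$-fold Hadamard-tensor over the blocks, applying the single-polynomial result to each factor, and carefully bookkeeping how the main term and the ``error modulo $z'_{i,0}$'' aggregate.

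First I would verify the base factorization
\begin{equation*}
D(\term{t})^{-1}\star Z \;=\; \circledast_{i\in[\ell]}\bigl(f_i(\term{t})^{-1}\star Z_i\bigr),
\end{equation*}
which follows from $D(\term{t}) = \star_i f_i(\term{t})$ (the $X_i$'s being disjoint), the unit assumption, and the column-wise definitions of $Z = \circledast_i Z_i$ and of $\circledast$. Then Lemma \ref{lem-Teqn-mod} applied block-wise gives $f_i(\term{t})^{-1}\star Z_i = A_i + E_i$, with $A_i := Z'_i N_{\cS^*_i} T'_i N_{\cS_i}^{-1}$ and each column of $E_i$ in $\lrsp_{\F(\term{t})}\{z'_{i,0}\}$. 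Since $\circledast$ is distributive across column-wise sums in each argument, expanding yields
\begin{equation*}
\circledast_i(A_i+E_i) \;=\; \circledast_i A_i \;+\; \sum_{\emptyset\neq T\subseteq[\ell]} \circledast_i X_i^T,\qquad X_i^T:=\begin{cases}E_i & i\in T,\\ A_i & i\notin T.\end{cases}
\end{equation*}

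For the main term, iterating Lemma \ref{lem-matrices}(4) gives $\circledast_i A_i = (\circledast_i Z'_i)\cdot\otimes_i(N_{\cS^*_i}T'_i N_{\cS_i}^{-1})$; then Lemma \ref{lem-matrices}(1)(2) distributes $\otimes$ across the three factors. Because the $X_i$'s are disjoint, every $u\in\cS$ decomposes uniquely as $u=\sum_i u_i$ with $u_i\in\cS_i$, so under the bijection $\cS\cong\times_i\cS_i$ the diagonal tensor $\otimes_i N_{\cS_i}$ has $(u_1,\ldots,u_\ell)$-entry $\prod_i t^{u_i}=t^u$; that is, $\otimes_i N_{\cS_i}=N_\cS$, and likewise $\otimes_i N_{\cS^*_i}=N_{\cS'}$. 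Combined with $T'=\otimes_i T'_i$ and $Z'=\circledast_i Z'_i$, this identifies $\circledast_i A_i = Z' N_{\cS'} T' N_\cS^{-1}$, the right-hand side of the claim. For an error term with $T\neq\emptyset$, the $(u_1,\ldots,u_\ell)$-th column of $\circledast_i X_i^T$ is
\begin{equation*}
\prod^{\star}_{i\in T}(E_i)_{\cdot,u_i} \;\star\; \prod^{\star}_{i\notin T}(A_i)_{\cdot,u_i},
\end{equation*}
where each $(E_i)_{\cdot,u_i}$, $i\in T$, is a $\F(\term{t})$-scalar multiple of $z'_{i,0}$, and each $(A_i)_{\cdot,u_i}$, $i\notin T$, is a $\F(\term{t})$-linear combination of the columns $\{z'_{i,v_i}\}_{v_i\in\cS^*_i}$. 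Expanding, this column becomes a combination of Hadamard products of the form $\bigl(\prod^{\star}_{i\in T}z'_{i,0}\bigr)\star\bigl(\prod^{\star}_{i\notin T}z'_{i,v_i}\bigr) = \coef(e)(D')$ for $e=\sum_{i\notin T}v_i$, an exponent whose block-support is contained in $[\ell]\setminus T$ and therefore has block-weight $<\ell$. Hence every such column lies in $\cV_\ell(D')$, completing the argument.

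The main obstacle I anticipate is the bookkeeping in the main-term step: commuting $\circledast$ past matrix products via Lemma \ref{lem-matrices}(4) while correctly identifying the three $\otimes$-factors, and verifying that the bijection $\cS\cong\times_i\cS_i$ really does turn $\otimes_i N_{\cS_i}$ into the diagonal $N_\cS$ (and similarly for $\cS'$). Once that is in place, distributivity of $\circledast$ over addition and the single-polynomial lemma do the rest mechanically.
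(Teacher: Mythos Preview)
Your proposal is correct and follows essentially the same approach as the paper: apply the single-polynomial transfer equation block-wise, expand the $\ell$-fold Hadamard product into the main term plus cross terms, and observe that every cross term (having at least one factor equal to a scalar times $z'_{i,0}=1$) is a $\F(\term t)$-combination of coefficients of $D'$ with block-weight $<\ell$. The only cosmetic difference is that the paper works column-by-column, writing $f_i(\term t)^{-1}\star z_{i,u_i}=\alpha_i+C_{i,u_i}$ and expanding $\prod_i(\alpha_i+C_{i,u_i})$, whereas you package the same expansion at the matrix level via distributivity of $\circledast$ and a subset sum over $T\subseteq[\ell]$; the identification $\otimes_i N_{\cS_i}=N_\cS$ that you flag as the main obstacle is exactly the routine use of Lemma~\ref{lem-matrices}(1),(4) that the paper also invokes.
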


\subsection{Combinatorial juggernaut: To select columns of $T'$}

Recall that $T'$ has rows (resp.~columns) indexed by $\cS'$ (resp.~$\cS$) and has entries in $\F$. Let $\cM$ be some $\k>0$ columns that we intend to remove from $T'$; we call them {\em marked} and the others $\cS\setminus\cM$ are {\em unmarked}. We make the following claim about the submatrices of $T'$ not involving $\cM$.

\begin{theorem}[Invertible minor]\label{thm-sub-prop}
There exist unmarked columns $\cC\subseteq \cS$, $|\cC|=|\cS'|$, such that $|T'_{\cS',\cC}| \ne0$. \emph{(Proof in Appendix \ref{sec:appendix_lbs})}
\end{theorem}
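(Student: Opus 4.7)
The plan is to proceed by induction on the number of coordinates $\ell$ (with the inductive hypothesis that the theorem holds whenever $\ell \geq 2\lceil \log_2 \k \rceil + 1$), using the tensor factorization $T' = \otimes_i T'_i$ and the strong fullness of each factor. The base case $\ell = 1$ (whence $\k = 1$) is immediate: strong fullness of $T'_1$ ensures that $(T'_1)_{\cS_1^*, \cS_1 \setminus \cM}$ is invertible.

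For the inductive step, pick $v^* \in \cS_\ell$ maximizing $|\cM_v|$ over $v \in \cS_\ell$, where $\cM_v := \{u \in \cM : u_\ell = v\}$, and build $\cC := \bigsqcup_{v \in \cS_\ell \setminus \{v^*\}} A_v \times \{v\}$, where each $A_v \subseteq \cS_1 \times \cdots \times \cS_{\ell-1}$ of size $\prod_{i < \ell}|\cS_i^*|$ is chosen to avoid the projected fiber $\cM_v^\flat := \{(u_1,\ldots,u_{\ell-1}) : (u_1,\ldots,u_{\ell-1}, v) \in \cM\}$ and to make the corresponding submatrix of $T'_{<\ell} := \otimes_{i < \ell} T'_i$ invertible. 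We split into two cases. In \textbf{Case A} ($|\cM_{v^*}| \geq \lceil \k/2 \rceil$, halving), take all $A_v$ equal to a common $A$ chosen to avoid the union $\bigcup_v \cM_v^\flat$, which has size at most $\k - |\cM_{v^*}| \leq \lfloor \k/2 \rfloor$; the inductive hypothesis applied to $T'_{<\ell}$, with $\ell-1$ coordinates and $\lfloor \k/2 \rfloor$ marks, yields $A$, and $\cC = A \times (\cS_\ell \setminus \{v^*\})$ is a product whose minor factors as a tensor of two invertibles. In \textbf{Case B} ($|\cM_{v^*}| < \lceil \k/2 \rceil$, greedy), by maximality of $v^*$ each fiber already satisfies $|\cM_v^\flat| \leq |\cM_{v^*}| < \lceil \k/2 \rceil$, so the hypothesis is invoked independently for each $v$, yielding possibly distinct $A_v$'s. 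The arithmetic $\ell - 1 \geq 2 \lceil \log_2(\k/2) \rceil + 1 = 2\lceil\log_2\k\rceil - 1$ is enough in both cases.

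The invertibility of the (possibly non-product) matrix $T'_{\cS', \cC}$ follows from a direct spanning argument. Grouping the columns by the $\ell$-th coordinate $v$, the $(r_\ell, v)$-block becomes $T'_\ell[r_\ell, v] \cdot B_v$, with $B_v$ the inductively-produced invertible block of $T'_{<\ell}$ indexed by $A_v$. Since the columns of $T'_\ell$ outside index $v^*$ span $\F^{|\cS_\ell^*|}$ (strong fullness of $T'_\ell$) and each $B_v$'s columns span the ``lower'' factor $\F^{\prod_{i<\ell}|\cS_i^*|}$, every pure tensor $x \otimes y$ lies in the column span of $T'_{\cS', \cC}$; hence its columns span all of $\F^{|\cS'|}$, and a dimension count gives invertibility.

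The main obstacle is Case B, where $\cC$ is not a tensor product and one cannot argue invertibility by a determinant factorization as in Case A; the spanning argument above is the essential workaround, using strong fullness of $T'_\ell$ in concert with the invertibility of each inductively-produced $B_v$. The factor of 2 in $\ell = 2\lceil \log_2 \k \rceil + 1$ is the inductive slack needed to handle the two cases uniformly (each halving step consumes one coordinate, with one coordinate held in reserve), and the $+1$ accommodates the base case $\k = 1$.
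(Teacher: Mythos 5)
Your proof is correct, but it takes a genuinely different route from the paper's. The paper normalizes each $T'_i$ so that its nonzero pattern is exactly an indicator function, keeps the identity-like columns indexed by tuples in $\cS'$, and replaces the marked ones among them by a greedy selection from the mixed-support columns: the rows are sorted by coordinate frequency, and a binary-search argument shows that the $i$-th row still has at least $2^{\ell-\lceil\log_2 i\rceil}-\k>0$ admissible columns whose first nonzero entry lies in that row, yielding a lower-triangular invertible minor in a single pass. You instead induct on the number of tensor factors, fibering $\cM$ over the last coordinate, discarding the heaviest value $v^*$, and recursing with at most $\lfloor\k/2\rfloor$ marks per subproblem; your spanning argument for the assembled (non-product) minor is sound, since every column is a pure tensor $(B_v)_{\cdot,a}\otimes(T'_\ell)_{\cdot,v}$, each $B_v$ is invertible, and the columns of $T'_\ell$ off $v^*$ span $\F^{|\cS^*_\ell|}$ by strong fullness, so the square matrix has full column span. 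Two observations. First, your Case A is redundant: when $|\cM_{v^*}|\ge\lceil\k/2\rceil$ the remaining fibers sum to at most $\lfloor\k/2\rfloor$, so each is individually that small and the per-fiber recursion of Case B already applies. Second, since each halving of the mark count consumes only one tensor factor, your argument actually proves the theorem under the weaker hypothesis $\ell\ge\lceil\log_2\k\rceil+1$, a factor-two saving over the paper's $\ell=2\lceil\log_2\k\rceil+1$ (which the paper's count $2^{\ell-\lceil\log_2\k\rceil}-\k>0$ genuinely needs); this would propagate to a constant-factor improvement in the concentration parameter. The only loose end is the base of your recursion: besides $\ell=1$ with one mark, you should note the case of zero marks, where the required maximal invertible minor exists simply because a tensor product of strongly full matrices has full row rank.
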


\subsection{$T'$ on the nullspace of $Z$: Finishing Theorem \ref{thm:low-block-support} }

Recall that the columns of $Z$ are indexed by $\cS$. Think of these {\em ordered} by the weight vector $w$, as discussed in the beginning of this section. Pick a basis $\cM$, size at most $\k$, of the column vectors of $Z$ by {\em starting from the largest column}. Formally, $\cM$ gives the unique (once $\prec$ is fixed) basis such that for each $u$-th, $u\in \cS\setminus\cM$, column of $Z$ there exist columns $u_1,\ldots,u_r\in\cM$ spanning the $u$-th column, and $u\prec u_r\prec\cdots\prec u_1$. We think of the columns $\cM$ of $T'$ {\em marked}, and invoke Theorem \ref{thm-sub-prop} to get the $\cC\subsetneq\cS$. We define an $A\in (\cS\times\cC \rightarrow \F)$: If $a$ is the $v$-th column of $A$ then $Z\cdot a=0$ expresses the $\F$-linear dependence of $z_v$ on $\{z_{v'} \,|\, v'\in\cM, v\prec v'\}$; in particular, the {\em least} row where $a$ is nonzero is the $v$-th, the entry being $1$. Recall the transfer equation, Lemma \ref{lem-Teqn-depth3}, for the following.

\begin{lemma}[$T'$ on nullspace of $Z$]\label{lem-tna-det}
$|T' N_\cS^{-1} A|\ne0$. Further, the leading nonzero inverse-monomial in the determinant has the coefficient $|T'_{\cS',\cC}|$. \emph{(Proof in Appendix \ref{sec:appendix_lbs})}
\end{lemma}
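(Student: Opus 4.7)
The plan is to pinpoint the $\prec_w$-leading inverse-monomial of $\det(T'N_\cS^{-1}A)$; both parts of the lemma will follow once I show its coefficient is exactly $|T'_{\cS',\cC}|$, which is nonzero by Theorem~\ref{thm-sub-prop}. Note that $|\cC|=|\cS'|$ by that theorem, so $T'N_\cS^{-1}A$ is square and the determinant is well-defined.

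First I would unpack the $v$-th column ($v\in\cC$) of $T'N_\cS^{-1}A$. The greedy choice of $\cM$ (a basis of the columns of $Z$ taken largest-to-smallest under $\prec_w$) forces the $v$-th column $a_v$ of $A$ to satisfy $a_v[v]=1$, with every other nonzero entry $a_v[u]$ occurring at some $u\in\cM$ with $v\prec u$. Applying $N_\cS^{-1}$ and then $T'$, the $v$-th column of $T'N_\cS^{-1}A$ becomes
\[
t^{-v}\,T'_{\cS',v}\;+\;\sum_{u\in\cM,\; v\prec u}a_v[u]\,t^{-u}\,T'_{\cS',u},
\]
and since $v\prec u \Leftrightarrow t^{-u}\prec_w t^{-v}$, the first summand is the $\prec_w$-leading inverse-monomial term in that column, with coefficient vector $T'_{\cS',v}$.

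Expanding $\det(T'N_\cS^{-1}A)$ by column-multilinearity, the ``all-leading'' branch contributes $\det(T'_{\cS',\cC})\cdot\prod_{v\in\cC}t^{-v}=|T'_{\cS',\cC}|\cdot\prod_{v\in\cC}t^{-v}$. Every other branch replaces, in some nonempty set $\cC\setminus S$ of columns, the leading term by a trailing one indexed by $u(v)\in\cM$ with $u(v)\succ v$; its contribution is a scalar multiple of $\det(T'_{\cS',\cC'})$ with $\cC'=S\sqcup\{u(v):v\in\cC\setminus S\}$, carrying the inverse-monomial $\prod_{v\in S}t^{-v}\prod_{v\in\cC\setminus S}t^{-u(v)}$. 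Its $w$-weight equals
\[
\sum_{v\in S}w\!\cdot\!v\;+\;\sum_{v\in\cC\setminus S}w\!\cdot\!u(v)\;>\;\sum_{v\in\cC}w\!\cdot\!v,
\]
strictly, because $u(v)\succ v$ on a nonempty subset. Hence this inverse-monomial is $\prec_w\prod_{v\in\cC}t^{-v}$, so no such branch can cancel the all-leading contribution.

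Therefore $\prod_{v\in\cC}t^{-v}$ is the leading nonzero inverse-monomial in $\det(T'N_\cS^{-1}A)$, with coefficient $|T'_{\cS',\cC}|\ne0$; both claims of the lemma follow. The main delicate point is ensuring that no parasitic branch produces the same leading inverse-monomial, which rests on (i) the greedy construction of $\cM$ giving $A$ its $\prec_w$-``upper-triangular'' structure, and (ii) the multiplicativity of $\prec_w$ together with the separating weight vector $w$ fixed at the top of Section~\ref{sec:low-blk-supp} — distinct exponents in $\cS=\sum_i\cS_i$ have distinct $w$-weights, so the strict inequality above truly forces a strictly smaller inverse-monomial. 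Both items are immediate from the setup.
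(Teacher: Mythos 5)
Your proof is correct and follows essentially the same route as the paper's: decompose each column of $T'N_\cS^{-1}A$ into its $\prec_w$-leading term $t^{-v}T'_{\cS',v}$ plus strictly smaller terms (using the greedy, largest-first construction of $\cM$), then use column-multilinearity and the multiplicativity of the ordering to identify the leading coefficient of the determinant as $|T'_{\cS',\cC}|\ne 0$. You merely spell out the no-cancellation step more explicitly than the paper does; there is no substantive difference.
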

Finally, we use $A$ to finish the proof of our main structure theorem.

\begin{proof}[Proof of Theorem \ref{thm:low-block-support}]
From the transfer equation, Lemma \ref{lem-Teqn-depth3}, we recall 
$$D(\term{t})^{-1}\star Z \equiv  Z' N_{\cS'} T' N_{\cS}^{-1} \pmod{\cV_{\ell}(D')}.$$
Right-multiplying by $A$, we get
\begin{equation}\label{eqn-rta-inv}
0\ =\ D(\term{t})^{-1} \star (ZA) \equiv Z'  N_{\cS'} T' N_{\cS}^{-1} A \pmod{\cV_\ell(D')}.
\end{equation}
Since $T' N_{\cS}^{-1} A$ is invertible from Lemma \ref{lem-tna-det} and $N_{\cS'}$ is obviously invertible, we get 
$$Z' \equiv 0 \pmod{\cV_\ell(D')}.$$
(Here we do use that the matrices are over $\F(\term{t})$ and that $\cV_\ell(D')$ is an $\F(\term{t})$-vector space.) This immediately implies the first part of Theorem \ref{thm:low-block-support}, as $Z'$ collected exactly those coefficients of $D'$ that we a priori did not know in $\cV_\ell(D')$. The second part of the theorem follows easily as: (1) $\tau$ keeps $D(\term{t})$ a unit, and (2) $\tau$ corresponds to the correct term ordering $\preceq_w$. These two properties allow the above proof also work after applying $\tau$. 
\end{proof}

\section{Low-support rank-concentration} \label{sec:low-supp}

We will prove that a set-height-$H$ formula, after a `small' shift, begins to have `low'-support rank-concentration. The proof is by induction on the height of the formulas over Hadamard algebras. For this, we would need the following concepts. 

For $H > h \in\N$, let $\bt_h := \{t_{H-1}, \ldots, t_{h+1}, t_h\}$ be a set of formal variables and $\F(\bt_h)$ be the function field. These $\bt_h$-variables are different from the variables $\term{x}$ involved in the formula $C$. Let $\cR'_h := \rmH_{k^h}(\F(\bt_h))$ be a Hadamard algebra over $\F(\bt_h)$; $k^h = \dim_{\F(\bt_h)} \cR'_h$. 
Further, $\cR'_{h+1}[t_h]$ denotes the (univariate) polynomial ring over $\cR'_{h+1}$, and $\cR'_{h+1}(t_h)$ is the corresponding {\em ring of fractions}. ($\cR'_{h+1}(t_h)$ is basically $\rmH_{k^{h+1}}(\F(\bt_h))$.)
\vspace{0.05in}

\noindent \textit{Low-support shift for $\mathcal{C}_h(k, d, \lambda, \term{x})$} - Let $\tau_h$ be a map from $\F[\term{x}]$ to $\F(\bt_h)[\term{x}]$ defined as,
$$\tau_h: x_i \mapsto x_i + \alpha_{H-1, i}\, t^{a_{H-1, i}}_{H-1} + \cdots + \alpha_{h, i}\, t^{a_{h,i}}_h, \hspace{0.1in} \text{for $x_i \in \term{x}$},$$
$a_{H-1,i}, \ldots, a_{h,i} \in \Z^+$ and $\alpha_{H-1,i},\ldots,\alpha_{h,i} \in \F$. ($\tau_h$ fixes $\F$, i.e.~$\tau_h(c) = c$ for $c \in \F$.) In short, we will write $\tau_h: \term{x} \mapsto \term{x} + \bal_h\,\bt_h^{\ba_h}$. For $\ell_h \in \N$, the map $\tau_h$ (as above) is called an \emph{$\ell_h$-support shift} for the class of formulas $\mathcal{C}_h(k, d, \lambda, \term{x})$ if for every formula $C_h \in \mathcal{C}_h(k, d, \lambda, \term{x})$, the polynomial $\tau_h(C_h(\term{x})) = C_h(\term{x} + \bal_h\,\bt^{\ba_h}_h)$ is $\ell_h$-concentrated over $\cR'_h$.

For the rest of our discussion, we will fix $\ell_h$ as follows, for $H > h \geq 0$:
$$
\ell_h := 
\begin{cases}
(2H \lceil H\log_2 k \rceil)^{H-h-1} \cdot 2\ce{ H\log_2 (k\lambda) } + 1, & \text{if $\D$ is even,} \\
 (2H \lceil H\log_2 k \rceil)^{H-h} + 1, & \text{if $\D$ is odd (\& for $h=H$, $\ell_H := 2$).} 
\end{cases}
$$
The above setting satisfies the relation $\ell_h = (\ell_{h+1} - 1)H(\ell-1) + 1$, where $\ell := 2\lceil H\log_2 k \rceil + 1$, for every $H-1 > h \geq 0$ (and also for $h=H-1$ when $\D$ is odd). 

Recall Equation \ref{eqn:hadamard_h} that says - for each $h\in\{0,\ldots,H-1\}$ and $C_h$, there exists $\term{c}\in \rmH_k(\cR_h)$ such that $C_h = \term{c}^T\cdot D_h$. This section is dedicated to proving the following theorem.

\begin{theorem}[Low support suffices]\label{thm:low-support}
We can construct $\tau_0$ such that $\tau_0\circ D_0$ is $\ell_0$-concentrated over $\cR'_1[t_0]$, in time polynomial in $(d+n+\ell_0)^{\ell_0}$, where $n := |\term{x}|$. 
\end{theorem}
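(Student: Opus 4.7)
The plan is backward induction on $h$ from the bottom (namely $h = H$ for odd $\D$, $h = H-1$ for even $\D$) up to $h = 0$, at each step augmenting $\tau_{h+1}$ into $\tau_h$ by adjoining a fresh $t_h$-dependent shift term, and maintaining the invariant that $\tau_h \circ D_h$ is $\ell_h$-concentrated over the appropriate localization of $\cR'_{h+1}$. The base case is immediate: for odd $\D$ at $h = H$, $C_H$ is a linear polynomial, so $\ell_H = 2$-concentration holds automatically; for even $\D$ at $h = H-1$, each component of a factor $f_j$ of $D_{H-1}$ is a single monomial in $\term{x}_{X_j}$ (the trivial singleton partition imposed by the unconstrained bottom $\Pi_H$-layer), of monomial-weight $1$, so Theorem \ref{thm:low-block-support} applies directly to give block-support bound $\ell := 2\lceil H \log_2 k\rceil + 1$ and hence the claimed $\ell_{H-1}$-concentration.

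For the inductive step $h+1 \to h$, Lemma \ref{lem:rec-struc} places each factor $f_j$ of $D_h$ inside $\mathcal{C}_{h+1}(k,d,\lambda, \term{x}_{X_j})$, and the inductive hypothesis supplies $\tau_{h+1}$ making the Hadamard-lifted product associated with $f_j$ into an $\ell_{h+1}$-concentrated polynomial; in particular, every coefficient vector of $\tau_{h+1} \circ f_j$ (viewed in $\cR'_{h+1}$) lies in the span of coefficient vectors at exponents of support $< \ell_{h+1}$. This lets us pass to a low-monomial-weight surrogate $\tilde f_j$, supported on $\{e : \rms(e) < \ell_{h+1}\}$ and sharing the same coefficient column-space as $\tau_{h+1} \circ f_j$. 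I then apply Theorem \ref{thm:low-block-support} over $\cR'_{h+1}$ to the $t_h$-further-shifted product $\star_{j \in [d]} \tilde f_j$, obtaining low-block-support rank-concentration with block-bound $\ell$; the coincidence of column-spaces between $\tilde f_j$ and $\tau_{h+1} \circ f_j$ transports this back to $\tau_h \circ D_h$. Multiplying the block-bound by the per-block support bound $\ell_{h+1}-1$, and absorbing an additional factor of $H$ to account for the $H - h$ nested shift variables and the corresponding cone blow-up from Lemma \ref{lem-cone}, gives total-support bound $(\ell - 1)(\ell_{h+1} - 1)H = \ell_h - 1$, exactly matching the recursion used to define $\ell_h$.

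The main obstacle is carrying out the surrogate-polynomial step rigorously over the Hadamard algebra $\cR'_{h+1}$, which has zero divisors and therefore lies outside the stated hypotheses of Theorem \ref{thm:low-block-support}. The unit-projection argument preceding equation \ref{eqn-D-prime} must be applied coordinate-wise in $\cR'_{h+1}$ to remove zero divisors, the Kronecker-like map $\tau$ on shift variables must be chosen to avoid division-by-zero over $\cR'_{h+1}(t_h)$, and the transfer-matrix invertibility (Lemma \ref{lem-Teqn-mod}) together with the combinatorial column-selection (Theorem \ref{thm-sub-prop}) have to be re-verified in the presence of the previously introduced shift variables $t_{h+1}, \ldots, t_{H-1}$, using a multiplicative term order $\preceq_w$ that distinguishes the monomials arising from the nested shift. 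The shift parameters $\bal_h, \ba_h$ and weight vectors at each of the $H$ levels can be produced in time $(d+n+\ell_0)^{O(\ell_0)}$ by the degree-distinguishing procedure of Section \ref{sec:low-blk-supp}, and setting $h = 0$ at the end of the induction yields the claimed $\ell_0$-concentration of $\tau_0 \circ D_0$.
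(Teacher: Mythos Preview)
Your overall induction scheme matches the paper's, but there are two genuine gaps.

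\textbf{Base case (even $\D$).} At $h+1 = H-1$ the formulas in $\mathcal{C}_{H-1}$ are arbitrary sparse polynomials over $\cR_{H-1}$; the bottom $\Pi$-layer computes arbitrary monomials, so there is no partition structure to exploit and Theorem~\ref{thm:low-block-support} does not apply directly. The paper handles this via Lemma~\ref{lem:sparse-poly}: it lifts a sparse $f$ over $\rmH_\kappa(\F)$ to a vector $D$ over the \emph{larger} algebra $\rmH_{\kappa\cdot\rms(f)}(\F)$ whose coordinates are the individual terms $z_e x^e$; only then does $D$ factor as $g_1(x_1)\star\cdots\star g_n(x_n)$ into univariate (hence monomial-weight-$1$) pieces. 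The blow-up by $\rms(f)\le\lambda$ is exactly why $\ell_{H-1} = 2\lceil H\log_2(k\lambda)\rceil + 1$ carries a $\lambda$, which your claimed bound $\ell = 2\lceil H\log_2 k\rceil + 1$ omits.

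\textbf{Transporting concentration from the surrogate back to $\widehat{D}_h$.} ``Coincidence of column-spaces'' over $\F(\bt_{h+1})$ is not enough: you must show that the low-block-support coefficients of the $t_h$-shifted full product $\widehat{D}_{h,\ell}(\term{x}+\bal\,\bt)$ span the same space as those of the shifted truncated product $\widehat{E}_{h,\ell}(\term{x}+\bal\,\bt)$. The paper's key observation is that for each $v_j$ in the common basis $B_j$, the $t_h$-free part of $\widehat{z}'_{j,v_j}$ (resp.\ $\widetilde{z}'_{j,v_j}$) is exactly $\widehat{z}_{j,v_j}$, so the change-of-basis matrices $\widehat{M}', \widetilde{M}'$ satisfy $\widehat{M}' \equiv \widetilde{M}' \equiv I \pmod{t_h}$ and are therefore invertible over $\F(\bt_h)$. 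This yields $\widehat{Z}'_j = \widetilde{Z}'_j \cdot (\widetilde{M}'^{-1}\widehat{M}')$ with an invertible matrix indexed by $B_j$; Hadamard-tensoring over $j$ then carries the low-block-support span of $\circledast_j \widetilde{Z}'_j$ to that of $\circledast_j \widehat{Z}'_j$ while preserving block-support (since the index set $B_j$ is unchanged). Without the mod-$t_h$ invertibility, the transport does not go through.

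Two smaller points. The factor $H$ in $\ell_h = (\ell_{h+1}-1)H(\ell-1)+1$ is slack --- the paper actually proves $(\ell-1)(\ell_{h+1}-1)+1 < \ell_h$ concentration --- and is not a cone-blow-up correction. And your zero-divisor worry is largely misplaced: $\cR'_{h+1} = \rmH_{k^{h+1}}(\F(\bt_{h+1}))$ is a Hadamard algebra over the field $\F(\bt_{h+1})$, so Theorem~\ref{thm:low-block-support} applies with that base field. The real subtlety is choosing $\bal$ so that the substitution $t_{h,i}\mapsto \alpha_{h,i} t_h^{a_{h,i}}$ keeps each $\widehat{f}_j(\bal_{X_j}\bt_{X_j})$ a unit, which the paper handles separately via Lemma~\ref{lem:hitting-set-non-sparse-poly}.
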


\noindent \textit{Proof strategy ahead} - The idea is to construct the map $\tau_h$ by applying induction on height $H - h$ of the class $\mathcal{C}_h(k, d, \lambda, \term{x})$. By Equation \ref{eqn:hadamard_h},
\begin{equation*}
C_h(\term{x}) = c^T \cdot (f_1(\term{x}_{X_1}) \star\cdots\star f_d(\term{x}_{X_d})). 
\end{equation*}
From Lemma \ref{lem:rec-struc}, $f_j(\term{x}_{X_j}) \in \mathcal{C}_{h+1}(k,d, \lambda,  \term{x}_{X_j})$. By definition, $\tau_{h+1}:x_i \mapsto x_i + \alpha_{H-1, i}\, t^{a_{H-1, i}}_{H-1} + \cdots + \alpha_{h+1, i}\, t^{a_{h+1,i}}_{h+1}$ is an $\ell_{h+1}$-support shift for $\mathcal{C}_{h+1}(k,d, \lambda,  \term{x}_{X_j})$ for every $1 \leq j \leq d$. Here is where we use induction on height $H-h$: We will build the map $\tau_h$ from the inductive knowledge of $\tau_{h+1}$. Basically, we will show that it is possible to efficiently compute $a_{h,1}, \ldots, a_{h,n} \in \Z^{+}$ and $\alpha_{h,1}, \ldots, \alpha_{h,n} \in \F$ such that $\tau_h: x_i \mapsto \tau_{h+1}(x_i) + \alpha_{h,i}\,t^{a_{h,i}}_h$ is an $\ell_h$-support shift for $\mathcal{C}_h(k, d, \lambda, \term{x})$. \\

\noindent \textit{The proof of Theorem \ref{thm:low-support}.} The proof proceeds by induction on height $H-h$ of the class $\mathcal{C}_h(k, d, \lambda, \term{x})$ (in other words, {\em reverse} induction on $h$). The induction hypothesis is that $\tau_{h+1}$, an $\ell_{h+1}$-support shift for the class $\mathcal{C}_{h+1}(k, d, \lambda, \term{x})$, can be constructed in time polynomial in $(d+n+\ell_{h+1})^{\ell_{h+1}}$, where $n := |\term{x}|$. Overall this means, by varying $h\in[0,...,H-1]$, we get a hitting-set of size polynomial in $\Pi_{h=0}^{H-1}{(d+n+\ell_h)^{\ell_h}} \le (d+n+\ell_0)^{\sum_h\ell_h} < (d+n+\ell_0)^{2\ell_0}$. We discuss the base case and the inductive step in separate detail. Keep in mind that $f_j(\term{x}_{X_j}) \in \mathcal{C}_{h+1}(k,d, \lambda,  \term{x}_{X_j})$.

\subsection{Base case ($h+1\ge H-1$)}

The base case is when $H-h-1 = 1$ or $0$, i.e.~$f_j(\term{x}_{X_j})$'s are sparse polynomials or linear polynomials over $\cR_{h+1}$, depending on whether $\D$ is even or odd, respectively. These two base cases have varying level of difficulty.
If $H-h-1 = 0$ then $\ell_{h+1} = \ell_{H} = 2$, hence taking $\tau_{H}$ as the identity map suffices (since $f_j(\term{x}_{X_j})$'s are linear polynomials) as an $\ell_H$-support shift for the class $\mathcal{C}_H(k, d, \lambda, \term{x})$. If $H-h-1 = 1$ then $f_j(\term{x}_{X_j})$'s are sparse polynomials. We first prove an, independently interesting, property.

\begin{lemma}[Sparse polynomial]\label{lem:sparse-poly}
Let $f \in \rmH_{\k}(\F)[\term{x}]$ be a polynomial with degree bound $\delta$. Let $\ell' := 1 + \min\{2\ce{\log_2 (\k\cdot\rms(f))}, \mu(f)\}$. We can construct a map $\sigma:x_i\mapsto x_i+t^{b_i}$, in time polynomial in $(\delta+n+\ell')^{\ell'}$, such that $\sigma(f)$ is $\ell'$-concentrated over $\rmH_{\k}(\F(t))$. \emph{(Ap. \ref{sec:appendix_ls})}
\end{lemma}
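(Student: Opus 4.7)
The bound $\ell'=1+\min\{2\lceil\log_2(\k\cdot\rms(f))\rceil,\mu(f)\}$ splits into two cases, which I plan to handle separately. When $\ell'=\mu(f)+1$, every monomial of $f$ has weight at most $\mu(f)<\ell'$, so taking $\sigma$ to be the identity map already makes $\sigma(f)=f$ trivially $\ell'$-concentrated. Henceforth assume the interesting regime $\ell'-1=2\lceil\log_2(\k m)\rceil$, where $m:=\rms(f)$.

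The strategy is to mimic the transfer machinery of Section \ref{sec:low-blk-supp}, specialized from a product $f_1\star\cdots\star f_\ell$ down to a single polynomial $f$. Set $\sigma:x_i\mapsto x_i+t^{b_i}$ for shift parameters $b=(b_1,\ldots,b_n)$ to be chosen. After projecting out zero coordinates of $\rmH_{\k}(\F(t))$ so that $f(t^b)$ is a unit, specialize the primal transfer equation of Lemma \ref{lem-struc-eqn} via $t_j\mapsto t^{b_j}$ to obtain
\[
Z' \;=\; f(t^b)^{-1}\star Z\cdot N_{\cS}\,T\,N_{\cS}^{-1},
\]
where $T=[\binom{v}{u}]_{v,u\in\cS(f)}$ and $N_{\cS}$ is diagonal with $t^{b\cdot u}$ entries. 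Since $f(t^b)$ and $N_{\cS}$ are units, $\ell'$-concentration of $\sigma(f)$ over $\rmH_{\k}(\F(t))$ is equivalent to showing that $\rk_{\F(t)}\bigl(Z\cdot T_{\cS(f),U}\bigr)=\rk_{\F}Z$, where $U:=\{u\in\cS(f):\rms(u)<\ell'\}$.

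Let $r:=\rk_{\F}Z\le\min(\k,m)$ and fix a basis $\{v_1,\ldots,v_r\}\subseteq\rmS(f)$ for the columns of $Z$. The technical heart will be a combinatorial lemma asserting the existence of $u_1,\ldots,u_r\in\cS(f)$, each with $\rms(u_i)\le\lceil\log_2(\k m)\rceil$, such that the $r\times r$ matrix $[\binom{v_j}{u_i}]_{i,j}$ is nonsingular over $\F$. Following the spirit of Theorem \ref{thm-sub-prop}, one constructs the $u_i$ by a greedy/binary-search argument: since only $r$ basis vectors need to be separated, each $v_j$ can be singled out by a sub-exponent $u_j\le v_j$ of weight at most $\lceil\log_2 r\rceil\le\lceil\log_2(\k m)\rceil$. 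The remaining factor of two in $\ell'$ is absorbed by a Kronecker-like weight $w\in\N^n$ (analogous to the map $\tau$ of Section \ref{sec:low-blk-supp}) such that under $t_j\mapsto t^{w_j}$ the leading monomial of the resulting determinant over $\F(t)$ survives nonzero. Enumerating $b=w$ over an explicit family of size $(\delta+n+\ell')^{O(\ell')}$ then yields a valid shift within the claimed time bound.

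The main obstacle will be the combinatorial lemma producing distinguishing witnesses $u_i$ of weight only $O(\log \k m)$, rather than the naive $O(\log n)$; this parallels the juggernaut of Theorem \ref{thm-sub-prop} but demands a sharper guarantee, since we need concentration under the full support norm $\rms$, not merely the coarser block-support. Verifying that the Kronecker specialization preserves nonsingularity in $\F(t)$ is the other delicate point, handled by the same term-ordering arguments already developed in Section \ref{sec:low-blk-supp}.
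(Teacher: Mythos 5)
Your handling of the case $\ell'=\mu(f)+1$ matches the paper. For the main case, however, you take a genuinely different route from the paper, and the route has a gap at exactly the point you flag as "the technical heart." The paper does \emph{not} prove any new combinatorial statement about the single-polynomial transfer matrix $T=[\binom{v}{u}]$ on $\cS(f)$. Instead it blows up the algebra: writing $f=\sum_{e\in\rmS(f)}z_ex^e$, it forms the vector $D$ with $e$-th entry $z_ex^e$, viewed as a polynomial over $\rmH_{\rms(f)}(\rmH_{\k}(\F))=\rmH_{\k\cdot\rms(f)}(\F)$. Because each entry of $D$ is a single monomial, $D$ factors as $g_1(x_1)\star\cdots\star g_n(x_n)$, a product of \emph{univariate} polynomials over the larger algebra, for which block-support coincides with support. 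Theorem \ref{thm:low-block-support} applied with $\kappa$ replaced by $\k\cdot\rms(f)$ (this is precisely where $2\lceil\log_2(\k\cdot\rms(f))\rceil$ comes from), together with the ``key argument'' induction over $\ell'$-subsets of variables, then gives $\ell'$-concentration of $\sigma\circ D$, and $f=1^T\cdot D$ transfers it back down to $\rmH_{\k}(\F)$. All the combinatorics is inherited from Theorem \ref{thm-sub-prop}; nothing new is proved.

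Your direct approach instead requires the unproven lemma: for the $r\le\k$ basis exponents $v_1,\ldots,v_r$ there exist $u_1,\ldots,u_r\in\cS(f)$ of support $O(\log(\k\cdot\rms(f)))$ with $\det[\binom{v_j}{u_i}]\ne0$ over $\F$. The sketch you give --- each $v_j$ can be ``singled out'' by a sub-exponent of weight $\le\lceil\log_2 r\rceil$ --- does not establish nonsingularity of the joint $r\times r$ matrix: separating the $v_j$ individually gives no triangular or full-rank structure, and over small characteristic the entries $\binom{v}{u}$ can vanish by Lucas' theorem, so there is no clean $0/1$ indicator of non-zeroness as in the proof of Theorem \ref{thm-sub-prop} (there, the indicator $\ve$ and the lower-triangularization rely on each tensor factor being a row-reduced \emph{strongly full} matrix with an identity block, and the binary search runs over the $\ell$ block coordinates --- none of which is available for a single $n$-variate sparse polynomial). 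Two smaller issues: your rank condition drops the diagonal factor $N_{\cS}$, so the correct statement is $\rk_{\F(t)}(ZN_{\cS}T_{\cS,U})=\rk_{\F}Z$, which is where the choice of $b$ actually matters; and your witnesses must lie in the cone $\cS(f)$ and survive the Kronecker substitution, which your sketch does not address. As written, the proposal is not a proof; if you want to salvage the direct route you must actually prove the low-support nonsingular-minor lemma over arbitrary fields, whereas the paper's algebra-enlargement trick lets you avoid it entirely.
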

Now we apply the lemma to the sparse polynomial $f_j(\term{x}_{X_j})$, which has the sparsity parameter $\lambda$.
Hence we define $\tau_{h+1} = \tau_{H-1}: x_i \mapsto x_i + t_{H-1}^{b_i}$ (in other words, $a_{H-1,i}:=b_i$). This, by Lemma \ref{lem:sparse-poly}, ensures that the concentration parameter is $2\ce{ \log_2 (k^{H-1}\cdot\lambda) } + 1 \le$ 
$2\ce{ H\log_2 (k\lambda) } + 1$ $= \ell_{H-1}$ $= \ell_{h+1}$. Finally, $\tau_{H-1}$ is an $\ell_{H-1}$-support shift for the class $\mathcal{C}_{H-1}(k, d, \lambda, \term{x})$, and it can be constructed in time polynomial in $(d+n+\ell_{H-1})^{\ell_{H-1}}$.
\vspace{-0.05in}

\subsection{Induction ($h+1$ to $h$)}

Let $\widehat{f}_j(\term{x}_{X_j}) := \tau_{h+1}(f_j(\term{x}_{X_j}))$. Then, 
\begin{equation*}
\widehat{D}_h(\term{x}):= \tau_{h+1}(D_h(\term{x})) =  \widehat{f}_1(\term{x}_{X_1}) \star\cdots\star \widehat{f}_d(\term{x}_{X_d}),
\end{equation*}
where every $\widehat{f}_j$ is $\ell_{h+1}$-concentrated over $\cR'_{h+1}$ (by induction hypothesis). Let $\bt := \{t_{h,1}, \ldots, t_{h,n} \}$ be a set of `fresh' formal variables. (We will keep in mind that the $\bt$-variables would be eventually set as univariates in a variable $t_h$.) As before in Eqn. \ref{eqn-D-prime},
\begin{equation*}
\widehat{D}_h(\term{x} + \bt) = \prod_{j \in [d]}{\widehat{f}_j(\term{x}_{X_j} + \bt_{X_j})} = \prod_{j \in [d]}{\widehat{f}_j(\bt_{X_j}) \star \widehat{f}'_j(\term{x}_{X_j})} = \widehat{D}_h(\bt) \star \widehat{D}'_h(\term{x}).
\end{equation*}
In the same spirit as Theorem \ref{thm:low-block-support}, we would like to show that 
$\widehat{D}'_h(\term{x})\equiv0 \pmod{\cV_{\ell}(\widehat{D}'_h)}$, 
where $\cV_{\ell}(\widehat{D}'_h)\ :=\ \lrsp_{\F(\bt_{h+1}, \term{t})} \left\{ \coef(e)(\widehat{D}'_h) \,|\, e\in\N^n, \rmbs(e)< \ell \right\}$, and $\ell = 2 \lceil H\log_2 k\rceil + 1$. As before (see `key argument' in Lemma \ref{lem:sparse-poly}), it is sufficient to prove the typical case (i.e.~product of the first $\ell$ polynomials), 
$ \widehat{D}'_{h, \ell}(\term{x}):= \prod_{j \in [\ell]}{\widehat{f}'_j(\term{x}_{X_j})} \equiv0 \pmod{\cV_{\ell}(\widehat{D}'_{h, \ell})}$
Towards this, we define the {\em truncated} polynomials, $\widehat{g}_j(\term{x}_{X_j}):= \sum_{e:s(e) < \ell_{h+1}}{\coef(e)(\widehat{f}_j)\,\term{x}^e_{X_j}}$ and let the corresponding product be
$\widehat{E}_h(\term{x}):= \prod_{j \in [d]}{\widehat{g}_j(\term{x}_{X_j})}$.
Sparsity of $\widehat{g}_j(\term{x}_{X_j})$ over $\cR'_{h+1}$ is bounded by $(d^{H-h-1}+n+\ell_{h+1})^{\ell_{h+1}}$ $=: \lambda_h$. Mimicking the notations on $\widehat{D}_h$ let,
\begin{equation*}
\widehat{E}_h(\term{x} + \bt) = \prod_{j \in [d]}{\widehat{g}_j(\term{x}_{X_j} + \bt_{X_j})} = \widehat{E}_h(\bt) \star \widehat{E}'_h(\term{x}) \hspace{0.1in} \text{and} \hspace{0.1in} \widehat{E}'_{h, \ell}(\term{x}):= \prod_{j \in [\ell]}{\widehat{g}'_j(\term{x}_{X_j})}.
\end{equation*}
By Theorem \ref{thm:low-block-support}, we can find $a_{h, 1}, \ldots a_{h, n} \in \Z^{+}$ in time 
$(d\lambda_h)^{O(\ell)} =$ $(d+n+\ell_{h})^{O(\ell_{h})}$ such that by setting $t_{h, i} = \alpha_{h,i}\, t_h^{a_{h,i}}$ (any $\alpha_{h,i} \in \F\setminus\{0\}$ works), where $t_h$ is a `fresh' formal variable, we can ensure that the following is satisfied: 
\begin{equation} \label{eqn:trunc_l1}
\widehat{E}'_{h, \ell}(\term{x}) \equiv0 \pmod{\cV_{\ell}(\widehat{E}'_{h, \ell})}.
\end{equation} 
The claim is that the same setting $t_{h, i} = \alpha_{h,i}\, t_h^{a_{h,i}}$ (now with carefully chosen $\alpha_{h,i}$'s) also ensures that $\widehat{D}'_{h, \ell}(\term{x}) \equiv0 \pmod{\cV_{\ell}(\widehat{D}'_{h, \ell})}$. Consequently, $\widehat{D}'_h$ is $(\ell-1)(\ell_{h+1}-1)+1 < \ell_h$ concentrated over $\cR_{h+1}'(t_h)$. This is what we argue next. Equation \ref{eqn:trunc_l1} implies
\begin{equation} \label{eqn:trunc_l2}
\widehat{E}_{h, \ell}(\term{x} + \bal\,\bt) = \prod_{j \in [\ell]}{\widehat{g}_j(\term{x}_{X_j} + \bal_{X_j}\,\bt_{X_j})} = \widehat{E}_{h, \ell}(\bal\,\bt) \star \widehat{E}'_{h, \ell}(\term{x}) \equiv0 \pmod{\cV_{\ell}(\widehat{E}_{h, \ell}(\term{x} + \bal\,\bt))},
\end{equation}
where (reusing symbol) $\bt := (t_h^{a_{h,1}}, \ldots, t_h^{a_{h,n}})$ and $\bal := (\alpha_{h,1},\ldots,\alpha_{h,n})$. Define, $\widehat{D}_{h, \ell}(\term{x}):= \prod_{j=1}^{\ell}{\widehat{f}_j(\term{x}_{X_j})}.$ We need to take a closer look at how the coefficients of $\widehat{D}_{h, \ell}(\term{x})$, $\widehat{D}_{h, \ell}(\term{x} + \bal\,\bt)$, $\widehat{E}_{h, \ell}(\term{x})$ and $\widehat{E}_{h, \ell}(\term{x} + \bal\,\bt)$ are related to each other. Towards this, define:
\begin{eqnarray*}
\widehat{z}_{j, u_j} &:=& \coef(u_j)(\widehat{f}_j(\term{x}_{X_j})) \in \cR'_{h+1}, \\
\widehat{z}'_{j, u_j} &:=& \coef(u_j)(\widehat{f}_j(\term{x}_{X_j} + \bal_{X_j}\,\bt_{X_j})) \in \cR'_{h+1}[t_h], \\
\widetilde{z}_{j, u_j} &:=& \coef(u_j)(\widehat{g}_j(\term{x}_{X_j})) \in \cR'_{h+1}; \text{ equals } \widehat{z}_{j, u_j} \text{ if $u_j \in S(\widehat{g}_j)$}, \\
\widetilde{z}'_{j, u_j} &:=& \coef(u_j)(\widehat{g}_j(\term{x}_{X_j} + \bal_{X_j}\bt_{X_j})) \in \cR'_{h+1}[t_{h}].
\end{eqnarray*}
Let,
\begin{eqnarray*}
\widehat{B}_j &:=& \{u_j: \widehat{z}_{j, u_j} \text{ is in the } \F(\bt_{h+1}) \text{-basis of the coefficients of } \widehat{f}_j\} \text{ and } \\
\widetilde{B}_j &:=& \{u_j: \widetilde{z}_{j, u_j} \text{ is in the } \F(\bt_{h+1}) \text{-basis of the coefficients of } \widehat{g}_j\}
\end{eqnarray*}
with respect to some fixed basis that comprises coefficients of monomials of as low support as possible.
Note that $\widehat{B}_j = \widetilde{B}_j =: B_j$, as $\widehat{f}_j$ is $\ell_{h+1}$-concentrated over $\cR'_{h+1}$.
 
The crucial observation is that, for any $v_j \in B_j$, $\widehat{z}'_{j, v_j}$ gets a  $t_h$-free contribution only from the monomial $x^{v_j}$, thus, its basis representation looks like:
$$
\widehat{z}'_{j, v_j} = (1 + a(v_j, v_j)) \cdot \widehat{z}_{j, v_j} + \sum_{u_j \in B_j \backslash\{v_j\} }{a(u_j, v_j) \cdot \widehat{z}_{j, u_j}}, $$ 
where $a$'s are in $\F(\bt_{h+1})[t_h]$ and $t_h$ divides each $a(\cdot,v_j)$. Similarly, 
$$
\widetilde{z}'_{j, v_j} = (1 + b(v_j, v_j)) \cdot \widehat{z}_{j, v_j} + \sum_{u_j \in B_j \backslash\{v_j\} }{b(u_j, v_j) \cdot \widehat{z}_{j, u_j}}, $$
where $b$'s are in $\F(\bt_{h+1})[t_h]$ and $t_h$ divides each $b(\cdot,v_j)$.
Now define the following matrices:
\begin{eqnarray*}
\widehat{Z}_{j} \in ([k^{h+1}] \times B_j \rightarrow \F(\bt_{h+1})) &;& \text{with $u_j$-th column $\widehat{z}_{j, u_j}$}, \\
\widehat{Z}'_{j} \in ([k^{h+1}] \times B_j \rightarrow \F(\bt_{h})) &;& \text{with $u_j$-th column $\widehat{z}'_{j, u_j}$}, \\
\widetilde{Z}'_{j} \in ([k^{h+1}] \times B_j \rightarrow \F(\bt_{h})) &;& \text{with $u_j$-th column $\widetilde{z}'_{j, u_j}$}.
\end{eqnarray*} 
From the above crucial observation, 
\begin{equation} \label{eqn:crucial_obs1}
\widehat{Z}'_{j} = \widehat{Z}_{j} \cdot \widehat{M}' \hspace{0.1in} \text{ and } \hspace{0.1in} \widetilde{Z}'_{j} = \widehat{Z}_{j} \cdot \widetilde{M}',
\end{equation}
where $\widehat{M}', \widetilde{M}' \in (B_j \times B_j \rightarrow \F(\bt_{h+1})[t_h])$ with rows indexed by $u_j \in B_j$ and columns indexed by $v_j \in B_j$. The $(u_j, v_j)$-th entry of $\widehat{M}'$ contains $a(u_j, v_j)$ if $u_j \neq v_j$, otherwise $1 + a(u_j, v_j)$ if $u_j = v_j$. Similarly, the $(u_j, v_j)$-th entry of $\widetilde{M}'$ contains $b(u_j, v_j)$ if $u_j \neq v_j$, otherwise $1 + b(u_j, v_j)$ if $u_j = v_j$. Note that both $\widehat{M}'$ and $\widetilde{M}'$ are invertible over $\F(\bt_{h+1})(t_h)$ as $\det(\widehat{M}') \equiv \det(\widetilde{M}') \equiv 1 \pmod{t_h}$. Therefore,
\begin{equation} \label{eqn:crucial_obs2}
\widehat{Z}'_{j} = \widetilde{Z}'_{j} \cdot (\widetilde{M}'^{-1} \widehat{M}') \hspace{0.1in} \text{ and } \hspace{0.1in } \widetilde{Z}'_{j} = \widehat{Z}'_{j} \cdot (\widetilde{M}'^{-1} \widehat{M}')^{-1}.
\end{equation}

Now observe that any coefficient of $\widehat{D}_{h,\ell}(\term{x} + \bal\,\bt)$ is an $\F(\bt_h)$-linear combination of the columns of $\circledast_{j \in [\ell]}{\widehat{Z}_{j}}$ (by the definition of $B_j$), which by Equation \ref{eqn:crucial_obs1} (\& Lemma \ref{lem-matrices}-(4)) is an $\F(\bt_h)$-linear combination of the columns of $\circledast_{j \in [\ell]}{\widehat{Z}'_{j}}$ - this in turn is an $\F(\bt_h)$-linear combination of the columns of $\circledast_{j \in [\ell]}{\widetilde{Z}'_{j}}$ (by Equation \ref{eqn:crucial_obs2}). By Equation \ref{eqn:trunc_l2}, any $\F(\bt_h)$-linear
combination of the columns of $\circledast_{j \in [\ell]}{\widetilde{Z}'_{j}}$ can be expressed as an $\F(\bt_h)$-linear
combination of those columns $u$ of $\circledast_{j \in [\ell]}{\widetilde{Z}'_{j}}$ for which $\rmbs{(u)} < \ell$, which in turn can be expressed as an $\F(\bt_h)$-linear combination of those columns $u$ of $\circledast_{j \in [\ell]}{\widehat{Z}'_{j}}$ for which $\rmbs{(u)} < \ell$ (by Equation \ref{eqn:crucial_obs2} again). In other words, we have shown the following:
$\widehat{D}_{h, \ell}(\term{x} + \bal\,\bt) \equiv0 \pmod{\cV_{\ell}(\widehat{D}_{h, \ell}(\term{x} + \bal\,\bt))}$.
This would imply that
$\widehat{D}'_{h, \ell}(\term{x}) \equiv0 \pmod{\cV_{\ell}(\widehat{D}'_{h, \ell})}$,
if we choose $\bal$ so that the map $t_{h,i} \mapsto \alpha_{h,i}\,t_h^{a_{h,i}}$ ensures that $\widehat{f}_j(\bal_{X_j}\,\bt_{X_j})^{-1}$ is \emph{well-defined} in $\cR'_{h+1}(t_h)$. Such an $\bal$ can be constructed, by Lemma \ref{lem:hitting-set-non-sparse-poly}, in time polynomial in $\lambda_h =$ $(d^{H-h-1}+n+\ell_{h+1})^{\ell_{h+1}}$.
Therefore, $\tau_h: x_i \mapsto \tau_{h+1}(x_i) + \alpha_{h,i}\,t_h^{a_{h,i}}$ is such that $\tau_h(D_h(\term{x}))$ is $\ell_h$-concentrated over $\cR'_{h+1}[t_h]$. Since $C_h(\term{x}) = c^T \cdot D_h(\term{x})$, hence $\tau_h(C_h(\term{x}))$ is $\ell_h$-concentrated over $\cR'_h$. 
This finishes the construction of $\tau_h$, given $\tau_{h+1}$, in time $(d+n+\ell_{h})^{O(\ell_{h})}$. \qed



\begin{lemma}[Preserve invertibility]\label{lem:hitting-set-non-sparse-poly}
Let $f \in \rmH_{\k}(\F)[\term{x}]$ be a polynomial with degree bound $\delta$. Assume that $f$ is $\ell'$-concentrated over $\rmH_{\k}(\F)$, and that $f^{-1} \in \rmH_{\k}(\F(\term{x}))$. 
Then, we can contruct an $\bal\in\F^n$, in time polynomial in $\k (\delta+n+\ell')^{\ell'}$, such that $f(\bal)^{-1} \in \rmH_{\k}(\F)$.
\end{lemma}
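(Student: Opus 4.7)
The plan is to exploit the $\ell'$-concentration of $f$ to reduce the invertibility of $f(\bal)$ to simultaneous non-vanishing of $\k$ sparse polynomials, hit these with a Shpilka--Volkovich-like low-weight PIT generator, and then lift the result to $f$ by a Kronecker-type weight shift, mirroring the shift-and-separate machinery of Section~\ref{sec:low-blk-supp}.

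Write $f = \sum_e c_e\,\term x^e$ with $c_e\in\F^{\k}$, and set $\hat f := \sum_{\rms(e)<\ell'} c_e\,\term x^e$. By $\ell'$-concentration every $c_e$ is an $\F$-combination of $\{c_{e'} : \rms(e')<\ell'\}$; projecting to the $i$-th coordinate gives $f_i\equiv 0$ iff $\hat f_i\equiv 0$, so invertibility $f^{-1}\in\rmH_\k(\F(\term x))$ forces each $\hat f_i$ to be a non-zero polynomial of degree $\le\delta$, monomial-weight $<\ell'$, and sparsity $\le s:=(\delta+n+\ell')^{\ell'}$. A Shpilka--Volkovich-like generator $\Psi=\Psi^{(\ell'-1)}:\F^{2(\ell'-1)}\to\F^n$, whose image at integer grid points consists of vectors of support $\le\ell'-1$, has the property that $g(\Psi)\ne 0$ for every non-zero polynomial $g$ of monomial-weight $<\ell'$. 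Hence every $\hat f_i(\Psi)\ne 0$ in $2(\ell'-1)$ variables of degree $\le n\delta$, and a grid search over an appropriate subset of $\F^{2(\ell'-1)}$ of size polynomial in $\k(\delta+n+\ell')^{\ell'}$ locates $(\vec y^\star,\vec z^\star)$ at which the product $\prod_{i\in[\k]}\hat f_i(\Psi(\vec y^\star,\vec z^\star))\ne 0$. Set $\bal_0:=\Psi(\vec y^\star,\vec z^\star)\in\F^n$, so that $S(\bal_0)\subseteq[n]$ has size $\le\ell'-1$ and $\hat f_i(\bal_0)\ne 0$ for every $i\in[\k]$.

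Finally, to lift this to $f(\bal)\ne 0$ coordinate-wise, apply a Kronecker-type weight shift $x_j\mapsto \bal_{0,j}\cdot y^{w_j}$ with a fresh variable $y$ and a weight $w\in\N^n$---chosen \`a la Section~\ref{sec:low-blk-supp}---that is injective on the small set $E:=\{e\in\N^n : S(e)\subseteq S(\bal_0),\,|e|\le\delta\}$ of exponents supported in $S(\bal_0)$. Because $|E|\le\binom{\ell'-1+\delta}{\delta}\le s$, such a $w$ exists with $\max_j w_j$ polynomial in $s$ and is computable greedily. Since $\bal_0^e=0$ for $e\notin E$, the univariate $f_i(\bal_0\cdot y^w)=\sum_{e\in E}(c_e)_i\,\bal_0^e\,y^{\langle w,e\rangle}\in\F[y]$ has, for each $i$, a non-zero coefficient $(c_{e^{(i)}})_i\,\bal_0^{e^{(i)}}\ne 0$ at $y^{\langle w,e^{(i)}\rangle}$, where $e^{(i)}\in E$ is any witness for $\hat f_i(\bal_0)\ne 0$; no cancellation is possible by the injectivity of $w$ on $E$. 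Hence every $f_i(\bal_0\cdot y^w)\not\equiv 0$ in $\F[y]$, so $\prod_i f_i(\bal_0\cdot y^w)$ is a non-zero univariate of degree polynomial in $\k s$, and a final grid search yields $y^\star\in\F$ with $\bal:=(\bal_{0,j}\cdot(y^\star)^{w_j})_j\in\F^n$ satisfying $f(\bal)^{-1}\in\rmH_\k(\F)$. The main obstacle is step two's simultaneous hitting of $\k$ sparse polynomials $\hat f_i$: the Shpilka--Volkovich reduction to only $2(\ell'-1)$ variables---exploiting the low monomial-weight of the $\hat f_i$'s inherited from concentration---is what keeps the grid search, and hence the total runtime, polynomial in $\k(\delta+n+\ell')^{\ell'}$.
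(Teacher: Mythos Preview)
Your argument has a genuine gap at the junction of steps~2 and~3. You claim that the grid search in step~2 produces a point $\bal_0=\Psi(\vec y^\star,\vec z^\star)$ that \emph{simultaneously} satisfies $|S(\bal_0)|\le\ell'-1$ and $\hat f_i(\bal_0)\ne0$ for every $i\in[\k]$. Such a point need not exist. Take $\k=2$, $\ell'=2$, $f_1=x_1+x_1x_2$, $f_2=x_2+x_1x_2$: the vector $f=(f_1,f_2)$ is $2$-concentrated over $\rmH_2(\F)$ (its coefficient vectors $(1,0),(0,1),(1,1)$ span $\F^2$) and is invertible in $\rmH_2(\F(\term x))$, yet $\hat f_1=x_1$ and $\hat f_2=x_2$, so no vector of support $\le1$ makes both nonzero. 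The Shpilka--Volkovich generator does guarantee $\hat f_i(\Psi)\not\equiv0$ for each $i$, but the common nonzero of the product $\prod_i\hat f_i(\Psi)$ will generically sit at a non-integer $(\vec y,\vec z)$, and then $\bal_0$ has full support. Once $|S(\bal_0)|=n$, your set $E$ in step~3 is all of $\{e:|e|\le\delta\}$, so the Kronecker weight $w$ separating $E$ has exponential magnitude and the time bound collapses. (Incidentally, if $\bal_0$ \emph{did} have support $<\ell'$, step~3 would be superfluous: every monomial $\term x^e$ with $\rms(e)\ge\ell'$ vanishes at $\bal_0$, so already $f_i(\bal_0)=\hat f_i(\bal_0)\ne0$.)

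The paper's proof avoids passing through $\hat f$ altogether. It uses concentration only to know that each $f_i$ has \emph{some} monomial of support $<\ell'$, hence for each $i$ there is a subset $X_i\subseteq[n]$ of size $\le\ell'$ with $f_i|_{X_i}\not\equiv0$. Ranging over all such $X$ and applying sparse PIT to $f_i|_X$ gives a single hitting set $\{\sigma_1,\ldots,\sigma_r\}$, $r=(\delta+n+\ell')^{O(\ell')}$, that hits every $f_i$---but at possibly different $\sigma_{j(i)}$'s. The missing idea in your proof is the \emph{combining step}: one builds a two-parameter substitution $\sigma:x_l\mapsto v\cdot\sum_{j} g_j(u)\,\sigma_j(x_l)$ with $g_j$ Lagrange-like polynomials, so that specializing $u$ to a suitable node isolates any desired $\sigma_j$; this forces every $\sigma\circ f_i$ to be a nonzero bivariate in $(u,v)$. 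A final bivariate hit of the product $\prod_{i\in[\k]}(\sigma\circ f_i)$, of degree $O(\delta\k r)$, then costs only a multiplicative $\poly(\k)$, giving the stated bound $\poly(\k(\delta+n+\ell')^{\ell'})$.
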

\noindent (Proof in Appendix \ref{sec:appendix_ls}.)

\section{Reading off the hitting-set} \label{sec:hitting-set}

\subsection{Proof of Theorem \ref{thm:set-height-H} }

Suppose we are given a blackbox access to a set-height-$H$ nonzero formula $C$ of size $s$, more so we can think of $C =C_0 \in \mathcal{C}_0(k, d, \lambda, \term{x})$. Using Theorem \ref{thm:low-support} we can construct a map $\tau_0:\F[\term{x}]\mapsto \F[\term{t}_0][\term{x}]$ such that $\widehat{D} := \tau_0\circ D_0$ is $\ell_0$-concentrated over $\cR'_1[t_0]$, in time $(d+n+\ell_0)^{O(\ell_0)}$. Clearly, $\widehat{D} \in 
\rmH_k(\F[\bold{t}_0])[\bold{x}]$ and $C' := \tau_0\circ C = \bold{c}^T\cdot\widehat{D}$. 
For $X\subseteq[n]$ of size at most $\ell_0$, define $\sigma_{X}:x_j\mapsto (x_j \text{ if } j\in X, \text{ else }0)$ for all $j\in[n]$. Clearly, $\sigma_X\circ C'$ is only $\ell_0$-variate, thus it has sparsity $(d^H+\ell_0)^{O(\ell_0)}$. By the assumption on $\widehat{D}$ we know that there exists such an $X$ for which $\sigma_{X}\circ C' \ne0$. Thus, using standard sparse PIT methods (see \cite{BHLV09}) we can construct a hitting-set for $C'$, in time $(d^H+n+\ell_0)^{O(\ell_0)} = 2^{O( \ell_0 H \log (s+\ell_0) )} = \exp(O( \ell_0 H^2 \log s )) $, which is time polynomial in $\exp((2H^2\log s)^{H+1})$. \qed

\subsection{Proof of Corollary \ref{cor:semi-diagonal} }

Suppose we are given a blackbox access to a semi-diagonal formula $C = \sum_{i=1}^k m_i\cdot\prod_{j=1}^{b}{f_{i,j}^{e_{i,j}}}$ over field $\F$, where $m_i$ is a monomial, $f_{i,j}$ is a sum of univariate polynomials, and $b$ is a constant. Call its size $s$.

Assume $p:=\ch(\F)$ is zero (or larger than $\max_{i,j}\{e_{i,j}\}$). Using the {\em duality} trick (see \cite[Theorem 2.1]{SSS12}), there exists another representation of $C$ as $C' := \sum_{i=1}^{k'} \prod_{j=1}^{n} g_{i,j}(x_j)$ of size $s^{O(b)}$. Rewrite this, using the obvious Hadamard algebra $\rmH_{k'}(\F)$, as - $C' = c^T\cdot D$, where $D = G_1(x_1) \star\cdots\star G_n(x_n) \in \rmH_{k'}(\F)[\term{x}]$. Trivially, the monomial-weight of each $G_j$ is bounded by $1$. Thus, by invoking Theorem \ref{thm:low-block-support} (\& the `key argument' in Lemma \ref{lem:sparse-poly}) we can shift $D$, in time $s^{O(\log k')}$, such that it becomes $O(\log k')$-concentrated. On top of the shift, the usual sparse PIT gives a hitting-set for $C$ in time $s^{O(\log s)}$. \qed


\subsection{Proof of Corollary \ref{cor:power-set-multilinear} }

Suppose we are given a blackbox access to the formula $C = \sum_{i=1}^k \prod_{j=1}^d f_{i,j}(\term{x}_{X_j})^{e_{i,j}}$, where $f_{i,j}$ is a sum of univariate polynomials in $\F[\term{x}_{X_j}]$, $e_{i,j} \in \N$, and $X_1\sqcup\cdots\sqcup X_d$ partitions $[n]$. Let the formula size be $s$.

Assume $\ch(\F)$ is zero (or larger than $\max_{i,j}\{e_{i,j}\}$). Using the {\em duality} trick (see \cite[Theorem 2.1]{SSS12}), there exists another representation of $f_{i,j}(\term{x}_{X_j})^{e_{i,j}}$ as $F_{i,j} := \sum_{p=1}^{k_{i,j}} \prod_{q\in X_j} g_{i,j,p,q}(x_q)$ of size $s^{O(1)}$. Trivially, the monomial-weight of each $g_{i,j,p,q}$ is bounded by $1$.
Overall, we can represent $C$ now as $C' := \sum_{i=1}^k \prod_{j=1}^d F_{i,j}$, which is a set-depth-$6$ formula. 
Recall the inductive proof of Theorem \ref{thm:low-support} on $C'$. It will have $H=3$ inductive steps. The crucial observation is that in the base case (dealing with sparse polynomials) we can use a better bound $\ell' = 2$ in Lemma \ref{lem:sparse-poly}, as $\mu(g_{i,j,p,q})\le1$. This leads us to an improvement on Theorem \ref{thm:low-support} -  we construct $\tau_0$ such that $\tau_0\circ D_0$ is $O(\log^2 s)$-concentrated over $\cR'_1[t_0]$, in time polynomial in $s^{\log^2 s}$. Again, on top of the shift, the usual sparse PIT gives a hitting-set for $C$ in time $s^{O(\log^2 s)}$. \qed

\section{Conclusion}

We have identified a natural phenomena - low-support rank-concentration -  in constant-depth formulas, that is directly useful in their blackbox PIT (up to quasi-polynomial time). In this work we gave a proof for the interesting special case of set-depth-$\D$ formulas. More work is needed to prove such rank-concentration in full generality.  Next, it would be interesting to prove rank-concentration for depth-$3$ formulas. 
Another direction is to improve this proof technique to give polynomial-time hitting-sets for set-depth-$\D$ formulas.

\section*{Acknowledgments} 

This work was initiated when MA and NS visited Max Planck Institute for Informatics, and would like
to thank the institute for its generous hospitality. The travel of MA was funded by Humboldt Forschungspreis, and that of NS by MPII. CS and NS would like to thank Hausdorff Center for Mathematics (Bonn) for the generous support during the research work. Additionally, CS is supported by the IMPECS fellowship. 
 
\bibliographystyle{amsalpha}
\bibliography{refs}

\appendix
\section{Diagonal circuits: The spirit of the argument} \label{sec:appendix_diag}
A circuit $C = \sum_{i=1}^{k}{f_i^d}$ is a diagonal circuit if $f_i$ is a linear polynomial in $n$ variables, $\term{x}$. \footnote{A lemma by Ellison \cite{E69} states that every $n$-variate polynomial of degree $d$ over $\C$ has a diagonal circuit representation although $k$ can be exponentially large.} We can associate a formula over a Hadamard algebra with $C$, namely
\begin{equation*}
D(\term{x}) := F^d \hspace{0.1in} \hspace{0.1in} \text{over } \rmH_{k}(\F), 
\end{equation*}
where $F = z_0 + z_1 x_1 + \ldots + z_n x_n$, every $z_j \in \F^k$ and $F$ restricted to the $i$-th coordinate of the vectors $z_0, \ldots, z_n$ is the linear polynomial $f_i$. Clearly, $C = (1, \hspace{0.01in} 1, \ldots ,1) \cdot D(\term{x})$, where $\cdot$ is the usual matrix product. Assume that $\text{char}(\F) = 0$ or $> d$.

Consider shifting every $x_j$ by a formal variable $t_j$, i.e. $x_j \mapsto x_j + t_j$. Then,
\begin{equation*}
D(\term{x} + \term{t}) = F(\term{x} + \term{t})^d = D(\term{t}) \star (1 + z'_1x_1 + \ldots + z'_nx_n)^d =:  D(\term{t}) \star D'(\term{x}),
\end{equation*}
where $z'_j = D(\term{t})^{-1} z_j$. We have stated before (in Section \ref{sec:intro}) that variables would be ultimately shifted by field constants. Here is a way to set $t_j$ a field constant: To ensure that $D(\term{t})^{-1}$ makes sense when $t_j$'s are set to constants, we map $t_j \mapsto y^j$ where $y$ is a fresh variable and then set $y$ to an $\alpha\in \F$ such that $\alpha$ is not a root of any of the polynomials $f_i(y, y^2, \ldots, y^n)$, $1 \leq i \leq k$. With this setting, we can safely assume that $D(\term{t})$ and $z'_1, \ldots, z'_n \in \rmH_{k}(\F)$.

Clearly, $C(\term{x} + \term{t}) = (1, \hspace{0.01in} 1, \ldots, 1) \cdot D(\term{x} + \term{t})$ is zero if and only if $C=0$. We would like to show that for $\ell = \lceil \log k \rceil$, $C(\term{x} + \term{t})$ is $\ell$-concentrated over $\F$. The coefficient of a monomial $\term{x}^{e} = \prod_{j \in [n]}{x_j^{e_j}}$ in $D(\term{x} + \term{t})$ is $D(\term{t}) \star \coef(e)(D') = {d \choose e}D(\term{t}) \star \prod_{j \in [n]}{{z'_j}^{e_j}} = {d \choose e}D(\term{t}) \star \term{z}'^{e}$, where ${d \choose e} = {d \choose {e_1, \ldots, e_n}}$. For a moment, treat $\term{z}'^{e}$ as a `monomial' in $z_1, \ldots, z_n$. List down all monomials in $z_1, \ldots, z_n$ with degree bounded by $d$ in degree-lexicographic order. The idea is to form a basis of $\lrsp_{\F}\{ \coef(e)(D') \,|\, e\in\N^n\}$ by picking terms $\term{z}'^{e}$, the coefficient of $\term{x}^e$ in $D'$ (upto scaling by ${d \choose e}$), from the ordered list. We pick a term $z_{j_1}^{e_1} \ldots z_{j_m}^{e_m}$ ($e_j > 0$) from the ordered list if it is not in the span of the already picked terms. The claim is, if $z_{j_1}^{e_1} \ldots z_{j_m}^{e_m}$ ($e_j > 0$) is picked then so are the terms $\prod_{r \in S}{z_{j_r}}$, for every set $S \subseteq [m]$ - this follows easily from the degree-lexicographic ordering of the list. This implies that $m < \lceil \log k \rceil = \ell$, as dimension of $\lrsp_{\F}\{ \coef(e)(D') \,|\, e\in\N^n\}$ is bounded by $k$ and there are $2^m$ such terms $\prod_{r \in S}{z_{j_r}}$. Therefore, $D'(\term{x})$ is $\ell$-concentrated over $\rmH_{k}(\F)$ which implies that $D(\term{x} + \term{t}) = D(\term{t}) \star D'(\term{x})$ is $\ell$-concentrated over $\rmH_{k}(\F)$. Since, $C(\term{x} + \term{t}) = (1, \hspace{0.01in} 1, \ldots, 1) \cdot D(\term{x} + \term{t})$, $C(\term{x} + \term{t})$ is also $\ell$-concentrated over $\F$.

Thus, by shifting $x_j \mapsto x_j + \alpha^j$, where $\alpha \in \F$ is such that none of the $f_i(\alpha, \alpha^2, \ldots, \alpha^n)$ is zero, we are guaranteed that the shifted diagonal circuit satisfies $\lceil \log k \rceil$-concentration. Such an $\alpha$ is always present among a set of $kn + 1$ distinct elements of $\F$. A quasi-polynomial hitting set generator for $C(\term{x})$ ensues immediately (as sketched in Section \ref{sec:intro}).   

\section{Missing proofs of Section \ref{sec:basics}} \label{sec:appendix_basics}
\subsection{Proof of Lemma \ref{lem:rec-struc}}
\begin{proof}
Recall that $f_j(\term{x}_{X_j}) = (f_{1,j}(\term{x}_{X_j}),  \ldots, f_{k,j}(\term{x}_{X_j}))^T$, where every $f_{i,j}(\term{x}_{X_j})$ is a set-height-$(H-h-1)$ formula over $\mathcal{R}_h$. The proof is by induction on height $(H-h-1)$ of $f_j(\term{x}_{X_j})$ (in other words, {\em reverse} induction on $h$). \\

\noindent \textit{Base case ($h+1\ge H-1$):} The base case is when $H-h-1 = 1$ or $0$, i.e.~$f_{i,j}(\term{x}_{X_j})$'s are sparse polynomials or linear polynomials depending on whether $\D$ is even or odd, repectively. In this case, $f_j(\term{x}_{X_j})$ is a set-height-$(H-h-1)$ formula over $\mathcal{R}_{h+1}$. Also, the sparsity parameter $\lambda$ remains the same by its definition. Hence, $f_j(\term{x}_{X_j}) \in \mathcal{C}_{h+1}(k,d, \lambda,  \term{x}_{X_j})$. (Here we do not care about the partition.)\\

\noindent \textit{Inductive step ($h+2$ to $h+1$):} The crucial property to note here is that the formulas $f_{i,j}(\term{x}_{X_j})$'s appear as sub-formulas of $C_h$ at depth-$3$ (Equation \ref{eqn:mainformula}). Therefore, the \emph{corresponding} $\Pi$-layers of $f_{1,j}(\term{x}_{X_j}), \ldots, f_{k,j}(\term{x}_{X_j})$ respect the \emph{same} partitions of $\term{x}_{X_j}$. In particular, we can express every $f_{i, j}(\term{x}_{X_j})$ as,
\begin{equation*}
f_{i, j}(\term{x}_{X_j}) = \sum_{p=1}^{k}{b_{i,j,p} \cdot \prod_{q=1}^{d}{g_{i,j,p,q}(\term{x}_{Y_{j,q}})}},
\end{equation*}
where $b_{i,j,p} \in \mathcal{R}_h$, $g_{i,j,p,q}(\term{x}_{Y_{j,q}})$ is a set-height-$(H-h-2)$ formula over $\mathcal{R}_h$, and the first $\Pi$-layer of all $f_{i,j}(\term{x}_{X_j})$, for $1 \leq i \leq k$, respect the same partition $\mathcal{P}_h(2, X_j)$. In other words, $Y_{j,q}$'s partition $X_j$ as do $X_{2,q}\cap X_j$. (Note: With $j$ fixed, here $X_{2,q} \cap X_j$ are the only relevant variable indices.) Hence, 
\begin{equation} \label{eqn:rec}
f_j(\term{x}_{X_j}) = \sum_{p=1}^{k}{b_{j, p} \cdot \prod_{q=1}^{d}{g_{j,p,q}(\term{x}_{Y_{j,q}})}}, 
\end{equation}
where $b_{j, p} = (b_{1,j,p}, \cdots, b_{k,j,p})^T \in \mathcal{R}_{h+1}$ and $g_{j,p,q}(\term{x}_{Y_{j,q}}) = (g_{1,j,p,q}(\term{x}_{Y_{j,q}}), \ldots, g_{k,j,p,q}(\term{x}_{Y_{j,q}}))^T$ $\in \mathcal{R}_{h+1}[\term{x}_{Y_{j,q}}]$. 

In order to apply induction, we make a comparison between $f_{i,j}(\term{x}_{X_j})$ and $g_{i,j,p,q}(\term{x}_{Y_{j,q}})$ (and between $f_j(\term{x}_{X_j})$ and $g_{j,p,q}(\term{x}_{Y_{j,q}})$). Just like $f_{i,j}(\term{x}_{X_j})$ is a set-height-$(H-h-1)$ formula over $\mathcal{R}_h$ occurring as a sub-formula at depth-$3$ of the formula $C_h$, $g_{i,j,p,q}(\term{x}_{Y_{j,q}})$ is a set-height-$(H-h-2)$ formula over $\mathcal{R}_h$ occurring as a sub-formula at depth-$5$ of the formula $C_h$. Hence, by induction, $g_{j,p,q}(\term{x}_{Y_{j,q}})$ is a set-height-($H-h-2$) formula in $\mathcal{R}_{h+1}[\term{x}_{Y_{j,q}}]$ with $\Sigma$-fanin $k$, $\Pi$-fanin $d$ and sparsity parameter $\lambda$ i.e., $g_{j,p,q}(\term{x}_{Y_{j,q}}) \in \mathcal{C}_{h+2}(k,d, \lambda,  \term{x}_{Y_{j,q}})$, such that every $h'$-th $\Pi$-layer of $g_{j,p,q}(\term{x}_{Y_{j,q}})$ respects the partition $\mathcal{P}_h(h'+2, Y_{j,q})$. Since $g_{j,p,q}(\term{x}_{Y_{j,q}})$ has only variables $\term{x}_{Y_{j,q}}$ and $Y_{j,q} \subseteq X_j$, we can also say that every $h'$-th $\Pi$-layer of $g_{j,p,q}(\term{x}_{Y_{j,q}})$ respects the partition $\mathcal{P}_h(h'+2, X_{j})$. The $h'$-th $\Pi$-layers of the $g_{j,p,q}(\term{x}_{Y_{j,q}})$'s (for $1 \leq q \leq d$) correspond to the $(h'+1)$-th $\Pi$-layer of $f_j(\term{x}_{X_j})$. Hence, by Equation \ref{eqn:rec}, we infer that every $h'$-th $\Pi$-layer of $f_j(\term{x}_{X_j})$ respects the partition $\mathcal{P}_h(h'+1, X_j)$. Note that the $\Sigma$-fanin, $\Pi$-fanin and the sparsity parameter remain $k, d$ and $\lambda$, respectively. This proves the claim.
\end{proof}

\section{Missing proofs of Section \ref{sec:low-blk-supp}} \label{sec:appendix_lbs}
\subsection{Proof of Lemma \ref{lem-struc-eqn}}
\begin{proof}
Consider a column $u\in\cS$ of $Z'$; it is $z'_u$. Now
\begin{eqnarray*}
z'_u 
&=& f(\term{t})^{-1}\star \sum_{v\in S} z_{v} {v\choose u} t^{v-u}\quad \text{ [by Equation \ref{eqn-zi-prime}]}\\
&=& f(\term{t})^{-1}\star \sum_{v\in \cS} z_v \cdot t^v \cdot {v\choose u} \cdot t^{-u} \\
&=& f(\term{t})^{-1}\star Z \cdot (u\text{-th column of } N_\cS T N_\cS^{-1} ).
\end{eqnarray*}
Running over all $u\in\cS$ gives us the result.
\end{proof}

\subsection{Proof of Lemma \ref{lem-Teqn-mod}}
\begin{proof}
Lemma \ref{lem-struc-eqn} gives $Z'_\cS = f(\term{t})^{-1}\star Z N_\cS T_{\cS,\cS} N_\cS^{-1}$. Rewrite it as,
$$f(\term{t})^{-1}\star Z = Z'_\cS N_\cS T' N_\cS^{-1}.$$
Going modulo the subspace $\lrsp_{\F(\term{t})}\{z'_0\}$ kills the $0$-th column of $Z'_\cS$ and yields,
$$f(\term{t})^{-1}\star Z \equiv Z'_{\cS^*} N_{\cS^*} T'_{\cS^*,\cS} N_\cS^{-1} \pmod{z'_0}.$$ 

For the second part we exploit the independence of $T'_{\cS^*,\cS}$ from $Z$ and the Hadamard algebra. Formally, fix a large enough $\widetilde{\k}$,  say $|\cS|$, and the Hadamard algebra $\rmH_{\widetilde{\k}}(\F)$. Let $e\in \cS$. Fix $\widetilde{Z}$ as: Its $e$-th column is $0$ and the rest are linearly independent modulo $1$ (note: $1=\widetilde{z}'_0$). For this `generic' setting we still have the equation,
$\widetilde{f}(\term{t})^{-1}\star \widetilde{Z} \equiv \widetilde{Z}'_{\cS^*} N_{\cS^*} T'_{\cS^*,\cS} N_\cS^{-1} \pmod{\widetilde{z}'_0}$. Implying,
$$\widetilde{f}(\term{t})^{-1}\star \widetilde{Z}_{\cS\setminus\{e\}} \equiv \widetilde{Z}'_{\cS^*} N_{\cS^*} T'_{\cS^*, \cS\setminus\{e\}} N_{\cS\setminus\{e\}}^{-1} \pmod{\widetilde{z}'_0}.$$
Since the LHS is a matrix of rank $|\cS|-1$, we deduce that $T'_{\cS^*, \cS\setminus\{e\}}$ is invertible. In other words, $T'_{\cS^*,\cS}$ is strongly full.
\end{proof}

\subsection{Proof of Lemma \ref{lem-Teqn-depth3}}
\begin{proof}
For $i\in[\ell]$, we can apply Lemma \ref{lem-Teqn-mod} to $f_i$ and get,
\begin{equation}\label{eqn-Teqn-i}
f_i(\term{t})^{-1}\star Z_i \equiv  Z'_i N_{\cS^*_i} T'_i N_{\cS_i}^{-1} \pmod{1}
\end{equation}
where the $1$ is the unity, the all one vector, in $\rmH_{\k}(\F)$. Denote the $u_i$-th column of the matrix on the RHS, of the above congruence, by $C_{i,u_i}$.
 
Consider a column $u\in \cS$ of $Z$; it is $z_u$. Now
\begin{eqnarray*}
D(\term{t})^{-1}\star z_u 
&=& \prod_{i\in[\ell]} f_i(\term{t})^{-1}\star z_{i,u_i} \\
&=& \prod_{i\in[\ell]} \( \alpha_i + C_{i,u_i} \) \quad \text{ [for some $\alpha_i\in\F(\term{t})$ by Equation \ref{eqn-Teqn-i}]}\\
&\equiv & \prod_{i\in[\ell]} C_{i,u_i} \pmod{\cV_{\ell}(D')} \quad\text{ [$\because$ the product of $\ell$ or less $C_{i,u_i}$ vanishes]} 
\end{eqnarray*}
Running over all $u\in \cS$ gives us,
\begin{eqnarray*}
D(\term{t})^{-1}\star Z 
&\equiv & \circledast_{i\in[\ell]} \(Z'_i N_{\cS^*_i} T'_i N_{\cS_i}^{-1}\) \\
&\equiv & \(\circledast_{i\in[\ell]} Z'_i\)\cdot \otimes_{i\in[\ell]} \(N_{\cS^*_i} T'_i N_{\cS_i}^{-1}\) \quad \text{ [by Lemma \ref{lem-matrices}-(4)]}\\
&\equiv & Z' \cdot N_{\cS'} \cdot T' \cdot N_\cS^{-1} \pmod{\cV_{\ell}(D')} \quad \text{ [by Lemma \ref{lem-matrices}-(1)]}
\end{eqnarray*}
\end{proof}

\subsection{Proof of Theorem \ref{thm-sub-prop}}
\begin{proof}
We know that $T' = \otimes_{i\in[\ell]}T'_i$, where each $T'_i\in (\cS^*_i \times \cS_i \rightarrow\F)$ is strongly full (Lemma \ref{lem-Teqn-mod} for $f_i$). Thus, we can apply invertible row operations $E_i\in (\cS^*_i \times\cS^*_i  \rightarrow\F)$ such that $E_iT'_i$ has a $|\cS^*_i|$-sized identity submatrix, and another column that has only nonzero entries.

Since, from now on, we are not going to use the properties of the index sets $\cS^*_i, \cS_i$, we replace them by a more readable identification: Define, for $i\in[\ell]$, $n_i:=|\cS^*_i|>0$ and identify $\cS^*_i$ (resp.~$\cS_i$) with $U_i:=[n_i]$ (resp.~$W_i:=[0..n_i]$). Let $U := \times_{i\in[\ell]} U_i$ and $W := \times_{i\in[\ell]} W_i$. Wlog we keep the following setting: For all $i\in[\ell]$,
\begin{enumerate}
\item 
$(T'_i)_{U_i,U_i} = I_{n_i}$ [by Lemma \ref{lem-matrices}-(1), and taking $E_iT'_i$ to be our new $T'_i$], and
\item 
the column $(T'_i)_{U_i,0}$ is zero free.
\end{enumerate}
Define an {\em indicator} function (note: $\delta(\cdot)$ equals $1$, if the boolean condition is true, else $0$)
$$\ve: \N_{>0}\times\N \rightarrow \{0,1\};\, (u,w) \mapsto \delta\( (w=0) \vee (w\ne0 \wedge w=u) \).$$
Extend it to $\N_{>0}^\ell \times \N^\ell$ by defining $\ve:(u,w) \mapsto \prod_{r\in[\ell]} \ve(u_r,w_r)$.

Note that the $(u,w)$-th entry in $T'_i$ is nonzero iff $\ve(u,w)=1$. Thus, $\ve$ {\em exactly indicates the non-zeroness in} $T'_i$.

Similarly, by tensoring, the $(u,w)$-th entry in $T'\in(U \times W \rightarrow \F )$ is nonzero iff $\ve(u,w)=1$. Thus, $\ve$ {\em exactly indicates the non-zeroness in} $T'$.

We will build $\cC$ incrementally, starting with $\cC=\emptyset$.  During this build up we might apply row permutations $R$ on $T'$.
 
Consider a column $u$, $u\in U\subset W$, of $T'$. This column has exactly one nonzero entry; appearing at the row indexed by $u\in U$. Put all these unmarked columns $u$ in $\cC$, and collect the marked ones in $\cM_1$.

If $\cM_1=\emptyset$ then we already have $|\cC|=|U|$ and we are done (infact, $T'_{U,\cC}$ is identity). So assume $|\cM_1|=:m_1\in[\k]$ and define $m_2:=\k-m_1<\k$. Let the other marked columns be $\cM_2:=\cM\setminus\cM_1$; they lie in $W\setminus U$ and are $m_2$ many.    

Consider the unmarked columns in $W\setminus U$; collect them in $\cL:=W\setminus(U\cup\cM_2)$. We will now focus on the submatrix $T'_{\cM_1, W\setminus U}=:T'_1$. Note that its column-indices are $\ell$-tuples with at least one zero.

\begin{claim}\label{clm-T-pr}
There exists a row-permutation $R_1\in\F^{m_1\times m_1}$, and $m_1$ unmarked columns $\cC_1\subseteq\cL$ such that: $(R_1T'_1)_{\cM_1,\cC_1}$ is a lower-triangular $m_1\times m_1$ matrix with $w$-th ($w\in\cC_1$) diagonal entry being nonzero.
\end{claim}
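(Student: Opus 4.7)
The plan is to prove the stronger statement: one can choose $m_1$ columns $\cC_1 \subseteq \cL$ such that $(T'_1)_{\cM_1, \cC_1}$ is \emph{diagonal} with nonzero diagonal entries, which trivially implies the lower-triangular claim. The key is the tensor structure of $T' = \otimes_{i \in [\ell]} T'_i$: its nonzero pattern is $\ve(u,w) = \prod_r \ve(u_r, w_r)$, so $T'_{u, w}$ is nonzero iff $w_r \in \{0, u_r\}$ for every $r$; I abbreviate $\mathrm{Sh}(u) := \{w \in W : w_r \in \{0, u_r\}\ \forall r\}$.

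The first step is to reduce the task to finding, for each $u \in \cM_1$, a small \emph{separator} $R(u) \subsetneq [\ell]$ such that for every $u' \in \cM_1 \setminus \{u\}$ some $r \in R(u)$ satisfies $u_r \neq u'_r$. Such $R(u)$ with $|R(u)| \approx \lceil \log_2 m_1 \rceil \le \lceil \log_2 k \rceil$ would be produced by building a decision tree on $\cM_1$ where each internal node queries ``$u_r \in A$?'' for some coordinate $r$ and value-subset $A \subseteq [n_r]$, splitting the current bucket of rows; the root-to-leaf path for $u$ determines $R(u)$. The depth bound comes from the \emph{binary-search paradigm}: at each node one greedily picks a coordinate and value-subset yielding a near-balanced partition. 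A case analysis on coordinate-value multiplicities (exploiting distinctness of rows) shows such a split always exists --- switching to another coordinate or using multi-valued subset queries when one coordinate has a skewed distribution.

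Given the separators, for each $u \in \cM_1$ I would set $w(u)_r := u_r$ for $r \in R(u)$ and choose $w(u)_r \in \{0, u_r\}$ freely for $r \notin R(u)$. The separator property forces $w(u) \notin \mathrm{Sh}(u')$ for every $u' \neq u$: at some $r \in R(u)$, $w(u)_r = u_r \neq 0$ and $u_r \neq u'_r$, so $w(u)_r \notin \{0, u'_r\}$. Hence the off-diagonal entries of $(T'_1)_{\cM_1, \{w(u):u \in \cM_1\}}$ all vanish. The $\approx \lceil \log_2 k \rceil + 1$ free coordinates give $\gtrsim 2k$ candidate columns, enough to simultaneously enforce (i) $w(u) \in W \setminus U$ (force a zero in some free coordinate), (ii) $w(u) \notin \cM_2$ (at most $k$ exclusions), and (iii) the $w(u)$'s pairwise distinct ($m_1 - 1 < k$ further exclusions) by simple greedy assignment, using the assumption $\ell \geq 2\lceil \log_2 k\rceil + 1$.

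The main obstacle is making the balanced-split lemma rigorous: if every coordinate has a value of super-majority multiplicity in the current bucket $S$, a single-valued query cannot halve. I would handle this via a pigeonhole argument exploiting the distinctness of rows --- not all coordinates can simultaneously have all of $S$ agreeing with a super-majority vector (else multiple rows would coincide) --- so either some coordinate admits a multi-valued subset query achieving a near-balanced partition, or the small super-majority subset can be peeled off as a ``concentrated'' sub-bucket and handled recursively. This fusion of greedy selection with a binary-search style recursion is the combinatorial core of the ``juggernaut'' that drives the claim.
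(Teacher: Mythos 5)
Your plan rests on a strengthening that is actually false: it is not always possible to pick columns making $(T'_1)_{\cM_1,\cC_1}$ \emph{diagonal}, and correspondingly the separator sets $R(u)$ of size $O(\log \k)$ need not exist. Concretely, suppose each $n_i\ge 2$ and take $\cM_1=\{u,u'_1,\ldots,u'_\ell\}$ where $u=(1,1,\ldots,1)$ and $u'_j$ agrees with $u$ everywhere except $u'_j(j)=2$; this is a legal choice of marked columns since $\ell+1\le\k$ for $\k\ge 8$. A column $w$ with $\ve(u,w)=1$ must have $w_r\in\{0,1\}$ for all $r$, and then $\ve(u'_j,w)=0$ forces $w_j=1$; demanding this for every $j$ forces $w=u\in U$, which is excluded from $\cL$. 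So no column of $T'_1$ is supported, within $\cM_1$, on the row $u$ alone, and any separator for $u$ against all of $\cM_1\setminus\{u\}$ must be all of $[\ell]$. The same example kills your balanced-split lemma: every subset query ``$u_r\in A$?'' splits this $\cM_1$ as $1$ versus $\ell$, so the decision tree isolating $u$ has depth $\ell$, and the proposed fix (peeling off the super-majority bucket) does not help the element that sits in the majority of \emph{every} coordinate.

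The paper avoids exactly this trap by settling for lower-triangularity and choosing the row order first: rows are sorted so that prefix frequencies are non-increasing, which places the ``majority-like'' rows (such as $u$ above) early. Then row $u_i$ only needs to be separated from its $i-1$ \emph{predecessors}, and the frequency ordering guarantees, via binary search down the list, a separating coordinate set $I$ with $|I|\le\ce{\lg i}\le\ce{\lg \k}$; the remaining $\ell-|I|$ free coordinates yield $2^{\ell-\ce{\lg\k}}-\k>0$ admissible columns after discarding $\cM_2$, the already-picked columns, and the all-agreeing column in $U$. To repair your argument you would need to import both ingredients — the relaxation from diagonal to triangular and the frequency-based row order that makes the one-sided separators logarithmically small — at which point you have essentially reconstructed the paper's proof.
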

\claimproof{clm-T-pr}{
We will again build $\cC_1$ incrementally, starting from $\emptyset$.

Recall that each row of $T'_1$ is indexed by an $\ell$-tuple $u$ in $U$. For $i\in[\ell]$ we denote the $i$-th coordinate in $u$ by $u(i)$, and for an $I\subseteq[\ell]$, $u(I)$ denotes the ordered set $\{u(i)|i\in I\}$. For $w\in W$, define the {\em support} $\rmS(w) := \{i\in[\ell] \,|\, w(i)\ne0\}$.
We want to permute the rows so that the coordinates of the row-indices appear in a {\em decreasing order of frequency}. Formally, pick $R_1\in\F^{m_1\times m_1}$ to reorder the rows of $T'_1$ as $\cM_1=(u_1,\ldots,u_{m_1})$ such that:
\begin{itemize}
\item 
The ordered list $u_1(1),\ldots,u_{m_1}(1)$ has repetitions only in contiguous locations and the frequencies are non-increasing. In equation terms: The list has some $r$ distinct elements $\alpha_1,\ldots,\alpha_r\in U_1$ with respective frequencies $i_1\ge\cdots\ge i_r$ (summing to $m_1$), and they appear as $\alpha_1 (i_1\text{ times}),\ldots,\alpha_r (i_r\text{ times})$. 
\item 
The ordered list $(u_1(1),u_1(2)),\ldots,(u_{m_1}(1),u_{m_1}(2))$ has repetitions only in contiguous locations and the frequencies are non-increasing.
\item
The same as above holds for $3$-tuples, $4$-tuples,$\ldots$,$\ell$-tuples. 
\end{itemize}
We now describe an iterative process to build $\cC_1$ one element at a time. In the $i$-th iteration, $i\in[m_1]$, we will add an unmarked, {\em unpicked} column $w_i\in\cL$ to $\cC_1$. The process maintains the invariant: $(R_1T'_1)_{\cM_1,\cC_1}$ is a {\em lower-triangular} matrix. 

\smallskip\noindent
{\em Iteration $i=1$} - The row $u_1$ of $T'_1$ has exactly $2^{\ell}-1$ nonzero columns. (Why?~Zero-out at least one coordinate of $u_1$.) Since $2^{\ell}-1 \ge \k > |\cM_2|$ we can pick a 
column $w_1\in\cL$ such that $\ve(u_1,w_1) \ne 0$, thus $(T'_1)_{u_1,w_1} \ne 0$. Add $w_1$ to $\cC_1$.

\smallskip\noindent
{\em Iteration $i\ge2$} - Consider the list $u_1,\ldots,u_i$. We claim that there are positions $I\subset[\ell]$, $|I|\le\ce{\lg i}$, such that $u_i(I)$ is not contained in any of the previous sets in the list. The proof is by {\em binary-search} in the list. Start with $I=\emptyset$. Pick the least $j_1\in[\ell]$ such that $u_1(j_1),\ldots,u_{i}(j_1)$ are not all the same; add $j_1$ to $I$. By the ordering on $u$'s the frequency $\mu_1$ of $u_{i}(j_1)$ is at most $i/2$. If it is one then we stop with this $I$, otherwise we zoom-in on the `halved' list $u_{i-\mu_1+1},\ldots,u_{i}$. Again we pick the least $j_2\in[j_1+1,\ell]$ such that $u_{i-\mu_1+1}(j_2),\ldots,u_{i}(j_2)$ are not all the same; add $j_2$ to $I$. This leads to a further halving of the list, and so on. Finally, we do have our positions $I$, $|I|\le\ce{\lg i}$, such that $u_i(I)$ appears for the first time in $u_i$. 

We deduce that each column $w$ of $T'_1$, with $I\subseteq \rmS(w) \subsetneq [\ell]$ and $w(\rmS(w)) = u_i(\rmS(w))$, has the first nonzero entry at the $u_i$-th row. (Why?~Consider $\ve( u_j, w ) = \ve( u_j(\rmS(w)), w(\rmS(w)) ) = \ve( u_j(\rmS(w)), u_i(\rmS(w)) )$.) The number of such columns $w$, that are unmarked and unpicked, is at least $(2^{\ell-|I|}-1)-m_2-(i-1) \ge$ $2^{\ell-|I|}-\k \ge $  $2^{\ell-\ce{\lg i}}-\k \ge$ $2^{\ell-\ce{\lg \k}}-\k =$ $2^{\ce{\lg \k}+1}-\k > 0$. So we can pick such a column, say, $w_i \in \cL\setminus\cC_1$ and add to $\cC_1$. 

Note that the square submatrix of $T'_1$ thus far, $(R_1T'_1)_{\{u_1,\ldots,u_i\},\cC_1}$ is lower-triangular with a nonzero diagonal.

\smallskip\noindent
{\em After the iteration $i=m_1$} - The square matrix $(R_1T'_1)_{\cM_1,\cC_1}$ is lower-triangular with a nonzero diagonal. 

This finishes the claim.
}

Since $R_1$ permutes the rows of $T'_1$, its action can be lifted to the rows of $T'$; call this action $R$. Also, append $\cC_1$ to the current $\cC$ (making its size $|U|$). Define $\ol{\cM}_1 := U\setminus\cM_1$ and $\ol{\cC}_1 := \cC\setminus\cC_1$. Consider the square matrix $(RT')_{U,\cC}$. It looks like,
$$
\left[\begin{array}{c|c}
(RT')_{\ol{\cM}_1,\ol{\cC}_1} & (RT')_{\ol{\cM}_1,\cC_1} \\ \hline
(RT')_{\cM_1,\ol{\cC}_1} & (RT')_{\cM_1,\cC_1}
\end{array}\right] 
= 
\left[\begin{array}{c|c}
I_{\ol{\cM}_1,\ol{\cC}_1} & (RT')_{\ol{\cM}_1,\cC_1} \\ \hline
0_{\cM_1,\ol{\cC}_1}         & (R_1T'_1)_{\cM_1,\cC_1}
\end{array}\right].
$$
Clearly, its determinant equals $|(R_1T'_1)_{\cM_1,\cC_1}| \ne0$. Thus, $|T'_{U,\cC}| \ne0$ and we are done.
\end{proof}

\subsection{Proof of Lemma \ref{lem-tna-det}}
\begin{proof}
Let $a$ be the $v$-th column of $A$. Let $a'\in\F^{|\cM|}$ be the vector having the entries of $a$ appearing at the rows $\cM$. Consider $(T' N_\cS^{-1})\cdot a$. By the property of $a$ we can write, 
\begin{eqnarray*}
(T' N_\cS^{-1})a  
&=&  (T' N_\cS^{-1})_{\cS', v} + (T' N_\cS^{-1})_{\cS',\cM}\cdot a' \\
&=&  T'_{\cS', v}\cdot t^{-v} + (T' N_\cS^{-1})_{\cS',\cM}\cdot a'.
\end{eqnarray*}
Thus, the $v$-th column of $A$ has the leading monomial $t^{-v}$ which `contributes' the vector $T'_{\cS', v}$. Going over the columns $a$, running $v\in\cC$, by the column-linearity of determinant and the multiplicativity of the inverse-monomial ordering, we deduce that the largest possible (inverse-monomial) term in the expression $|T' N_\cS^{-1} A|$ is:
$$|T'_{\cS', \cC}|\cdot t^{- \sum_{v\in\cC} v}.$$ 
We know this is nonzero, by the property of $\cC$, thus it is {\em indeed} the leading term. In particular, $|T' N_\cS^{-1} A|\ne0$.
\end{proof}

\section{Missing proofs of Section \ref{sec:low-supp}} \label{sec:appendix_ls}
\subsection{Proof of Lemma \ref{lem:sparse-poly}}
\begin{proof}
If $2\ce{\log_2 (\k\cdot\rms(f))} \ge \mu(f)$ then $\ell' = 1 + \mu(f)$. In this case trivially, for any shift $\sigma$, $\sigma(f)$ is $\ell'$-concentrated over $\rmH_{\k}(\F(t))$. So, from now on we assume $2\ce{\log_2 (\k\cdot\rms(f))}$ $< \mu(f)$, thus $\ell' = 1 + 2\ce{\log_2 (\k\cdot\rms(f))}$.

Define $\cR := \rmH_{\rms(f)}(\rmH_{\k}(\F))$. Let $f =: \sum_{e\in\rmS(f)} z_e x^e$. Define a column vector $D \in (\rmS(f)\times [1]\rightarrow \rmH_{\k}(\F[\term{x}]))$ with $e$-th entry being $z_e x^e$; $D$ can be seen as a polynomial over $\cR$. Rewrite $D$ as a product of univariate polynomials over $\cR$ as:
$$D(\term{x}) \,=\, g_1(x_1) \star\cdots\star g_n(x_n).$$
Clearly, each $g_i$ has degree, hence sparsity, bounded by $\delta$, and can be seen as an element in $\rmH_{\k\cdot\rms(f)}(\F)[x_i]$. 

For any $X\subseteq[n]$ of size $\ell'$, define $D_X(\term{x}) \,:=\, \prod_{i\in X} g_i(x_i)$.
Recalling Theorem \ref{thm:low-block-support} we can construct a shift $\sigma$ for $D_X$, such that $\sigma\circ D_X$ is $\ell'$-concentrated, in time polynomial in $(\delta+n+\ell')^{\ell'}$. Using induction on the number of variables, it is easy to see that if $\sigma\circ D_X$ is $\ell'$-concentrated ($\forall X\in {[n]\choose \ell'}$) then so is $\sigma\circ D$. The {\em key argument} is: Since the constant coefficient in each $g'_i$ (i.e.~shift-\&-normalized $g_i$) is one, deduce that the coefficient of any term in $D'$ (i.e.~shift-\&-normalized $D$) of block-weight $\le\ell'$ is produced by the product of some $\le\ell'$ $g'_i$'s, so this case is covered by some $X\in {[n]\choose \ell'}$. Also, deduce that the coefficient of any term in $D'$ of block-weight $>\ell'$ can be inductively written down as a linear combination of $\{ \coef(e)(D') \,|\, e\in\N^n, \rms(e)<\ell' \}$. Finally, $\sigma\circ D$ inherits this concentration property from $D'$.

Recall $f \,=\, 1^T\cdot D$, where $1$ is the unity in $\cR = \rmH_{\rms(f)}(\rmH_{\k}(\F))$. Thus, from the $\ell'$-concentration of $\sigma\circ D$ (over $\cR$), we can deduce the $\ell'$-concentration of $\sigma\circ f$ (over $\rmH_{\k}(\F)$). This completes the construction of $\sigma$.
\end{proof}

\subsection{Proof of Lemma \ref{lem:hitting-set-non-sparse-poly}}
\begin{proof}
View $f$ as a vector with $\k$ coordinates; each entry is in $\F[\term{x}]\setminus\{0\}$. Call the $i$-th entry $f_i$. Clearly, $f_i$ has variables (resp.~degree) at most $n$ (resp.~$\delta$). Also, by the concentration property there exists $e_i\in\N^n$, with $\rms(e_i)\le\ell'$, such that $\coef(e_i)(f_i) \ne0$. 

For $X\subseteq[n]$ of size at most $\ell'$, define $\sigma_{X}:x_j\mapsto (x_j \text{ if } j\in X, \text{ else }0)$
for all $j\in[n]$. Clearly, $\sigma_X\circ f_i$ is only $\ell'$ variate, thus it has sparsity $(\delta+\ell')^{O(\ell')}$. By the assumption on $f_i$ we know that $X_i := \rmS(e_i)$ is of size at most $\ell'$, and $\sigma_{X_i}\circ f_i \ne0$. Using standard sparse PIT methods (see \cite{BHLV09}), we can construct a hitting-set for $\sigma_{X_i}\circ f_i$ in time $(\delta+\ell')^{O(\ell')}$. Varying over all subsets $X\subseteq[n]$ of size at most $\ell'$, we get a hitting-set for $f_i$ in time $(\delta+n+\ell')^{O(\ell')}$.
For convenience, denote this hitting-set as a set of evaluation-maps $\{\sigma_{i,1},\ldots,\sigma_{i,r}\}$; each map is from $\term{x}$ to $\F$ and we write $\sigma_{i,j}\circ f_i$ to mean $f_i(\sigma_{i,j}(\term{x}))$. Overall we are ensured the existence of a $j$, for a given $i$, such that $\sigma_{i,j}\circ f_i \ne0$. We will now show how to combine all these into a single map.

Pick distinct $\k r$ elements $\beta_{1,1},\ldots, \beta_{\k,r} \in\F$. Consider the univariate polynomial $g(u) := \prod_{i\in[\k], j\in[r]} (u-\beta_{i,j})$. Define $g_{i,j}(u) := g(u)/(u-\beta_{i,j})$, for all $i,j$. Consider an evaluation map from $\F[\term{x}]$ to $\F[u,v]$ -  $\sigma := v\cdot\sum_{i\in[\k], j\in[r]} g_{i,j}(u)\cdot \sigma_{i,j}$. We claim that, for all $i\in[\k]$, $\sigma\circ f_i \ne0$. To see this, note that there is some $j\in[r]$ for which $\sigma_{i,j}\circ f_i \ne0$. Further, let $f'_i$ be a homogeneous part of $f_i$, say of degree $\delta_i$, such that $\sigma_{i,j}\circ f'_i \ne0$. Consider the partial evaluation $(\sigma\circ f_i)(\beta_{i,j},v) = f_i(v\cdot g_{i,j}(\beta_{i,j})\cdot\sigma_{i,j}(\term{x}))$. Here the coefficient of the monomial $v^{\delta_i}$ is $g_{i,j}(\beta_{i,j})^{\delta_i}\cdot (\sigma_{i,j}\circ f'_i) \ne0$. Consequently, $\sigma\circ f_i \ne0$.

Thus, for all $i\in[\k]$, $\sigma\circ f_i$ is a nonzero bivariate polynomial in $\F[u,v]$. Since its degree remains bounded by $\delta\cdot \k r$, we can again apply \cite{BHLV09} to replace $u,v$ by a hitting-set. Finally, we hit an $\bal\in\F^n$, in time polynomial in $\k (\delta+n+\ell')^{\ell'}$, such that for all $i\in[\k]$, $f_i(\bal) \ne0$. This finishes the proof.
\end{proof}

\end{document}